\newcommand\Label[1]{&\refstepcounter{equation}(\theequation)\ltx@label{#1}&}
\newcommand{\stkout}[1]{\ifmmode\text{\sout{\ensuremath{#1}}}\else\sout{#1}\fi}
\newcommand{\fk}{\mathfrak}
\newcommand{\sr}[1]{\text{sr}(#1)}
\let\originalleft\left
\let\originalright\right
\renewcommand{\left}{\mathopen{}\mathclose\bgroup\originalleft}
\renewcommand{\right}{\aftergroup\egroup\originalright}
\newcommand{\UU}{\mathcal U}
\newcommand{\HH}{\mathcal H}
\newcommand{\ZZ}{\mathbb Z}
\renewcommand{\AA}{\mathcal A}
\newcommand{\NN}{\mathbb{N}}
\newcommand{\id}{\mathbbm{1}}
\newcommand{\caH}{\mathcal{H}}
\newcommand{\caA}{\mathcal{A}}
\newcommand{\caB}{\mathcal{B}}
\newcommand{\bbZ}{\mathbb{Z}}
\newcommand{\bbN}{\mathbb{N}}
\renewcommand{\AA}{\mathcal A}
\newcommand{\dist}{\text{dist}}
\newcommand{\dm}[1]{\mathscr P({#1})}
\newcommand{\kb}[1]{\ket{#1}\bra{#1}}
\newcommand{\Ad}[1]{\textrm{Ad}\left(#1\right)}
\newtheoremstyle{exampstyle}
{15pt} 
{15pt} 
{\itshape} 
{} 
{\bfseries} 
{.} 
{.5em} 
{} 
\theoremstyle{exampstyle}
\newtheorem{theorem}{Theorem}[section]
\newtheorem{lemma}[theorem]{Lemma}
\newtheorem{proposition}[theorem]{Proposition}
\def\BigglVert{\mkern-.23mu\Biggl|\mkern-2.3mu\Biggl|\mkern-.25mu}
\def\BiggrVert{\mkern-.23mu\Biggr|\mkern-2.3mu\Biggr|\mkern-.25mu}
\title{Pure gapped ground states of spin chains are short-range entangled}
\author[2]{Wojciech De Roeck \thanks{email: \href{wojciech.deroeck@kuleuven.be} {wojciech.deroeck@kuleuven.be}}}
\author[1]{Martin Fraas \thanks{email: \href{mfraas@ucdavis.edu}{mfraas@ucdavis.edu}}}
\author[2]{Bruno de O. Carvalho \thanks{email: \href{bruno.oliveira@kuleuven.be}{bruno.oliveira@kuleuven.be}}}
\affil[1]{Department of Mathematics, University of California, Davis, Davis, CA, 95616, USA}
\affil[2]{Instituut voor Theoretische Fysica, KU Leuven, 3001 Leuven, Belgium}
\date{\today}
\begin{document}

\maketitle

\begin{abstract}
We consider spin chains with a finite range Hamiltonian. For reasons of simplicity, the chain is taken to be infinitely long. A ground state is said to be a unique gapped ground state if its GNS Hamiltonian has a unique ground state, separated by a gap from the rest of the spectrum.
By combining some powerful techniques developed in the last years, we prove that each unique gapped ground state is \emph{short-range entangled}: It can be mapped into a product state by a finite time evolution map generated by a Hamiltonian with exponentially quasi-local interaction terms.  This claim makes precise the common belief that one-dimensional gapped systems are topologically trivial in the bulk.
\end{abstract}



\section{Introduction}

In this work we show that a ground state $\psi$ of a gapped one-dimensional spin chain is short-range entangled (SRE). More precisely, we construct a quasi-local automorphism $\alpha$, generated by an interaction with exponentially decaying tails, such that
\begin{equation}
\psi = \phi \circ \alpha,
\end{equation}
for a product state $\phi$. This establishes, in rigorous form, the widely held expectation that in one dimension there are no intrinsically topological phases of matter in the absence of symmetry protection.~Despite the ubiquity of this statement in the physics literature \cite{chen_gu_wen_2011,schuch2011MatrixProduct}, a complete proof has until now remained unavailable. We begin by outlining a brief history of the problem and by identifying the principal obstruction that prevented earlier arguments from giving a full proof.

The question has also been of central interest in the quantum information community, largely through attempts to justify the empirical success of the density matrix renormalization group (DMRG) algorithm. From the modern perspective, this is equivalent to proving that the ground state admits a matrix product state (MPS) representation providing accurate local approximations. The first such construction was given in \cite{Verstraete-Cirac} and subsequently refined in e.g. \cite{Schuch-Verstraete} and \cite{Dalzell-Brandao}. These works produce an MPS with bond dimension $D$ whose restriction to intervals of length $l$ is $\epsilon$-close to the corresponding restriction of the ground state. The bounds differ in how the bond dimension scales with $\epsilon$ and $l$; the currently sharpest result is contained in \cite[Theorem~2]{Dalzell-Brandao}.

The statements above are, however, local in nature and do not imply a global approximation of the ground state by an MPS. The obstruction to a global result was already recognized in these works. Paraphrasing the discussion around \cite[Eq.~(1)]{Schuch-Verstraete}, one must understand whether the restrictions of the ground state to widely separated regions $X$ and $Y$, denoted $\psi|_X$ and $\psi|_Y$, approximately factorize:
\begin{equation}\label{eq: exp decay mutual correlations}
\psi|_{X\cup Y} \simeq \psi|_X \otimes \psi|_Y,
\end{equation}
up to terms exponentially small in the distance $\mathrm{dist}(X,Y)$ separating $X$ and $Y$.  It should be noted that this is a strictly stronger property than $\psi(O_XO_Y)\simeq\psi(O_X)\psi(O_Y)$ (again up to exponentially small terms) for any $O_X,O_Y$ supported in $X$, respectively $Y$. The latter propery simply follows from having a gap, see \cite{HastingsKoma2006,NachtergaeleSims2006}.
We refer to the property \eqref{eq: exp decay mutual correlations} as decay of mutual correlations, which we establish in Theorem~\ref{corollary.factorizing-infinite-regions}. Our approach, however, is conceptually distinct from those in the works cited above this equation: our methods are rooted in the operator-algebraic classification of gapped phases of matter.

Within that framework, the principal problem is to describe equivalence classes of gapped ground states under quasi-local automorphisms. In one dimension, the relevant classification result asserts that symmetry-protected topological (SPT) phases are classified by the group cohomology $H^2(G,U(1))$, where $G$ is the symmetry group \cite{Chen_2013,quella,guwen2009,schuch2011MatrixProduct, pollmann_entanglement_2010}. In particular, in the absence of symmetry there exists only a single phase. 
The corresponding index in infinite volume was constructed in \cite{Ogata2020_ClassificationSPT_QSC} and, consequently, in \cite{kapustin-classification2021} it was shown that this classification is complete (in \emph{stable homotopy}, i.e.\ allowing for tensoring with product states) when $G$ is a finite group. Extensions beyond finite $G$ are discussed in \cite{quella,de_o_carvalho_classification_2025,Kubota2025}.  However, none of these works addresses directly the classification of gapped ground states, but rather the classification of SRE states. The aim of the present paper is precisely to bridge this gap: since we show that all gapped ground states of finite range interactions are SRE, the above results imply a full classification of these gapped ground states.


The present work builds on the techniques developed in \cite{kapustin-classification2021} and the related analysis in \cite{sopenko_chiral_2023}. Our proof proceeds in two main steps. First, using the results of \cite{hastings.arealaw} and \cite{ukai-hastings-infinite-volume}, we establish the decay of mutual correlations. Combining this with a lemma from \cite{sopenko_chiral_2023}, we obtain a local version of the split property: we construct a unitary $U$ such that the state $\psi\circ\mathrm{Ad}(U)$ factorizes as a product between the left and right half-chains. Since this modification does not affect the state far from the cut, repeating the construction at sufficiently separated cuts yields a global product state. Finally, to show that the sequence of local transformations converges to a genuine automorphism, we use a technique from \cite{kapustin-classification2021}.

\section{Setup and Main Result}

 To each site $x \in \ZZ$, we attach a finite-dimensional complex Hilbert space $\HH_x$, and denote by $\AA_x = \mathcal B(\HH_x)$, the algebra of \textit{observables} supported at site $x$.
 
We assume that $\sup_x \dim(\HH_x) < \infty$. 
For finite $X\subset \ZZ$, we set 
$$\AA_X := \bigotimes\limits_{x \in X} \AA_x$$
The local algebra $\AA^{\text{loc}}$ is obtained as an inductive limit of these algebras $\caA_X$, under a suitable identification
$$A_X \sim A_X \otimes \mathds 1_{Y} \in \AA_{X \cup Y}$$
for finite sets $X,Y$. The quasi-local $C^*$-algebra $\AA$ is the norm closure of $\AA^{\text{loc}}$ with respect to its associated operator norm topology.  It comes equipped with subalgebras $\caA_Y$ for any (finite or infinite) $Y\subset \bbZ$. For infinite $Y$, $\caA_Y$ is the smallest $C^*$-algebra that contains $\caA_X$ for any finite $X\subset Y$. 
We refer to standard references for explicit details of this standard construction \cite{bratteliII,simon2014statistical,naaijkens2017quantum}.

\subsection{Interactions}

A (discrete) interval $X\subset \ZZ$ is a set of the form $\{ x\in \ZZ, a\leq x \leq b\}$ with $a \in \ZZ \cup \{-\infty\} $ and $b \in \ZZ \cup \{\infty\} $. 
An interaction is a collection $\Phi=(\Phi_I)_{I}$, where $I$ are finite intervals and $\Phi_I \in \caA_I$. 
We say an interaction is finite-range whenever, for some finite $r$,  
\begin{equation}\label{eq: finite range}
    |I| >r \Rightarrow \Phi_I=0.
\end{equation}
The \emph{range} $r_{\Phi}$  of $\Phi$ is the smallest $r\geq 0$ for which this holds. 
We say an interaction is exponentially quasi-local whenever there are constants $C <\infty,c>0$ such that
\begin{equation} \label{eq: exponential decay interaction}
||\Phi_I||\leq Ce^{-c |I|}.
\end{equation}

We will also need families of interactions, parametrized by time $t\in [0,1]$.  We say such a family $(\Phi(t))$ is an exponentially quasi-local time-dependent interaction whenever \eqref{eq: exponential decay interaction} holds for $\Phi(t)$, uniformly in $t\in [0,1]$, and, for each finite interval $I$, $t\mapsto \Phi_I(t)$ is continuous.

\subsection{Derivations and LGAs} \label{sec.lga}
For an interaction $\Phi$ (of finite range or exponentially quasi-local), the following expression is well-defined for any $A \in \AA^{\text{loc}}$:
$$\delta_{\Phi}(A)= \sum\limits_{I } i[\Phi_I,A]$$
where the sum is over finite intervals. 
In general, $\delta_{\Phi}$ is an unbounded map $\caA^{\text{loc}}\to \caA$ that is called a \emph{derivation}. 
If $(\Phi(t))_{t\in [0,1]}$ is an exponentially quasi-local time-dependent interaction, then the family of corresponding derivations generates a time-dependent family of automorphisms $(\alpha_{t})_{t\in [0,1]}$ solving the Heisenberg equation
$$
\frac{d}{dt}\alpha_t(A)=  \alpha_t(\delta_{\Phi(t)}(A)),\qquad  \alpha_0(A)=A,
$$
for $A$ in a dense set of $\caA$.
The well-definedness and locality properties of these $\alpha_t$ are highly non-trivial, see 
\cite{liebrobinson,nachtergaele.sims.ogata.2006}, but very well-understood by now.
We refer to $\alpha_1$ as a 
\emph{locally generated automorphism} (LGA) and we write it simply as $\alpha$.

\subsection{States}

States are normalized positive linear functionals on a $C^*$-algebra $\AA$. 
The set of states on $\AA$ (denoted by $\mathcal S(\AA)$) is convex and its extremal points are called pure states. For $Y\subseteq \ZZ$, we denote by $\psi|_{Y}$ the restriction of a linear functional $\psi$ to a subalgebra $\mathcal \AA_Y \subset \AA$, characterized by $\psi(A) = \psi|_{Y} (A)$ for every $A \in \mathcal \AA_Y$.

A product state $\phi \in \mathcal S(\caA)$ is a state characterized by 
$$
\phi(AB)=\phi(A) \phi(B)
$$
whenever $A \in \caA_X,B \in \caA_Y$ with $X\cap Y =\emptyset$.
A \textit{short-range entangled} (SRE) state $\psi$ is a state such that, for some LGA $\alpha$ and some pure product state $\phi$, 
$$
\psi=\phi\circ\alpha.
$$
The definition of SRE states differs slightly between different sources, mainly because of changes in the locality requirements of the time-dependent interaction generating the LGA. \\

A \textit{gapped ground state} associated to an interaction $\Phi$ is a state $\psi \in \mathcal S(\AA)$ such that 
\begin{equation}
    -i\psi(A^*\delta_{\Phi}(A)) \ge     \gamma \psi(A^*A), \qquad \forall A \in \caA^{loc} \,\, \text{such that} \,\, \psi(A)=0
\end{equation}
for some parameter $\gamma>0$ that is called (a lower bound for) the \emph{gap}. 


\subsection{Main result}
Referring to the terminology introduced above, we can now formulate the
\begin{theorem} \label{thm.main}
    Let $\Phi$ be a finite-range interaction. Any pure gapped ground state associated to $\Phi$ is SRE.
\end{theorem}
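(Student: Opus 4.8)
The plan is to follow the two-step strategy sketched in the introduction: first upgrade the gap hypothesis to a statement about factorization of the state across far-separated regions (decay of mutual correlations), then use this to perform a sequence of local ``disentangling'' surgeries and finally show that these compose into a genuine LGA.

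\textbf{Step 1: Decay of mutual correlations.} Starting from the pure gapped ground state $\psi$ of the finite-range interaction $\Phi$, I would invoke the one-dimensional area law of \cite{hastings.arealaw} (in the infinite-volume form of \cite{ukai-hastings-infinite-volume}): the von Neumann entropy of $\psi|_X$ for an interval $X$ is bounded uniformly in $|X|$. The key consequence to extract is the approximate factorization
\begin{equation}
\psi|_{X\cup Y}\simeq \psi|_X\otimes\psi|_Y
\end{equation}
for intervals $X,Y$ with $\mathrm{dist}(X,Y)$ large, with error exponentially small in the separation. I expect the mechanism is: the area law forces the reduced state on any interval to have a good MPS (finite bond dimension) approximation, and then, inserting a buffer interval between $X$ and $Y$ through which only finitely many Schmidt vectors pass, one can cut the bond and split the state. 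This is the content of Theorem~\ref{corollary.factorizing-infinite-regions}. One must be careful that the decay of \emph{correlators} $\psi(O_XO_Y)\simeq\psi(O_X)\psi(O_Y)$ (which is automatic from the gap by \cite{HastingsKoma2006,NachtergaeleSims2006}) is weaker than what we need; the area law is what bridges the two.

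\textbf{Step 2: Local split and a single surgery.} Fix a cut at site $0$, separating $\ZZ$ into half-lines $L$ and $R$. Using the factorization from Step 1 together with the lemma of \cite{sopenko_chiral_2023}, I would produce a unitary $U_0$, supported in a (quasi-local, exponentially localized) neighborhood of the cut, such that $\psi\circ\Ad{U_0}$ restricted to $\caA_L\cup\caA_R$ is a product state across the cut, i.e. $(\psi\circ\Ad{U_0})|_{L\cup R}=\omega_L\otimes\omega_R$. The point of Sopenko's lemma is that approximate factorization of the reduced density matrices across a cut can be promoted to an \emph{exact} factorization after conjugating by a near-cut unitary, and moreover $U_0$ can be taken close to the identity with controlled tails. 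Crucially, because $U_0$ is localized near $0$, the new state agrees with $\psi$ far from the cut, so one may now repeat the procedure at a cut near $N$, then near $2N$, and so on, for a spacing $N$ large enough that the tails of successive $U_j$'s overlap only negligibly. Iterating over all cuts $jN$, $j\in\ZZ$, and noting that a state which is a product across every cut $jN$ and which moreover is itself (by the area law / finite-dimensionality of the blocks, after possibly a further finite-depth circuit inside each block) a product within each block $[jN,(j+1)N)$, is a genuine pure product state $\phi$.

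\textbf{Step 3: Convergence to an LGA.} The infinite product $\alpha=\cdots\Ad{U_{-1}}\Ad{U_0}\Ad{U_1}\cdots$ must be shown to be a well-defined locally generated automorphism, i.e. generated by an exponentially quasi-local time-dependent interaction. Here I would use the technique of \cite{kapustin-classification2021}: each $U_j$ is itself realized as the time-$1$ map of a quasi-local interaction supported near $jN$, and because these interactions live in well-separated regions with exponentially decaying tails, their ``sum'' (suitably interpreted, e.g. by running them in parallel or concatenating the time intervals) is again an exponentially quasi-local time-dependent interaction, whose generated automorphism is exactly the infinite composition. The bookkeeping on Lieb-Robinson bounds and on the summability of tails over $j\in\ZZ$ is what makes $\alpha$ an honest LGA. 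One then has $\phi\circ\alpha=\psi$ with $\phi$ a pure product state, which is precisely the SRE property.

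\textbf{Main obstacle.} I expect the crux to be Step 1 --- extracting genuine exponential decay of mutual correlations (factorization of reduced states, not merely of correlators) from the gap. This is exactly the obstruction flagged in the introduction as having blocked earlier approaches, and it is where the heavy input from \cite{hastings.arealaw,ukai-hastings-infinite-volume} enters; the purity of $\psi$ will be essential here (a mixture of SRE states across different sectors need not be SRE). A secondary but still delicate point is Step 3: ensuring the infinite composition of near-cut unitaries genuinely assembles into an automorphism generated by an interaction with exponential --- not merely subexponential or ``almost local'' --- tails, which is needed to land in the precise class of LGAs used in the definition of SRE.
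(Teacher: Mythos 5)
Your proposal is correct and follows essentially the same route as the paper: your Step 1 is Theorem~\ref{corollary.factorizing-infinite-regions} (obtained from the infinite-volume Hastings factorization plus a Schmidt-rank truncation), your Step 2 is Propositions~\ref{prop.cutting} and~\ref{prop.connecting-half-states} (the one detail you gloss over is that before invoking the lemma of \cite{sopenko_chiral_2023} one must first \emph{construct} an exactly factorized pure state exponentially close to $\psi$ far from the cut, which the paper does by taking the top eigenvector of $\dm{\psi|_{\le x}\otimes\psi|_{>x}}$), and your Step 3 is Lemma~\ref{lem.infiniteproduct}. The only organizational difference is that the paper peels off finite blocks one at a time from a half-infinite state $\hat\psi_{>x}$ rather than performing surgeries at all cuts $jN$ simultaneously, which is pure bookkeeping.
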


The purity assumption is necessary: Consider product states on a spin-$\frac 12$ chain, formally defined as $\psi_{\uparrow}=\otimes_{i\in \ZZ} \uparrow $ and $\psi_{\downarrow} = \otimes_{i\in \ZZ} \downarrow$. A long-range entangled linear superposition $(\psi_\uparrow + \psi_\downarrow)/\sqrt 2$ is a gapped ground state of a trivial ferromagnet interaction. However, it is not a counter-example to Theorem \ref{thm.main}, since it is a mixed state in infinite volume.

The rest of the paper is devoted to the proof of Theorem \ref{thm.main} (see Section \ref{sec.proof-main-thm}).

\section{Preliminaries and notation}\label{sec: prelim and notation}

In all what follows, we fix a finite range interaction $\Phi$ and a pure gapped ground state $\psi$ associated to it. The symbol $\psi$ is always used with this meaning and we will not repeat this every time.   Constants $C<\infty,c>0$ are quantities that can depend on $\Phi,\psi$ but on nothing else, and that can be updated from line to line. The triple $(\HH,\pi,\ket\psi)$ always refers to a GNS triple of $\psi$. 

In what follows, we will assume for simplicity that the on-site dimensions $d_x$ satisfy $d_x \geq 2$. This incurs no loss of generality\footnote{If there is a consecutive stretch of sites with $d_x=1$, with length longer than $r_\Phi$ (range of the interaction $\Phi$), then the ground state factorizes between the left and right hand sides of this stretch and we can treat the factors separately.  Stretches not longer than $r_\Phi$ are eliminated by grouping sites together. We hence reduce the problem to the case where $d_x\geq 2$.}.



\subsection{Subsets of \texorpdfstring{$\bbZ$}{}}

For a region $X \subset \bbZ$, we define its \textit{$r$-fattening} $X_r$, by
\begin{equation*}
    X_r := \{ y \in \ZZ\ |\ \text{dist}(\{y\},X) \le r \}
\end{equation*}
where $\text{dist}(Y,X) = \inf_{\substack{x \in X, \ y \in Y}} | x - y |$. We use the same notation for $r<0$, meaning $X_{-r} = [(X^c)_r]^c$, where $X^c = \{y \in \ZZ, x \not\in X\}$ denotes the complement of region $X$. We denote by $[a,b] = \{ x\in \ZZ, a\leq x \leq b\}$ for $a,b \in \ZZ$. Moreover, by $(a,b) = [a+1,b-1]$ for $a,b \in \ZZ$, and by $B_r(x)$ the interval $(x-r,x+r)$ centered at $x \in \ZZ$ with radius $r>0$. When $x=0$, we simplify the notation by writing $B_r := B_r(0)$. We also use the notation $< x$ (and $\ge x$) for the set $\{y \in \ZZ,\ y < x\}$ (respectively, $\{y \in \ZZ,\ y \ge x\}$). \par 
The \textit{$r$-width boundary} of $X$ is defined as
\begin{equation*}
    \partial X(r) :=  X_r \cap (X^c)_r. 
\end{equation*}

\subsection{Exponential clustering}
A unique pure gapped ground state $\psi$ satisfies an exponential clustering property: there exists $C<\infty,c>0$, depending only on $\Phi$, such that, for (possibly infinite) intervals $X,Y \subset \ZZ$ and for any $A \in \AA_X$, $B \in \AA_Y$, 
\begin{equation} \label{eq.exponential-clustering}
    |\psi(AB) - \psi(A) \psi(B)| < C\norm{A}\norm{B} e^{-c \cdot \dist (X,Y)}.
\end{equation}
This property is proven in 
\cite{HastingsKoma2006,NachtergaeleSims2006}
 (a more general version is proven therein, resuming to \eqref{eq.exponential-clustering} in our framework). 

\subsection{Localization of states and observables} \label{sec.localization}
We say two states $\varphi, \varphi' \in \mathcal S(\AA)$ are \textit{exponentially close far from a site $x$} if there are constants $C<\infty,c>0$, such that
\begin{equation}
    |\varphi(A) -\varphi'(A)| \le Ce^{-c\ell} \norm{A}
\end{equation}
for any $A \in \AA_{B_\ell(x)^c}$. An element $A \in \AA$ is said to be \textit{exponentially anchored at $x  \in \ZZ$} if there are constants $C<\infty, c>0$, such that
\begin{align*}
    \norm{[A,B]} \le Ce^{-c \cdot \dist(\{x\},Y)} \norm{A} \norm{B} ,
\end{align*}
for any $Y \subseteq \ZZ$ and  $B \in \AA_Y$. 
Equivalently, $A=\sum_n A_n$ with $A_n \in \AA_{B_n(x)}$ and $\norm{A_n} \leq Ce^{-cn}$, possibly with different constants $c,C$.  {The proof of such an equivalence follows by the standard technique of taking conditional expectations and relating them to the integral over the unitary group}, see e.g. \cite{bruno-amanda-I}. Moreover, if $A$ is unitary, then the approximants $A_n$ can be chosen to be unitary as well.

\subsection{Split property}\label{sec: split property}

Let $Y$ be an interval (possibly infinite). A pure gapped ground state $\psi$ associated to a finite range interaction satisfies the \emph{split property} \cite{Matsui2001,Matsui2013_EntEnt_SplitProp}:
$\psi$ is quasi-equivalent to $\psi|_Y \otimes \psi|_{Y^c}$. This implies 
\begin{proposition} \label{prop: split}\cite{Matsui2001,Ogata2020_ClassificationSPT_QSC,ogata.completeness}
  The GNS representation of $\AA_Y$ associated to $\psi|_Y$ is a type-I factor.  Furthermore, there are irreducible representations $(\HH_Y, \pi_Y)$ and $(\HH_{Y^c}, \pi_{Y^c})$ of $\AA_Y$ and $\AA_{Y^c}$, respectively, and a cyclic vector $\ket \psi$, such that 
        $$(\HH_Y \otimes \HH_{Y^c}, \pi_Y \otimes \pi_{Y^c}, \ket \psi)$$
        is a GNS triple for $\psi$. 
\end{proposition}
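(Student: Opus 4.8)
The plan is to deduce Proposition~\ref{prop: split} entirely from the split property $\psi\qe\psi|_Y\otimes\psi|_{Y^c}$, the purity of $\psi$, and standard facts about quasi-equivalence and von Neumann tensor products. Write $\psi_0:=\psi|_Y\otimes\psi|_{Y^c}$, a product state on $\AA\cong\AA_Y\otimes\AA_{Y^c}$ (the $C^*$-tensor product being unambiguous since these algebras are nuclear). First I would recall that the GNS triple of a product state is the tensor product of the GNS triples of its factors: fixing GNS triples $(\HH_Y^{(0)},\pi_Y^{(0)},\xi_Y)$ for $\psi|_Y$ and $(\HH_{Y^c}^{(0)},\pi_{Y^c}^{(0)},\xi_{Y^c})$ for $\psi|_{Y^c}$, the triple $(\HH_Y^{(0)}\otimes\HH_{Y^c}^{(0)},\ \pi_Y^{(0)}\otimes\pi_{Y^c}^{(0)},\ \xi_Y\otimes\xi_{Y^c})$ is a GNS triple for $\psi_0$, and by the commutation theorem for tensor products its generated von Neumann algebra is $M_Y\,\bar\otimes\,M_{Y^c}$ with $M_Y:=\pi_Y^{(0)}(\AA_Y)''$ and $M_{Y^c}:=\pi_{Y^c}^{(0)}(\AA_{Y^c})''$.

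For the type-I factor assertion, I would use purity: $\pi$ is irreducible, so $\pi(\AA)''=\mathcal B(\HH)$, while quasi-equivalence $\psi\qe\psi_0$ supplies a normal $*$-isomorphism $\pi(\AA)''\to\pi_{\psi_0}(\AA)''$; hence $M_Y\,\bar\otimes\,M_{Y^c}$ is a type-I factor. Since a von Neumann tensor product is a factor iff both tensorands are factors, and is of type I iff both are of type I, it follows that $M_Y$ and $M_{Y^c}$ are type-I factors, which is the first claim of the proposition (recall $M_Y$ is exactly the von Neumann algebra generated by the GNS representation of $\AA_Y$ with respect to $\psi|_Y$). Fixing normal $*$-isomorphisms $M_Y\cong\mathcal B(\HH_Y)$ and $M_{Y^c}\cong\mathcal B(\HH_{Y^c})$ and pre-composing them with $\pi_Y^{(0)}$, $\pi_{Y^c}^{(0)}$ then produces irreducible representations $(\HH_Y,\pi_Y)$ of $\AA_Y$ and $(\HH_{Y^c},\pi_{Y^c})$ of $\AA_{Y^c}$, quasi-equivalent to $\pi_Y^{(0)}$ and $\pi_{Y^c}^{(0)}$ respectively.

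It then remains to assemble the GNS triple. Tensoring the two isomorphisms just constructed gives a normal $*$-isomorphism $M_Y\,\bar\otimes\,M_{Y^c}\to\mathcal B(\HH_Y)\,\bar\otimes\,\mathcal B(\HH_{Y^c})=\mathcal B(\HH_Y\otimes\HH_{Y^c})$ carrying $\pi_Y^{(0)}\otimes\pi_{Y^c}^{(0)}$ to $\pi_Y\otimes\pi_{Y^c}$, so that $\pi\qe\pi_{\psi_0}\qe\pi_Y\otimes\pi_{Y^c}$. Both $\pi$ and $\pi_Y\otimes\pi_{Y^c}$ are irreducible (for the latter, its weak closure equals $\mathcal B(\HH_Y\otimes\HH_{Y^c})$), and for irreducible representations quasi-equivalence is the same as unitary equivalence; I would pick a unitary $U\colon\HH\to\HH_Y\otimes\HH_{Y^c}$ implementing it. Transporting the GNS data of $\psi$ through $U$, the vector $U\ket\psi$ is cyclic for $\pi_Y\otimes\pi_{Y^c}$ and satisfies $\langle U\psi|\,(\pi_Y\otimes\pi_{Y^c})(A)\,|U\psi\rangle=\langle\psi|\pi(A)|\psi\rangle=\psi(A)$, so $(\HH_Y\otimes\HH_{Y^c},\ \pi_Y\otimes\pi_{Y^c},\ U\ket\psi)$ is a GNS triple for $\psi$, which we relabel $\ket\psi$. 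There is no serious obstacle once the split property (imported from \cite{Matsui2001,Matsui2013_EntEnt_SplitProp}) is granted: the rest is routine bookkeeping with quasi-equivalence, the one point needing a little care being the structural fact that a type-I factor cannot be expressed as a von Neumann tensor product of two algebras unless both are already type-I factors.
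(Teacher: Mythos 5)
Your argument is correct and follows the same route the paper takes implicitly: the paper offers no proof of its own, simply asserting that the split property $\psi \qe \psi|_Y\otimes\psi|_{Y^c}$ (cited from \cite{Matsui2001,Matsui2013_EntEnt_SplitProp}) ``implies'' the proposition and deferring to \cite{Ogata2020_ClassificationSPT_QSC,ogata.completeness}. You have supplied exactly that implication — via the GNS triple of a product state, the commutation theorem, the fact that a von Neumann tensor product is a type-I factor iff both tensorands are, and unitary equivalence of quasi-equivalent irreducible representations — so this is the standard derivation the cited references contain, with no gaps.
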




This also implies that,
$$
\caB(\HH_Y) = \pi(\AA_Y)'', \qquad  \caB(\HH_{Y^c}) = \pi(\AA_{Y^c})'' 
$$
which will be used throughout the paper. One can somehow strengthen the above decomposition. Given a partition of $\bbZ$ into intervals (some of them infinite) $Y_1,\ldots,Y_n$, then 
\begin{equation} \label{eq.split-many-intervals}
(\caH_{Y_1}\otimes \ldots \otimes \caH_{Y^n}, \pi_{Y_1}\otimes \ldots \pi_{Y_n},\ket \psi)
\end{equation}
is also a GNS-triple for $\psi$. Each $(\HH_{Y_j},\pi_{Y_j})$ is irreducible, such that $\mathcal B(\HH_{Y_j}) = \pi_{Y_j} (\AA_{Y_j})''$.

To see this, one can start from the representation given in Proposition \ref{prop: split} with $Y,Y^c$ both half-lines and partition $Y$ and $Y^c$ further. Indeed, if $J$ is a finite interval, then the equivalence of $(\caH_Y,\pi_Y)$ and$ (\caH_J \otimes \caH_{Y\setminus J},\pi_J \otimes \pi_{Y\setminus J})  $ follows by a standard argument, see e.g.\ Lemma 2.7 in \cite{ukai-hastings-infinite-volume}.

\subsection{Norms on states} \label{sec.norms-states}




The Banach space norm on linear functionals on $\caA$ 
is often used as a metric between states $\varphi,\varphi'$, i.e., 
\begin{equation}\label{eq: metric on states}
 ||\varphi-\varphi'|| = \sup_{A \in\caA, ||A||=1} |\varphi(A)-\varphi'(A)|.
 \end{equation}
We will almost exclusively use functionals $\zeta$ on $\caA$ that are normal w.r.t.\  a given pure state $\psi$, i.e.\ they can be represented in the form
$$
\zeta(A) = \Tr_{\HH} (\dm{\zeta} \pi(A)),
$$
where $\dm\zeta \in \caB_1(\caH)$, i.e.\ $\dm\zeta$ is a trace-class operator on $\caH$.  If $\zeta$ is a state, then $\dm\zeta$ is a density matrix. The important point here is that, for normal functionals, the trace-class norm
\begin{equation}
||\dm\zeta||_1=\Tr_{\HH}(\sqrt{\dm\zeta^*\dm\zeta})
\end{equation}
coincides with its norm as a functional. Indeed, 
\begin{align} 
    \nonumber \norm{\dm\zeta}_1  &= \sup\limits_{B \in \pi(\AA)'', ||B||=1} |\Tr_{\HH}(\dm\zeta B)|  = \sup\limits_{B \in \pi(\AA), ||B||=1} |\Tr_{\HH}(\dm\zeta B)| = \sup_{A \in \caA, ||A||= 1} |\zeta(A)|.
\end{align}
The first equality follows from $\caB(\caH)=\pi(\caA)''$. The second equality uses that $\pi(\AA)''$ is the closure of $\pi(\AA)$ in the weak or strong operator topology, by von Neumann's bicommutant theorem, and the Kaplansky density theorem. 


When $\dm\zeta$ is pure, we frequently choose a normalized representative in the range of $\dm\zeta$, and denote it by $\ket \zeta$. Conversely, a unit vector $\ket \zeta$ on a given irreducible representation of $\AA_Y$, $Y \subseteq \ZZ$, can be lifted to a pure state on $\AA_Y$ in a standard way.


\section{Exponential decay of mutual correlations} \label{sec.exponential-decay-mutual-correlations}
We prove the exponential decay of mutual correlations, already mentioned in the introduction. The main result is Theorem \ref{corollary.factorizing-infinite-regions}, which appears as a simple corolary of Lemma \ref{thm.mutual-correlation-decay}. 

For the duration of Section \ref{sec.exponential-decay-mutual-correlations}, given a finite interval $I$, we choose the GNS representation of $\psi$ to be factorized with respect to the bipartition $I\cup I^c = \ZZ$, e.g.\ Proposition \ref{prop: split}. 

\subsection{Hastings Factorization}

The following theorem is a compilation of Corollary~3.2, Corollary~3.9, and Corollary~3.13 in \cite{ukai-hastings-infinite-volume}, which, in turn, is based on \cite{hastings.arealaw}, see also \cite{hamza}. It is the main tool in the present section \ref{sec.exponential-decay-mutual-correlations}.
\begin{theorem} [Hastings factorization in infinite volume, \cite{ukai-hastings-infinite-volume}] \label{thm.hastings-factorization}
There are constants $C< \infty,c>0$ such that, for any finite interval $I $ and any $\ell\in \bbN$, there are projections $O_{I^c}(\ell) \in \mathcal B(\HH_{I^c})$, $O_{I}(\ell) \in \pi(\AA_I)$, and a positive contraction $O_{\partial I}(\ell) \in \pi(\AA_{\partial I(\ell)})$ such that
\begin{equation} \label{eq.approximation-to-projection}
    \norm{O_{\partial I}(\ell)O_I (\ell) O_{I^c}(\ell)  - \kb\psi} \le C e^{-c\ell}.
\end{equation}
and 
\begin{equation} \label{eq.factorization-almost-identity}
    \norm{(O_\Gamma(\ell) - \mathds 1) \ket\psi} \le C e^{-c\ell}, \qquad \text{for }\Gamma \in \{I,I^c,\partial I\}. 
\end{equation}
\end{theorem}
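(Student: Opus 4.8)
The statement is Hastings' one–dimensional area law, rephrased for the infinite chain, and I would prove it by running the finite–volume construction of \cite{hastings.arealaw} and then passing to the thermodynamic limit inside the GNS representation of $\psi$. Throughout, $C,c$ depend only on $\Phi$ (equivalently on the gap $\gamma$).

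\emph{Step 1: a finite–volume approximate ground–state projector in product form.} Fix $I=[a,b]$ and $\ell$, and choose a finite box $\Lambda\supset I_{2\ell}$. Using standard results relating the infinite–volume gap of $\psi$ to finite–volume local gaps (modifying $H_\Lambda$ in an $O(1)$ collar near $\partial\Lambda$ if necessary), take $H_\Lambda=\sum_{J\subset\Lambda}\Phi_J$ to have a unique ground state $\Omega_\Lambda$, a gap $\ge\gamma/2$ uniformly in $\Lambda$, and $\langle\Omega_\Lambda,\,\cdot\,\Omega_\Lambda\rangle\to\psi$. Set $P_\Lambda=\ketbra{\Omega_\Lambda}$ and $E_\Lambda=\inf\mathrm{spec}\,H_\Lambda$. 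The heart of \cite{hastings.arealaw} is the construction of an approximate ground–state projector: filtering the Hamiltonian with a Gaussian, $K_\sigma=\frac{1}{\sqrt{2\pi}\,\sigma}\int_\RR e^{-t^2/2\sigma^2}\,e^{it(H_\Lambda-E_\Lambda)}\,\d t=e^{-\sigma^2(H_\Lambda-E_\Lambda)^2/2}$, gives $\norm{K_\sigma-P_\Lambda}\le e^{-\sigma^2\gamma^2/8}$; choosing $\sigma$ of order $\ell$ and invoking the Lieb--Robinson bound for $H_\Lambda$ \cite{liebrobinson,nachtergaele.sims.ogata.2006}, the relevant evolution $e^{itH_\Lambda}$ with $|t|\lesssim\sigma$ spreads only within a light cone of width $\sim\ell$, which fits inside $\partial I(\ell)$ (two windows of width $2\ell$ straddling $a$ and $b$). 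Organizing the outcome so that the $\partial I(\ell)$–, $I\setminus\partial I(\ell)$– and $I^c$–localized contributions become three separate factors — this reorganization, and not the crude estimate above, is the technical substance of the area law — yields operators $\tilde O_I^\Lambda\in\caA_I$, $\tilde O_{I^c}^\Lambda\in\caA_{\Lambda\setminus I}$ and a contraction $\tilde O_{\partial I}^\Lambda\in\caA_{\partial I(\ell)}$ with $\norm{\tilde O_{\partial I}^\Lambda\,\tilde O_I^\Lambda\,\tilde O_{I^c}^\Lambda-P_\Lambda}\le Ce^{-c\ell}$. In the frustration–free case these factors are honest products of local ground–state projectors; in general they are the analogous filtered surrogates.

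\emph{Step 2: rounding, near–invariance, and the thermodynamic limit.} Since $P_\Lambda$ is a rank–one projection and the three factors are localized in essentially disjoint regions, a short spectral clean–up replaces $\tilde O_I^\Lambda,\tilde O_{I^c}^\Lambda$ by genuine projections $O_I^\Lambda\in\caA_I$, $O_{I^c}^\Lambda\in\caA_{\Lambda\setminus I}$ and $\tilde O_{\partial I}^\Lambda$ by a positive contraction $O_{\partial I}^\Lambda\in\caA_{\partial I(\ell)}$, keeping \eqref{eq.approximation-to-projection} with worse $C,c$. The near–invariance \eqref{eq.factorization-almost-identity} for $\Gamma\in\{I,I^c\}$ is exact in the frustration–free case ($\Omega_\Lambda$ lies in the range of every product of local ground–state projectors) and is $O(e^{-c\ell})$ in general, by the same bound that produces the filtered surrogates and also extractable from exponential clustering \eqref{eq.exponential-clustering}; the case $\Gamma=\partial I$ follows from $\norm{(O_{\partial I}^\Lambda-\mathds 1)\ket{\Omega_\Lambda}}\le\norm{(O_{\partial I}^\Lambda O_I^\Lambda O_{I^c}^\Lambda-\mathds 1)\ket{\Omega_\Lambda}}+\norm{(\mathds 1-O_I^\Lambda)\ket{\Omega_\Lambda}}+\norm{(\mathds 1-O_{I^c}^\Lambda)\ket{\Omega_\Lambda}}$, the first term being $\le Ce^{-c\ell}$ by \eqref{eq.approximation-to-projection} and $P_\Lambda\ket{\Omega_\Lambda}=\ket{\Omega_\Lambda}$. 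Finally let $\Lambda\uparrow\ZZ$: via Proposition~\ref{prop: split} for $I\cup I^c$ and its refinement \eqref{eq.split-many-intervals} for the partition $\{I\setminus\partial I(\ell),\partial I(\ell),I^c\}$, realize everything through $\pi$ on $\caH=\caH_I\otimes\caH_{I^c}$, with $O_{I^c}^\Lambda\in\caB(\caH_{I^c})=\pi(\caA_{I^c})''$. Since $\caA_I$ and $\caA_{\partial I(\ell)}$ are \emph{fixed} finite–dimensional algebras, $(O_I^\Lambda)$ and $(O_{\partial I}^\Lambda)$ have norm–convergent subsequences with limits $O_I(\ell)\in\pi(\caA_I)$ (a projection) and $O_{\partial I}(\ell)\in\pi(\caA_{\partial I(\ell)})$ (a positive contraction); the $O_{I^c}^\Lambda$ lie in the $\sigma$–weakly compact unit ball of $\caB(\caH_{I^c})$, so a subnet converges $\sigma$–weakly to a contraction, which, using $\norm{(\mathds 1-O_{I^c}^\Lambda)\ket{\Omega_\Lambda}}\le Ce^{-c\ell}$ and $\ket{\Omega_\Lambda}\to\ket\psi$, one adjusts to a genuine projection $O_{I^c}(\ell)\in\caB(\caH_{I^c})$ at cost $O(e^{-c\ell})$. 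Passing the uniform bounds through the limit gives \eqref{eq.approximation-to-projection} and \eqref{eq.factorization-almost-identity}.

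\emph{Main obstacle.} The genuinely hard input is Step~1: Hastings' construction of a product–structured approximate ground–state projector with \emph{exponential}, not merely stretched–exponential or polynomial, decay in $\ell$, and with constants uniform in the volume — this is the one–dimensional area law itself, and the delicate point there is that the relevant evolution must be localized within a window comparable to $\ell$ while the filtering error stays exponential. The second, more technical point is the infinite–volume passage of Step~2: controlling the $I^c$–localized factor (whose algebra grows with $\Lambda$) so that its limit is a bona fide projection in $\caB(\caH_{I^c})$ without degrading any exponential estimate. This is precisely the additional content of \cite{ukai-hastings-infinite-volume} on top of \cite{hastings.arealaw}.
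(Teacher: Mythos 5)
First, a point of comparison: the paper does not prove this theorem at all --- it is explicitly imported as ``a compilation of Corollary~3.2, Corollary~3.9, and Corollary~3.13 in \cite{ukai-hastings-infinite-volume}'', which in turn adapts \cite{hastings.arealaw} to infinite volume. So your proposal is an attempt to reconstruct a proof that the paper treats as a black box. Your sketch correctly identifies the provenance and the main ingredients (Gaussian filtering of the Hamiltonian, Lieb--Robinson localization, the factorization of the approximate ground-state projector into boundary, interior and exterior pieces), and you are candid that the ``reorganization into three factors'' --- Hastings' factorization lemma, the actual substance of the result --- is being cited rather than proven. That deferral is acceptable given that the paper itself defers the whole statement.

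There is, however, one genuine gap in the route you chose, and it is precisely the point that makes \cite{ukai-hastings-infinite-volume} necessary on top of \cite{hastings.arealaw}. Your Step~1 assumes that one can choose finite boxes $\Lambda$ with Hamiltonians $H_\Lambda$ having a unique ground state $\Omega_\Lambda$, a gap $\ge\gamma/2$ uniform in $\Lambda$, and $\langle\Omega_\Lambda,\cdot\,\Omega_\Lambda\rangle\to\psi$. None of this follows from the hypotheses of the theorem: the gap assumption in this paper is a statement about the GNS Hamiltonian of the \emph{infinite-volume} state $\psi$ (equivalently, the derivation inequality defining a gapped ground state), and it is not known in general that finite-volume restrictions of $\Phi$ --- even after $O(1)$ boundary modifications --- admit uniformly gapped ground states converging to $\psi$. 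This is exactly why the infinite-volume version of Hastings' argument is carried out directly in the GNS representation: one applies the Gaussian filter to the GNS Hamiltonian $H_\psi\ge 0$ (whose gap is given by hypothesis) and uses infinite-volume Lieb--Robinson bounds \cite{liebrobinson,nachtergaele.sims.ogata.2006} to localize $e^{itH_\psi}$-evolved local operators; the operators $O_I(\ell)\in\pi(\caA_I)$, $O_{\partial I}(\ell)\in\pi(\caA_{\partial I(\ell)})$ and $O_{I^c}(\ell)\in\caB(\caH_{I^c})=\pi(\caA_{I^c})''$ are then produced directly, with no thermodynamic limit to take. This also removes the second soft spot in your Step~2: the $\sigma$-weak limit of a net of projections $O_{I^c}^\Lambda$ need not be a projection, and your ``adjust to a genuine projection at cost $O(e^{-c\ell})$'' would need an actual argument (a functional-calculus rounding using a spectral gap of the limit operator near $1$), which in the direct infinite-volume construction is never needed. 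If you want a self-contained proof, you should replace the finite-volume scaffolding by the GNS-Hamiltonian version of the filtering argument; as written, Step~1 rests on an unproven (and in general false) uniform local gap assumption.
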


\subsection{Schmidt decomposition} \label{sec.schmidt}

There exists a Schmidt decomposition of the GNS representative
\begin{equation} \label{eq.schmidt-decomposition}
    \ket\psi = \sum\limits_{j=1}^{\sr{I}} \sqrt{\lambda_j} \ket{\psi_j^I} \otimes \ket{\psi_j^{I^c}},
\end{equation}
with ordered strictly positive eigenvalues $\lambda_1 \ge \lambda_2 \ge \dots \ge \lambda_{\text{sr(I)}}$ and 
 $\text{sr(I)}$ the Schmidt rank of $\ket \psi$ with respect to the bipartition $I \cup I^c = \ZZ$. Note that $\text{sr(I)}\leq \mathrm{dim}(\caH_I) <\infty$.
We also note that the Schmidt spectrum is the same regardless the choice of factorized (between $I$ and $I^c$ GNS-representation). 
Our conventions for the generic constants  $C,c$ in particular imply that they do not depend on the interval $I$, which is used throughout the paper without further comment.

\begin{proposition} \label{prop.decay-of-schmidt-coeffs}
    There exists a constant $\alpha>0$ such that, for any finite interval $I$, 
    \begin{equation} \label{eq.decay-schmidt-coefficients}
        \sum\limits_{j \ge k^*} \lambda_j \le C(k^*)^{-\alpha},
    \end{equation}

\end{proposition}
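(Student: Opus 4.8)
The plan is to bound the tail of the Schmidt spectrum $\sum_{j\ge k^*}\lambda_j$ by relating it to the error in the Hastings factorization, Theorem \ref{thm.hastings-factorization}. The key observation is that $O_I(\ell)\in\pi(\AA_I)$ is a projection acting only on the $\HH_I$ tensor factor, so in the Schmidt basis $\{\ket{\psi_j^I}\}$ it has some finite rank, say $R(\ell)$; since $O_I(\ell)\in\pi(\AA_I)$ and $\pi_I$ is irreducible we expect $R(\ell)$ to be controlled polynomially in $\ell$, because $O_I(\ell)$ is built (in \cite{ukai-hastings-infinite-volume}) out of spectral projections of a local approximate ground-state Hamiltonian on a region of size $\sim\ell$ — its rank is at most $\dim(\HH_{B_{C\ell}})\le e^{C\ell}$ a priori, but the sharper input of \cite{hastings.arealaw,ukai-hastings-infinite-volume} is that the relevant projection can be taken of rank $\mathrm{poly}(\ell)$ (this is exactly the area-law bond-dimension statement). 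I would first extract from the cited corollaries the precise statement that $\mathrm{rank}_{\HH_I}\big(O_I(\ell)\big)\le P(\ell)$ for a polynomial $P$.

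Next I would turn the operator bound \eqref{eq.approximation-to-projection} into a statement about $\lambda_j$. Write $P_{<k}=\sum_{j<k}\ketbra{\psi_j^I}{\psi_j^I}$ for the Schmidt projection onto the top $k-1$ vectors on the left. From \eqref{eq.factorization-almost-identity} with $\Gamma=I$ we get $\|O_I(\ell)\ket\psi-\ket\psi\|\le Ce^{-c\ell}$, hence $\braket{\psi}{O_I(\ell)|\psi}\ge 1-Ce^{-c\ell}$. Expanding in the Schmidt basis, $\braket{\psi}{O_I(\ell)|\psi}=\sum_j\lambda_j\,\braket{\psi_j^I}{O_I(\ell)|\psi_j^I}$, and since $O_I(\ell)$ is a projection of rank at most $P(\ell)$ while $0\le\braket{\psi_j^I}{O_I(\ell)|\psi_j^I}\le 1$, the sum $\sum_j\lambda_j\braket{\psi_j^I}{O_I(\ell)|\psi_j^I}$ is maximized, for fixed trace budget, by putting weight on the largest $\lambda_j$'s; more carefully, $\sum_j\lambda_j\braket{\psi_j^I}{O_I(\ell)|\psi_j^I}\le \sum_{j<k}\lambda_j + \lambda_k\cdot\big(P(\ell)-(k-1)\big)_+$ is not quite what I want, so instead I bound directly: since $\mathrm{Tr}(O_I(\ell))\le P(\ell)$ and the $\lambda_j$ are decreasing, $1-Ce^{-c\ell}\le\sum_j \lambda_j \braket{\psi_j^I}{O_I(\ell)|\psi_j^I}$, and writing $1=\sum_j\lambda_j$ we get $\sum_j\lambda_j\big(1-\braket{\psi_j^I}{O_I(\ell)|\psi_j^I}\big)\le Ce^{-c\ell}$; for $j$ large (beyond the top $P(\ell)$), one cannot conclude $1-\braket{\psi_j^I}{O_I(\ell)|\psi_j^I}$ is close to $1$ pointwise, so I would instead use the rank bound via: if $Q_{\ge k}:=\mathrm{id}-P_{<k}$, then $\mathrm{Tr}(O_I(\ell)Q_{\ge k})\le \mathrm{Tr}(O_I(\ell))\le P(\ell)$, but that is automatic. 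The clean route is: $\sum_{j\ge k}\lambda_j = \mathrm{Tr}(\rho_I Q_{\ge k})$ where $\rho_I=\mathrm{Tr}_{I^c}\ketbra\psi\psi$, and $\mathrm{Tr}(\rho_I O_I(\ell))\ge 1-Ce^{-c\ell}$ forces $\mathrm{Tr}(\rho_I(\mathrm{id}-O_I(\ell)))\le Ce^{-c\ell}$; since $\mathrm{id}-O_I(\ell)$ has co-rank $\le P(\ell)$, i.e. rank $\ge \sr{I}-P(\ell)$, and $\rho_I$ has decreasing eigenvalues $\lambda_j$, we get $\sum_{j> P(\ell)}\lambda_j\le \mathrm{Tr}(\rho_I(\mathrm{id}-O_I(\ell)))\le Ce^{-c\ell}$ (the minimum of $\mathrm{Tr}(\rho_I \Pi)$ over projections $\Pi$ of co-rank $P(\ell)$ equals $\sum_{j>P(\ell)}\lambda_j$). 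Thus $\sum_{j>P(\ell)}\lambda_j\le Ce^{-c\ell}$ for every $\ell\in\bbN$.

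Finally I would convert this into the polynomial-in-$k^*$ statement \eqref{eq.decay-schmidt-coefficients}. Given $k^*$, choose $\ell=\ell(k^*)$ to be the largest integer with $P(\ell)\le k^*$; since $P$ is a polynomial of some degree $d$, this gives $\ell\ge c(k^*)^{1/d}$ for $k^*$ large, hence $\sum_{j\ge k^*}\lambda_j\le\sum_{j>P(\ell)}\lambda_j\le Ce^{-c\ell}\le Ce^{-c(k^*)^{1/d}}$, which in particular is $\le C(k^*)^{-\alpha}$ for any $\alpha>0$ (absorbing constants); for small $k^*$ the bound is trivial since the left side is $\le 1$. I expect the main obstacle to be the first step: pinning down from \cite{ukai-hastings-infinite-volume} the precise polynomial rank bound on the projection $O_I(\ell)$ — the factorization corollaries are stated in terms of the approximation \eqref{eq.approximation-to-projection} and \eqref{eq.factorization-almost-identity}, and one must check that the construction there indeed yields $O_I(\ell)$ of polynomial (rather than merely exponential) rank, which is the content of the area law of \cite{hastings.arealaw}. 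If only an exponential rank bound $P(\ell)\le e^{C\ell}$ were available, the argument above would still give $\sum_{j\ge k^*}\lambda_j\le C(\log k^*)^{-\infty}$-type decay only through $\ell\sim\log k^*$, yielding $\sum_{j\ge k^*}\lambda_j\le C(k^*)^{-\alpha}$ with $\alpha$ depending on the constants — which is in fact exactly the form of \eqref{eq.decay-schmidt-coefficients}, so even the weaker exponential rank bound suffices. Hence the genuinely load-bearing input is merely $\|O_I(\ell)\|_{\mathrm{rank}}\le e^{C\ell}$ together with $\mathrm{Tr}(\rho_I O_I(\ell))\ge 1-Ce^{-c\ell}$, and the rest is the elementary eigenvalue inequality for $\mathrm{Tr}(\rho_I\Pi)$ over co-rank-bounded projections.
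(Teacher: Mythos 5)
Your overall strategy --- converting the Hastings factorization into a tail bound on the Schmidt spectrum via a rank bound plus the Ky Fan / Eckart--Young variational principle, then choosing $\ell\sim\log k^*$ --- is the right shape of argument, and your closing observation that an exponential-in-$\ell$ rank bound already yields the polynomial decay \eqref{eq.decay-schmidt-coefficients} is correct and is precisely why the proposition is stated with power-law rather than exponential decay. (The paper itself gives no argument here: it simply cites equation (29) of \cite{hastings.arealaw} and \cite{Matsui2013_EntEnt_SplitProp} for the infinite-volume adjustment.)

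However, the step you yourself flag as the ``genuinely load-bearing input'' is where the argument fails: there is no bound $\mathrm{rank}(O_I(\ell))\le e^{C\ell}$, polynomial or exponential in $\ell$. The projection $O_I(\ell)$ lives in $\pi(\AA_I)$ and is supported on \emph{all} of $I$; in the Hastings/Ukai construction it is (roughly) a low-energy spectral projection of a Hamiltonian restricted to $I$, whose rank is generically of order $\dim(\HH_I)=e^{O(|I|)}$, growing with $|I|$ independently of $\ell$. With only the trivial bound $\mathrm{rank}(O_I(\ell))\le\dim(\HH_I)\ge\sr{I}$, your inequality $\sum_{j>\mathrm{rank}(O_I(\ell))}\lambda_j\le Ce^{-c\ell}$ is vacuous, and the resulting constant $\alpha$ would in any case depend on $|I|$, which the proposition forbids. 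The quantity that \emph{is} controlled by $\ell$ is not the rank of $O_I(\ell)$ but the \emph{operator Schmidt rank across the cut} of the boundary contraction $O_{\partial I}(\ell)$, which is supported on $\partial I(\ell)$ and hence decomposes as $O_{\partial I}(\ell)=\sum_{k=1}^{N}L_k\otimes R_k$ with $N\le e^{C\ell}$. To exploit this you must use the full statement \eqref{eq.approximation-to-projection}, namely that $O_{\partial I}O_IO_{I^c}$ is $Ce^{-c\ell}$-close in operator norm to the \emph{rank-one} operator $\ketbra\psi$: applying it to a product vector $\ket{\phi_I}\otimes\ket{\phi_{I^c}}$ whose overlap with $\ket\psi$ is bounded below (e.g.\ the top Schmidt pair, using an a priori bound $\lambda_1\ge c>0$ obtained by a bootstrap at fixed $\ell_0$), one obtains a vector of Schmidt rank at most $N\le e^{C\ell}$ within distance $Ce^{-c\ell}$ of $\ket\psi$; Eckart--Young then gives $\sum_{j>e^{C\ell}}\lambda_j\le Ce^{-c\ell}$, and your final optimization over $\ell$ goes through verbatim. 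Note that feeding $\ket\psi$ itself into $O_{\partial I}O_IO_{I^c}$ does not work either, since $O_IO_{I^c}\ket\psi$ already has Schmidt rank $\sr{I}$; the detour through a product vector is essential, and it is exactly the content of Hastings' equation (29).
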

\begin{proof}
    This is a consequence of \ref{thm.hastings-factorization}, see equation (29) of \cite{hastings.arealaw}. This is a proof in finite volume, but it can be adjusted to the infinite volume case (\cite{Matsui2013_EntEnt_SplitProp}). 
\end{proof}


Given a Schmidt decomposition as in \eqref{eq.schmidt-decomposition}, we define
\begin{align} \label{eq.off-diagonal-X-comp}
    &\dm{\mathfrak I^c_{m,n}} := \ket{\psi_m^{I^c}} \bra{\psi_n^{I^c}} \in \mathcal B_1(\HH_{I^c}),
\end{align}
and
\begin{align} \label{eq.off-diagonal-X}
    &\dm{\mathfrak I_{m,n}} := \ket{\psi_m^{I}} \bra{\psi_n^{I}} \in \mathcal B_1(\HH_{I}).
\end{align}
The trace-class operators defined in \eqref{eq.off-diagonal-X-comp} and \eqref{eq.off-diagonal-X} induce linear functionals $\mathfrak I^c_{m,n}$ and $\mathfrak I_{m,n}$ on $\AA_{I^c}$ and $\AA_I$, respectively. 


\begin{lemma} The following items hold: \label{lem.hastings-decay-of-correlations}

\begin{enumerate}    
    \item For $A \in \AA_{I_{-\ell}}$, 
    \begin{equation}
        \left| \expval{\pi (A) \dm{\mathfrak I^c_{m,n}}}{\psi} - \psi(A) \expval{\dm{\mathfrak I^c_{m,n}}} {\psi} \right| \le  Ce^{-c\ell}\norm{A}.
    \end{equation}

    \item For $A \in \AA_{(I^c)_{-\ell}}$, 
    \begin{equation}
        \left| \expval{\pi(A) \dm{\mathfrak I_{m,n}}}{\psi} - \psi(A) \expval{ \dm{\mathfrak I_{m,n}}} {\psi} \right| \le Ce^{-c\ell}\norm{A} .
    \end{equation}
\end{enumerate}
\end{lemma}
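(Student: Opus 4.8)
The key idea is to express the "off-diagonal" functional $\mathfrak{I}^c_{m,n}$ in terms of the vector state on $\ket\psi$ plus local perturbations, and then feed that into the exponential clustering property \eqref{eq.exponential-clustering}. Concretely, I would first note that for $A\in\AA_{I_{-\ell}}$, one has $A\in\AA_{(\partial I(\ell))^c \cap I}$ roughly speaking — so $A$ commutes with $O_{\partial I}(\ell)$ and with $O_{I^c}(\ell)$, and also $A$ commutes with the representatives of $\AA_{I^c}$ acting on $\HH_{I^c}$. The plan is to write
\begin{equation*}
\expval{\pi(A)\dm{\mathfrak{I}^c_{m,n}}}{\psi} = \bra{\psi_m^{I^c}}\big(\text{something}\big)\ket{\psi_n^{I^c}},
\end{equation*}
but this is awkward directly; instead I would work the other way. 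Using the Schmidt decomposition \eqref{eq.schmidt-decomposition} and the definition \eqref{eq.off-diagonal-X-comp}, observe that $\dm{\mathfrak{I}^c_{m,n}}$ can be realized, up to the Schmidt coefficients, by sandwiching $\ketbra\psi$ with suitable operators in $\pi(\AA_I)'' = \caB(\HH_I)$. Precisely, pick partial isometries (or the relevant matrix units) $V_{m},V_n \in \caB(\HH_I)$ with $V_j\ket{\psi_j^I}$ related to a fixed reference vector, so that $\dm{\mathfrak{I}^c_{m,n}} = (\lambda_m\lambda_n)^{-1/2}\,\mathrm{Tr}_{\HH_I}\big(V_m^* \ketbra\psi V_n\big)$ in the appropriate sense; then $\expval{\pi(A)\dm{\mathfrak{I}^c_{m,n}}}{\psi}$ becomes $(\lambda_m\lambda_n)^{-1/2}\bra\psi V_n \pi(A) V_m^* \ket\psi$.

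The heart of the argument is then to replace $V_m^*\ket\psi$ (and $V_n^*\ket\psi$) using the Hastings factorization, Theorem \ref{thm.hastings-factorization}. The point is that $V_j^*\ket\psi$ differs from $\ket\psi$ only via an operator localized in $I$ far from the boundary $\partial I(\ell)$ — so conjugating by $O_I(\ell)$, $O_{\partial I}(\ell)$, $O_{I^c}(\ell)$ and using \eqref{eq.approximation-to-projection}, \eqref{eq.factorization-almost-identity}, one can approximate $\bra\psi V_n\pi(A)V_m^*\ket\psi$ by an expression of the form $\bra\psi \pi(B_m)\pi(A)\pi(B_n)\ket\psi$ where $B_m,B_n$ are localized in $I^c$ (or $\partial I$), up to error $Ce^{-c\ell}$. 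Since $A$ is localized in $I_{-\ell}$ and $B_m,B_n$ are localized at distance $\ge \ell$ from $A$'s support, exponential clustering \eqref{eq.exponential-clustering} then gives $\bra\psi\pi(B_m)\pi(A)\pi(B_n)\ket\psi \approx \psi(A)\cdot\bra\psi\pi(B_m B_n)\ket\psi$ up to $Ce^{-c\ell}\norm A$. Undoing the approximations on the second factor yields $\psi(A)\expval{\dm{\mathfrak{I}^c_{m,n}}}{\psi}$, which is exactly the claim. Item~(2) is the mirror image, swapping the roles of $I$ and $I^c$.

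The main obstacle I anticipate is bookkeeping the factors $(\lambda_m\lambda_n)^{-1/2}$: if $m$ or $n$ is large, $\lambda_m$ can be tiny (only a polynomial lower bound is hard to get — Proposition \ref{prop.decay-of-schmidt-coeffs} gives an \emph{upper} tail bound, not a lower bound on individual $\lambda_j$), which would blow up the error estimate. The resolution is presumably that one does not actually need to normalize: the quantities $\dm{\mathfrak{I}^c_{m,n}} = \ket{\psi_m^{I^c}}\bra{\psi_n^{I^c}}$ are defined with unit Schmidt \emph{vectors}, so one should instead work with the \emph{unnormalized} vectors $\sqrt{\lambda_j}\ket{\psi_j^{I}}\otimes\ket{\psi_j^{I^c}}$ and extract $\ket{\psi_m^{I^c}}\bra{\psi_n^{I^c}}$ by applying rank-one operators $\ketbra{\psi_m^I}{\psi_n^I}\in\caB(\HH_I)$, whose norm is $1$, to $\ketbra\psi$ and tracing out $\HH_I$ — no division occurs. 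So the operators $V_m$ used above should be taken as these norm-one rank-one operators in $\caB(\HH_I) = \pi(\AA_I)''$, approximated by elements of $\pi(\AA_I)$ via Kaplansky density; then all constants stay uniform in $m,n$, and in $I$ as promised by the convention on $C,c$.
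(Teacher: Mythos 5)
Your toolbox (Hastings operators, clustering, the commutation of $\pi(A)$ with $O_{\partial I}$ and $O_{I^c}$) is the right one, but the pivotal reduction at the center of your argument does not hold. First, the identity you propose, $\expval{\pi(A)\dm{\mathfrak I^c_{m,n}}}{\psi}=(\lambda_m\lambda_n)^{-1/2}\bra{\psi}V_n\pi(A)V_m^*\ket{\psi}$, is false for the operators you describe (with $V_j^*=\ketbra{\chi}{\psi_j^I}$ for a fixed reference vector $\ket\chi$): the right-hand side evaluates to $\delta_{mn}\bra{\chi}\pi(A)\ket{\chi}$, which destroys exactly the off-diagonal ($m\neq n$) content the lemma is about; the correct identity, used in the paper for Proposition \ref{prop.closeness-of-density-matrices}, is $\expval{\pi(A)\dm{\mathfrak I^c_{m,n}}}{\psi}=\sqrt{\lambda_m\lambda_n}\,\bra{\psi_m^I}\pi(A)\ket{\psi_n^I}$ (your trace identity for $\dm{\mathfrak I^c_{m,n}}$ is fine, but it cannot be substituted inside $\expval{\,\cdot\,}{\psi}$ the way you do, because $\ket\psi$ appears again on the outside). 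Second, the step ``approximate $\bra{\psi}V_n\pi(A)V_m^*\ket{\psi}$ by $\bra{\psi}\pi(B_m)\pi(A)\pi(B_n)\ket{\psi}$ with $B_m,B_n$ localized in $I^c$'' is unsupported: the $V_j$ act on all of $\HH_I$, directly on top of the support of $A$, and the Hastings operators do not transport them into $I^c$; any transport of an operator across the Schmidt decomposition conjugates it by the diagonal matrix of $\sqrt{\lambda_j}$ and its inverse, which is precisely the unbounded $\sqrt{\lambda_n/\lambda_m}$ factor you flag. Third, your proposed resolution of that issue (``no division occurs'') is not correct: $\dm{\mathfrak I^c_{m,n}}$ is built from \emph{normalized} Schmidt vectors, so extracting it from $\ketbra{\psi}$ by norm-one rank-one operators on $\HH_I$ unavoidably produces the factor $\sqrt{\lambda_m\lambda_n}$ that must then be divided out.

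The paper's proof of item 1 uses no clustering and no transfer between the two sides. It inserts $O_{\partial I}O_{I^c}$ on the left and $O_I$ on the right using \eqref{eq.factorization-almost-identity} (three errors of size $Ce^{-c\ell}\norm{A}$, uniform in $m,n$ since $\norm{\dm{\mathfrak I^c_{m,n}}}\le 1$), commutes $\pi(A)$ past $O_{\partial I}$ and $O_{I^c}$ (your support observation is exactly what is needed here) and $\dm{\mathfrak I^c_{m,n}}$ past $O_I$ (disjoint tensor factors), and then replaces $O_{\partial I}O_{I^c}O_I$ by $\ketbra{\psi}$ via \eqref{eq.approximation-to-projection}, after which the expectation factorizes \emph{exactly} as $\psi(A)\expval{\dm{\mathfrak I^c_{m,n}}}{\psi}$. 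Item 2 is immediate from \eqref{eq.exponential-clustering} because $\dm{\mathfrak I_{m,n}}\in\caB(\HH_I)=\pi(\AA_I)$ is a norm-one observable supported in $I$, at distance $\ge\ell$ from $(I^c)_{-\ell}$. If you insist on a clustering route for item 1, the version that works is to apply \eqref{eq.exponential-clustering} to $A$ and $G=\ketbra{\psi_n^I}{\psi_m^I}\in\pi(\AA_I)$, which gives $\left|\lambda_m\bra{\psi_m^I}\pi(A)\ket{\psi_n^I}-\delta_{mn}\lambda_n\psi(A)\right|\le Ce^{-c\ell}\norm{A}$, and then recover the symmetric weight $\sqrt{\lambda_m\lambda_n}$ by running the same bound for $A^*$ with $m$ and $n$ exchanged and keeping whichever inequality carries the larger of $\lambda_m,\lambda_n$; that comparison step is the ingredient your write-up is missing.
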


\begin{proof}
 Since $\dm{\mathfrak I_{m,n}} \in \pi(\AA_I)$, Item 2. follows from the exponential clustering property \eqref{eq.exponential-clustering}. On the other hand,  $\dm{\mathfrak I^c_{m,n}}$ is not necessarily an element of $\pi(\AA_{I^c})$ and we must repeatedly use Hastings decomposition operators $ O_\Gamma$ for $\Gamma\in \{I,I^c,\partial I\}$. Firstly, we use \eqref{eq.factorization-almost-identity} to obtain
    \begin{align*}
        \left|  \expval{\pi(A) \dm {\mathfrak I^c_{m,n}}}{\psi} - \expval{O_{\partial I}O_{I^c} \pi(A) \dm {\mathfrak I^c_{m,n}} O_I}{\psi}   \right| \le 3Ce^{-c\ell}\norm{A}. 
    \end{align*}
    Since $A \in \AA_{I_{-\ell}}$, $[\pi (A),O_{\partial I}] = 0$, and $[\pi (A),O_{I^c}] = 0$.
   Hence,
    \begin{align} \label{eq.lemma-mutualdecay-1}
        \left|  \expval{\pi (A)\dm{\mathfrak I^c_{m,n}}} {\psi} - \expval{\pi (A) O_{\partial I} O_{I^c} O_I  \dm{\mathfrak I^c_{m,n}}}{ \psi}  \right| \le 3Ce^{-c\ell}\norm{A}. 
    \end{align}
    The result then follows from the factorization property \eqref{eq.approximation-to-projection}, by noting that $[O_I,O_{I^c}]=0$, and by updating the constants.
\end{proof}

\begin{proposition} \label{prop.closeness-of-density-matrices}

    Given a Schmidt decomposition \eqref{eq.schmidt-decomposition}, 
\begin{equation*}
        \norm{ \mathfrak I_{m,n}|_{{I_{-\ell}}} -\delta_{n,m} \psi|_{{I_{-\ell}}}  } \le \frac{1}{\sqrt{\lambda_m\lambda_n}} Ce^{-c\ell}, \qquad \text{and} \qquad   \norm{ \mathfrak I^c_{m,n}|_{{(I^c)_{-\ell}}} 
         -\delta_{n,m} \psi|_{{(I^c)_{-\ell}}} 
        } \le \frac{1}{\sqrt{\lambda_m\lambda_n}} Ce^{-c\ell},
    \end{equation*}
for any $n,m \le \sr{I}$.

    
\end{proposition}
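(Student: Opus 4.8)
The plan is to repackage Lemma~\ref{lem.hastings-decay-of-correlations}: two short computations with the Schmidt decomposition~\eqref{eq.schmidt-decomposition} turn the correlation estimates proved there into the stated bounds on the functionals $\mathfrak I_{m,n}$ and $\mathfrak I^c_{m,n}$, after which one only has to take a supremum over the unit ball of the appropriate local algebra. Throughout, we keep the GNS representation factorized with respect to $I\cup I^c$, as fixed at the beginning of Section~\ref{sec.exponential-decay-mutual-correlations}.

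First I would record how the operators in~\eqref{eq.off-diagonal-X-comp} and~\eqref{eq.off-diagonal-X} pair with $\ket\psi$. Applying $\dm{\mathfrak I^c_{m,n}} = \id_{\HH_I}\otimes\ket{\psi_m^{I^c}}\bra{\psi_n^{I^c}}$ and $\dm{\mathfrak I_{m,n}} = \ket{\psi_m^I}\bra{\psi_n^I}\otimes\id_{\HH_{I^c}}$ to $\ket\psi = \sum_j\sqrt{\lambda_j}\,\ket{\psi_j^I}\otimes\ket{\psi_j^{I^c}}$ and using $\langle\psi_k^I|\psi_l^I\rangle = \langle\psi_k^{I^c}|\psi_l^{I^c}\rangle = \delta_{kl}$ gives
\[
\dm{\mathfrak I^c_{m,n}}\ket\psi = \sqrt{\lambda_n}\,\ket{\psi_n^I}\otimes\ket{\psi_m^{I^c}}, \qquad \dm{\mathfrak I_{m,n}}\ket\psi = \sqrt{\lambda_n}\,\ket{\psi_m^I}\otimes\ket{\psi_n^{I^c}}.
\]
Pairing again with $\bra\psi$, and inserting $\pi(A) = \pi_I(A)\otimes\id$ for $A\in\AA_I$ (resp.\ $\pi(A) = \id\otimes\pi_{I^c}(A)$ for $A\in\AA_{I^c}$), one obtains $\expval{\pi(A)\dm{\mathfrak I^c_{m,n}}}{\psi} = \sqrt{\lambda_m\lambda_n}\,\mathfrak I_{n,m}(A)$ and $\expval{\dm{\mathfrak I^c_{m,n}}}{\psi} = \delta_{m,n}\lambda_n$ for $A\in\AA_I$, and symmetrically $\expval{\pi(A)\dm{\mathfrak I_{m,n}}}{\psi} = \sqrt{\lambda_m\lambda_n}\,\mathfrak I^c_{n,m}(A)$ and $\expval{\dm{\mathfrak I_{m,n}}}{\psi} = \delta_{m,n}\lambda_n$ for $A\in\AA_{I^c}$; here $\mathfrak I_{n,m}(A) = \langle\psi_m^I|\pi_I(A)|\psi_n^I\rangle$ and likewise for $\mathfrak I^c_{n,m}$.

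Next I would substitute these identities into Lemma~\ref{lem.hastings-decay-of-correlations}. Applying Item~1 with the roles of $m$ and $n$ exchanged, it reads, for $A\in\AA_{I_{-\ell}}$,
\[
\bigl|\sqrt{\lambda_m\lambda_n}\,\mathfrak I_{m,n}(A) - \delta_{n,m}\lambda_m\,\psi(A)\bigr| \le Ce^{-c\ell}\norm A .
\]
Dividing by $\sqrt{\lambda_m\lambda_n}>0$, using $\delta_{n,m}\lambda_m = \delta_{n,m}\sqrt{\lambda_m\lambda_n}$, and taking the supremum over $A\in\AA_{I_{-\ell}}$ with $\norm A = 1$ gives $\norm{\mathfrak I_{m,n}|_{I_{-\ell}} - \delta_{n,m}\psi|_{I_{-\ell}}} \le (\lambda_m\lambda_n)^{-1/2}Ce^{-c\ell}$, the first claimed bound; the second follows from Item~2 in exactly the same way. (This supremum over the unit ball is, by definition, the Banach-space norm of the functional, and equals the trace-class norm of the associated operator by the identity in Section~\ref{sec.norms-states}.)

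I do not expect a genuine obstacle here; the proposition is essentially a bookkeeping reformulation of Lemma~\ref{lem.hastings-decay-of-correlations}. The two points demanding care are: first, that $\expval{\dm{\mathfrak I^c_{m,n}}}{\psi} = \delta_{m,n}\lambda_n = \delta_{m,n}\sqrt{\lambda_m\lambda_n}$, so that after dividing by $\sqrt{\lambda_m\lambda_n}$ the Kronecker delta reappears with coefficient exactly $1$; and second, that the constants $C,c$ delivered by Lemma~\ref{lem.hastings-decay-of-correlations} are uniform in the Schmidt indices $m,n$ and in the interval $I$ — this holds because $\norm{\dm{\mathfrak I^c_{m,n}}} = \norm{\dm{\mathfrak I_{m,n}}} = 1$ and the constants there are the generic ones of Section~\ref{sec.exponential-decay-mutual-correlations}.
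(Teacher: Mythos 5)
Your proof is correct and follows essentially the same route as the paper: compute $\expval{\pi(A)\dm{\mathfrak I^c_{m,n}}}{\psi}=\sqrt{\lambda_m\lambda_n}\,\bra{\psi_m^I}\pi(A)\ket{\psi_n^I}$ from the Schmidt decomposition, feed this into Lemma~\ref{lem.hastings-decay-of-correlations}, and divide by $\sqrt{\lambda_m\lambda_n}$. Your version is merely more explicit about the $m\leftrightarrow n$ index bookkeeping and the identity $\expval{\dm{\mathfrak I^c_{m,n}}}{\psi}=\delta_{m,n}\sqrt{\lambda_m\lambda_n}$, which the paper leaves implicit.
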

\begin{proof} 
For $A \in \caA_{I}$, the definition of Schmidt decomposition yields
 $$  \expval{ \pi (A)  \dm {\mathfrak I^c_{m,n} }}{\psi}  =  \sqrt{\lambda_m \lambda_n}  \bra {\psi_m^I} \pi (A) \ket {\psi_n^I}.
 $$
Therefore the first bound follows from item 1 of Lemma \ref{lem.hastings-decay-of-correlations}. The second bound  follows in a similar way from item 2 of Lemma \ref{lem.hastings-decay-of-correlations}, exchanging the roles of $I$ and $I^c$.

\end{proof}
\subsection{Exponential decay of mutual correlations} \label{sec.mutual-correlations}
 In preparation for Theorem \ref{corollary.factorizing-infinite-regions}, we state the following related result.
\begin{lemma} \label{thm.mutual-correlation-decay}
 For every finite interval $I$, and all $\ell>0$, it holds that 
    \begin{equation}
        \norm{\psi|_{I_{-\ell}\cup (I^c)_{-\ell}} - \psi|_{I_{-\ell}} \otimes \psi|_{(I^c)_{-\ell}}} \le Ce^{-c\ell}.
    \end{equation}
\end{lemma}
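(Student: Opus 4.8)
The plan is to expand the global state $\psi$ in its Schmidt basis across the cut $\partial I$ and show that the off-diagonal cross terms contribute only exponentially small corrections to the restriction, leaving the diagonal piece which is precisely a product state. Concretely, writing the Schmidt decomposition \eqref{eq.schmidt-decomposition}, for an observable $A \otimes B$ with $A \in \AA_{I_{-\ell}}$ and $B \in \AA_{(I^c)_{-\ell}}$ one computes
\begin{equation*}
\psi(AB) = \sum_{m,n} \sqrt{\lambda_m \lambda_n}\, \bra{\psi^I_m}\pi(A)\ket{\psi^I_n}\, \bra{\psi^{I^c}_m}\pi(B)\ket{\psi^{I^c}_n} = \sum_{m,n} \sqrt{\lambda_m\lambda_n}\, \mathfrak I_{m,n}(A)\, \mathfrak I^c_{m,n}(B),
\end{equation*}
while the product state satisfies $(\psi|_{I_{-\ell}}\otimes\psi|_{(I^c)_{-\ell}})(AB) = \psi(A)\psi(B) = \big(\sum_m \lambda_m \mathfrak I_{m,m}(A)\big)\big(\sum_n \lambda_n \mathfrak I^c_{n,n}(B)\big)$. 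So the difference is controlled once we know that each $\mathfrak I_{m,n}$ restricted to $I_{-\ell}$ is close (in functional norm) to $\delta_{m,n}\psi$, and likewise for $\mathfrak I^c_{m,n}$ on $(I^c)_{-\ell}$ — which is exactly Proposition \ref{prop.closeness-of-density-matrices}.

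The subtlety, and the step I expect to be the main obstacle, is that the bound in Proposition \ref{prop.closeness-of-density-matrices} carries a prefactor $1/\sqrt{\lambda_m\lambda_n}$ which blows up for small Schmidt coefficients, and moreover the Schmidt rank $\sr{I}$ can be as large as $\dim(\caH_I)$, so naive summation over all $m,n$ is hopeless. The resolution is a truncation argument: fix a cutoff $k^* = k^*(\ell)$ to be optimized, split the sums into the "large" block $m,n < k^*$ and the "small" tail $m \ge k^*$ or $n \ge k^*$. On the large block, $\sqrt{\lambda_m\lambda_n} \cdot \frac{1}{\sqrt{\lambda_m\lambda_n}} Ce^{-c\ell} = Ce^{-c\ell}$, so each of the at most $(k^*)^2$ terms contributes $Ce^{-c\ell}$ to the difference of the two functionals (using $\mathfrak I_{m,n}|_{I_{-\ell}} \approx \delta_{m,n}\psi|_{I_{-\ell}}$ and the analogous statement for $\mathfrak I^c$, together with $|\mathfrak I_{m,n}(A)| \le \|A\|$ and $\lambda_m \le 1$ to absorb the remaining factors); the total large-block error is thus $C (k^*)^2 e^{-c\ell}$. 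For the tail, one uses Proposition \ref{prop.decay-of-schmidt-coeffs}: $\sum_{m \ge k^*}\lambda_m \le C(k^*)^{-\alpha}$, and Cauchy–Schwarz $\sum_{m,n}\sqrt{\lambda_m\lambda_n}\mathbbm 1[m\ge k^* \text{ or } n \ge k^*] \le 2\big(\sum_m \lambda_m\big)^{1/2}\big(\sum_{m\ge k^*}\lambda_m\big)^{1/2} \le C(k^*)^{-\alpha/2}$, bounding the tail contribution to $\psi(AB)$ by $C(k^*)^{-\alpha/2}\|A\|\|B\|$, and similarly the tail of $\psi(A)\psi(B)$.

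Putting the two pieces together gives, for unit-norm $A,B$,
\begin{equation*}
\left| \psi(AB) - \psi(A)\psi(B) \right| \le C\,(k^*)^2 e^{-c\ell} + C\,(k^*)^{-\alpha/2},
\end{equation*}
and choosing $k^* = e^{c'\ell}$ with $c'$ small enough (so that $(k^*)^2 e^{-c\ell} = e^{(2c'-c)\ell}$ still decays) makes both terms decay exponentially in $\ell$, yielding $|\psi(AB)-\psi(A)\psi(B)| \le Ce^{-c\ell}$ with updated constants. Since $\AA_{I_{-\ell}\cup(I^c)_{-\ell}} = \AA_{I_{-\ell}} \otimes \AA_{(I^c)_{-\ell}}$ is generated by such product elements (and a general norm-one element of the tensor product is a limit of sums of products, so one must be slightly careful and instead argue directly on the functional norm, e.g. by noting the difference of the two functionals is already determined on the algebraic tensor product and bounding its norm there), this establishes $\norm{\psi|_{I_{-\ell}\cup (I^c)_{-\ell}} - \psi|_{I_{-\ell}} \otimes \psi|_{(I^c)_{-\ell}}} \le Ce^{-c\ell}$. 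A clean way to handle the last point is to represent the functional difference via its density matrix on $\caH_{I_{-\ell}}\otimes\caH_{(I^c)_{-\ell}}$ and use that the trace norm equals the functional norm, so that the supremum over product observables computed above, combined with the fact that finite sums of products are norm-dense, suffices after a standard approximation.
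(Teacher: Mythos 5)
Your overall strategy is the same as the paper's: expand $\psi$ in the Schmidt basis across $I\cup I^c$, control the block $m,n< k^*$ via Proposition \ref{prop.closeness-of-density-matrices} (where the $\sqrt{\lambda_m\lambda_n}$ prefactor exactly cancels the $1/\sqrt{\lambda_m\lambda_n}$ in that proposition), control the tail via Proposition \ref{prop.decay-of-schmidt-coeffs}, and optimize $k^*=e^{b\ell}$. However, your tail estimate contains a genuine error. The displayed inequality
\begin{equation*}
\sum_{m,n}\sqrt{\lambda_m\lambda_n}\,\id[m\ge k^*\ \text{or}\ n\ge k^*]\ \le\ 2\Big(\sum_m\lambda_m\Big)^{1/2}\Big(\sum_{m\ge k^*}\lambda_m\Big)^{1/2}
\end{equation*}
is false: the left side equals (up to a factor $2$) $\big(\sum_{m\ge k^*}\sqrt{\lambda_m}\big)\big(\sum_n\sqrt{\lambda_n}\big)$, and $\sum_n\sqrt{\lambda_n}$ is only bounded by $\sqrt{\sr{I}}$, which grows like $\sqrt{\dim\caH_I}$. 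For a flat spectrum $\lambda_j=1/N$, $j\le N$, with $k^*=N/2$, the left side is $N/2$ while your right side is $O(1)$. Consequently the term-by-term absolute-value bound $|\mathfrak I_{m,n}(A)\mathfrak I^c_{m,n}(B)|\le\norm{A}\norm{B}$ followed by summation of $\sqrt{\lambda_m\lambda_n}$ over the tail cannot close the argument: the off-diagonal tail weights are not $\ell^1$-summable. (The diagonal tail of $\psi(A)\psi(B)$, which carries weights $\lambda_m\lambda_n$, is fine.)

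The fix is exactly what the paper does: do not estimate the tail term by term, but truncate at the level of the GNS vector. Setting $\ket{\psi^{[k^*]}}=\mathfrak n^{-1}\sum_{j\le k^*}\sqrt{\lambda_j}\ket{\psi_j^I}\otimes\ket{\psi_j^{I^c}}$ with $\mathfrak n^2=\sum_{j\le k^*}\lambda_j$, one has $\norm{\ket\psi-\ket{\psi^{[k^*]}}}^2\le C(k^*)^{-\alpha}$ by Proposition \ref{prop.decay-of-schmidt-coeffs}, hence $\norm{\psi-\psi^{[k^*]}}\le 2\norm{\ket\psi-\ket{\psi^{[k^*]}}}\le C(k^*)^{-\alpha/2}$; this Hilbert-space (Cauchy--Schwarz at the level of vectors, not of Schmidt coefficients) estimate absorbs the entire tail, and the remaining double sum runs only over $m,n\le k^*$, where your large-block argument applies verbatim. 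Finally, your closing remark that the trace norm can be recovered from the supremum over product observables is also not correct; but this is harmless, because your large-block estimate is already a bound on the \emph{functional norm} of each summand $\sqrt{\lambda_m\lambda_n}\,\mathfrak I_{m,n}|_{I_{-\ell}}\otimes\mathfrak I^c_{m,n}|_{(I^c)_{-\ell}}$ (using that the norm of a tensor product of normal functionals is the product of the norms), and the triangle inequality over the finitely many terms then bounds the full functional norm without any appeal to product observables.
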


\begin{proof}
    We consider a Schmidt decomposition as in \eqref{eq.schmidt-decomposition}, and define the pure state truncations
    \begin{equation}
        \ket{\psi^{[k^*]}} := \dfrac 1{\mathfrak n} \sum\limits_{j = 1}^{k^*} \sqrt{\lambda_j}\ \ket{\psi_j^{I}} \otimes \ket{\psi_j^{I^c}}
    \end{equation}
    parametrized by $k^* \in \NN$, where $\mathfrak n^2=\sum_{j=1}^{k^*} \lambda_j$. 
    {For $j>\mathrm{sr}(I)$, we set $\lambda_j=0$.}
    By Proposition \ref{prop.decay-of-schmidt-coeffs}, it follows that
    $1 \ge \mathfrak n \ge 1-C(k^*)^{-\alpha} $ and
    \begin{align*}
        \norm{\ket \psi - \ket {\psi^{[k^*]}}}^2 &= 
        \sum\limits_{j=1}^{k^*} (1-1/\mathfrak n)^2\lambda_j + \sum\limits_{j > k^*} \lambda_j \le C(k^*)^{-\alpha}.
    \end{align*}
    It follows that
    \begin{align}
        \nonumber \norm{\psi - \psi^{[k^*]}} & \le 2\norm{\ket \psi - \ket{\psi^{[k^*]}}} \le  C(k^*)^{-\alpha/2}. 
    \end{align}
    We proceed by bounding
    \begin{align}
        \nonumber \norm{\psi^{[k^*]}|_{I_{-\ell} \cup (I^c)_{-\ell}} - \psi|_{I_{-\ell}} \otimes \psi|_{(I^c)_{-\ell}}}  &= \BigglVert \sum\limits_{m,n=1}^{k^*} \frac 1{\fk n^2} \sqrt{\lambda_m\lambda_n} (\mathfrak I_{m,n}\otimes \mathfrak I^c_{m,n})|_{I_{-\ell} \cup (I^c)_{-\ell}}- \\ \nonumber &- \sum\limits_{n=1}^{k^*} \lambda_n \psi|_{I_{-\ell}} \otimes \psi|_{(I^c)_{-\ell}}\BiggrVert         
        \nonumber + \norm{\sum\limits_{n=k^*+1}^\infty \lambda_n \psi|_{I_{-\ell}} \otimes \psi|_{(I^c)_{-\ell}}} \\ 
        \label{eq.mutual-correlation-bound1}&\le \sum\limits_{n=1}^{k^*} \lambda_n\norm{\frac{1}{\fk n^2} \mathfrak I_{n,n}|_{I_{-\ell}} \otimes \mathfrak I^c_{n,n}|_{(I^c)_{-\ell}}  - \psi|_{I_{-\ell}} \otimes \psi|_{(I^c)_{-\ell}} } \\
        &\label{eq.mutual-correlation-bound2}+ \sum\limits_{\substack{m,n=1 \\ m\neq n} }^{k^*} \frac 1{\fk n^2} \sqrt{\lambda_m\lambda_n} \norm{\mathfrak I_{m,n}|_{I_{-\ell}} \otimes \mathfrak I^c_{m,n}|_{(I^c)_{-\ell}}} \\
        &\label{eq.mutual-correlation-bound3}+ \sum\limits_{n=k^*+1}^{\infty} \lambda_n.
    \end{align}

\begin{enumerate}
    \item The first term \eqref{eq.mutual-correlation-bound1} is upper bounded by 
    \begin{align*}
        &\sum\limits_{n=1}^{k^*} \lambda_n \left( \frac{1}{\fk n^2} - 1  + \norm{\mathfrak I_{n,n}|_{I_{-\ell}} - \psi|_{I_{-\ell}} } + \norm{\mathfrak I^c_{n,n}|_{(I^c)_{-\ell}} - \psi|_{(I^c)_{-\ell}}} \right) \\
        &\le \left(\frac{1}{\fk n^2} - 1   \right) + \sum\limits_{n=1}^{\min(\sr I,k^*)} \lambda_n \left( \frac 2 {\lambda_n} C e^{-c\ell} \right) \\
        &\le C (k^*)^{-\alpha }+ 2k^*Ce^{-c\ell}. 
    \end{align*}
    by  Proposition \ref{prop.closeness-of-density-matrices}

 \item
The second term \eqref{eq.mutual-correlation-bound2} includes off-diagonal linear functionals, and it is upper bounded by 
\begin{align*}
    \frac 1{\fk n^2} k^*(k^*-1) Ce^{-c\ell}
\end{align*}
by  Proposition \ref{prop.closeness-of-density-matrices}.

\item Finally, the third term \eqref{eq.mutual-correlation-bound3} can be upper bounded by $C(k^*)^{-\alpha}$ according to the decay \eqref{eq.decay-schmidt-coefficients}. 
\end{enumerate}

\noindent Hence we conclude:
\begin{align*} 
    \norm{\psi|_{I_{-\ell}\cup (I^c)_{-\ell}} - \psi|_{I_{-\ell}}\otimes \psi|_{(I^c)_{-\ell}}}
    &\le   C (k^*)^2 e^{-c\ell} + C(k^*)^{-\alpha}.
\end{align*}

Let us choose $k^* = e^{b\ell}$ for some $0<b<c/2$ with $c$ the constant in the first term in the above line,  then the above sum is bounded by $Ce^{-c \ell}$.

\end{proof}
Now we are finally ready to state and prove the actual 
\begin{theorem} [Exponential decay of mutual correlations] \label{corollary.factorizing-infinite-regions}
    For every $x \in \ZZ$, and every $\ell>0$:
    \begin{equation}
        \norm{\psi|_{B_\ell(x)^c} - \psi|_{\le x-\ell} \otimes \psi|_{\ge x+\ell}} \le Ce^{-c\ell}. 
    \end{equation}
\end{theorem}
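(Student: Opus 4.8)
The plan is to deduce the statement from Lemma~\ref{thm.mutual-correlation-decay} by applying that lemma to a large \emph{finite} interval and then passing to the half-chain limit; the only subtlety is some off-by-one bookkeeping with the regions, which gets absorbed into the constants.

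First I would fix $x\in\ZZ$ and a sufficiently large $\ell$ (the finitely many small values of $\ell$ are harmless since $\|\varphi-\varphi'\|\le 2$ for any two states and $C$ may be chosen large). For large parameters $M,N$, apply Lemma~\ref{thm.mutual-correlation-decay} to the finite interval $I=[-M,x]$. A direct computation gives $I_{-\ell}=[-M+\ell,\,x-\ell]$ and $(I^c)_{-\ell}=(\le -M-\ell-1)\cup(\ge x+\ell+1)$, so the lemma yields
\begin{equation*}
  \norm{\psi|_{I_{-\ell}\cup(I^c)_{-\ell}}-\psi|_{I_{-\ell}}\otimes\psi|_{(I^c)_{-\ell}}}\le Ce^{-c\ell},
\end{equation*}
with $C,c$ independent of $M,N,x$. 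Since $\caA_{[-M+\ell,\,x-\ell]\cup[x+\ell+1,\,N]}$ is a subalgebra of $\caA_{I_{-\ell}\cup(I^c)_{-\ell}}$, restriction does not increase the functional norm, and $\psi|_{I_{-\ell}}$, $\psi|_{(I^c)_{-\ell}}$ restrict to $\psi|_{[-M+\ell,\,x-\ell]}$, $\psi|_{[x+\ell+1,\,N]}$ on the relevant factors, we obtain
\begin{equation*}
  \norm{\psi|_{[-M+\ell,\,x-\ell]\cup[x+\ell+1,\,N]}-\psi|_{[-M+\ell,\,x-\ell]}\otimes\psi|_{[x+\ell+1,\,N]}}\le Ce^{-c\ell}.
\end{equation*}

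Next I would let $M,N\to\infty$. The algebras $\caA_{[-M+\ell,\,x-\ell]\cup[x+\ell+1,\,N]}$ exhaust a norm-dense subalgebra of $\caA_{(\le x-\ell)\cup(\ge x+\ell+1)}$, and on each of them the functional above is precisely the restriction of $\zeta:=\psi|_{(\le x-\ell)\cup(\ge x+\ell+1)}-\psi|_{\le x-\ell}\otimes\psi|_{\ge x+\ell+1}$. Because the norm of a bounded linear functional on a $C^*$-algebra equals its supremum over a norm-dense subalgebra (the same elementary fact used in Section~\ref{sec.norms-states}), we conclude $\|\zeta\|\le Ce^{-c\ell}$, i.e.
\begin{equation*}
  \norm{\psi|_{(\le x-\ell)\cup(\ge x+\ell+1)}-\psi|_{\le x-\ell}\otimes\psi|_{\ge x+\ell+1}}\le Ce^{-c\ell},\qquad x\in\ZZ,\ \ell\ge 1.
\end{equation*}
Applying this with $\ell$ replaced by $\ell-1$ and restricting the functional once more, now to the subalgebra $\caA_{(\le x-\ell)\cup(\ge x+\ell)}=\caA_{B_\ell(x)^c}$ (using that $\psi|_{\le x-\ell+1}$ restricts to $\psi|_{\le x-\ell}$), gives $\norm{\psi|_{B_\ell(x)^c}-\psi|_{\le x-\ell}\otimes\psi|_{\ge x+\ell}}\le Ce^{-c(\ell-1)}$, which is of the asserted form after updating $C$.

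I expect the only real care needed to be the region bookkeeping — tracking the various $\pm\ell$ and $\pm 1$ shifts and which $\caA_Y$ is a subalgebra of which — while the whole analytic content is carried by Lemma~\ref{thm.mutual-correlation-decay} and the passage to the half-chains is just the observation that functional norms are determined on dense subalgebras. An alternative would be to reprove Lemma~\ref{thm.mutual-correlation-decay} directly for a half-line $I$ (the Schmidt decomposition and Proposition~\ref{prop.decay-of-schmidt-coeffs} are available across a single cut), but the limiting argument above has the advantage of not requiring one to revisit the Hastings factorization of Theorem~\ref{thm.hastings-factorization} in that setting.
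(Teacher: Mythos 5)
Your proposal is correct and takes essentially the same route as the paper: both deduce the theorem from Lemma~\ref{thm.mutual-correlation-decay} applied to a suitably chosen finite interval with one endpoint at the cut $x$ (the paper takes $I=[x,x+a+\ell]$ adapted to each local observable and invokes density directly, while you take $I=[-M,x]$ and pass to the limit $M,N\to\infty$), with the uniformity of the constants in $I$ doing the real work in either case. The off-by-one shifts you track are indeed the only bookkeeping issue, and they are absorbed into the constants exactly as you say.
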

\begin{proof}
By density, it suffices to prove 
\begin{equation}\label{eq: to show for mutual decay}
     \left| \psi(A) - \psi|_{\le x - \ell} \otimes \psi|_{\ge x+\ell} (A) \right| \leq \norm{A} Ce^{-c\ell},
\end{equation}
for  $A \in \AA^{\text{loc}}_{B_\ell(x)^c}$, and with $c,C$ uniform in $A$. 
Let $a>\ell $ be such that $A \in \AA_{[x-a,x-\ell]} \otimes \AA_{[x+\ell,x+a]}$. We then choose $I = [x,x+a+\ell]$, 
and \eqref{eq: to show for mutual decay} now follows from Lemma \ref{thm.mutual-correlation-decay}. 
    
\end{proof}

\section{Cutting states}

\noindent In this section, we use the properties discussed in section \ref{sec.mutual-correlations} to establish the following essential intermediate result.

\begin{proposition} \label{prop.cutting}
    For every $x \in \ZZ$, there is a unitary $U^{(x)} \in \caA$ that is exponentially anchored at $x$, and
   pure states $\psi_{\le x} \in \mathcal {S} (\AA_{\le x})$, $\psi_{>x} \in \mathcal{S} (\AA_{>x})$, such that 
    \begin{equation}
        \psi = \psi_{\le x} \otimes \psi_{>x} \circ \Ad{U^{(x)}}. 
    \end{equation}
    Moreover, the constants $C< \infty,c>0$ in the definition of exponential anchoring can be taken uniformly in $x \in \ZZ$. 
\end{proposition}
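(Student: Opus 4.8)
Fix $x\in\ZZ$. Using the split property (Proposition~\ref{prop: split}) I would realize the GNS triple of $\psi$ in the factorized form $(\HH_{\le x}\otimes\HH_{>x},\ \pi_{\le x}\otimes\pi_{>x},\ \ket\psi)$ with both factor representations irreducible. The strategy has two parts: (i) construct an explicit pure \emph{product} state $\phi=\psi_{\le x}\otimes\psi_{>x}$, factorized exactly at the cut $x$, that agrees with $\psi$ exponentially well away from $x$; (ii) invoke the lemma of \cite{sopenko_chiral_2023} to turn this approximate agreement into the exact identity $\psi=\phi\circ\Ad{U^{(x)}}$ for a unitary $U^{(x)}\in\caA$ exponentially anchored at $x$.

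\textbf{Constructing $\phi$.} I would take the Schmidt decomposition $\ket\psi=\sum_j\sqrt{\lambda_j}\,\ket{\psi_j^{\le x}}\otimes\ket{\psi_j^{>x}}$ relative to $\HH_{\le x}\otimes\HH_{>x}$ — the material of Section~\ref{sec.schmidt} carries over to the half-line bipartition, cf.\ \cite{Matsui2013_EntEnt_SplitProp}. Proposition~\ref{prop.decay-of-schmidt-coeffs} forces a lower bound on the top Schmidt coefficient that is uniform in $x$: choosing $k^*$ with $C(k^*)^{-\alpha}\le\tfrac12$ gives $\sum_{j<k^*}\lambda_j\ge\tfrac12$, hence $\lambda_1\ge\tfrac1{2k^*}=:c_0>0$. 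Let $\psi_{\le x}$ and $\psi_{>x}$ be the pure states on $\AA_{\le x}$ and $\AA_{>x}$ carried by the unit vectors $\ket{\psi_1^{\le x}}$ and $\ket{\psi_1^{>x}}$ in the irreducible representations $\pi_{\le x}$ and $\pi_{>x}$, and set $\phi:=\psi_{\le x}\otimes\psi_{>x}$, a pure product state factorized exactly at $x$. Since $\psi_{\le x}$ has density matrix $\ketbra{\psi_1^{\le x}}$ while $\psi|_{\le x}$ has density matrix $\sum_j\lambda_j\ketbra{\psi_j^{\le x}}$, the half-line version of Proposition~\ref{prop.closeness-of-density-matrices} (with $I=\le x$) gives $\norm{\psi_{\le x}|_{\le x-\ell}-\psi|_{\le x-\ell}}\le\lambda_1^{-1}Ce^{-c\ell}\le Ce^{-c\ell}$, and symmetrically on the right half. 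Combining these with the decay of mutual correlations (Theorem~\ref{corollary.factorizing-infinite-regions}) and the elementary bound $\norm{a\otimes b-a'\otimes b'}\le\norm{a-a'}+\norm{b-b'}$ for states, I would conclude
\begin{equation*}
\norm{\phi|_{B_\ell(x)^c}-\psi|_{B_\ell(x)^c}}\le Ce^{-c\ell},\qquad \ell>0,
\end{equation*}
with $C,c$ independent of $x$.

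\textbf{From approximate to exact.} Both $\psi$ and $\phi$ are pure, $\psi$ satisfies the split property at $x$ and $\phi$ does so trivially, and the estimate above says they agree on $\AA_{B_\ell(x)^c}$ up to $Ce^{-c\ell}$. The lemma of \cite{sopenko_chiral_2023} then produces a unitary $U^{(x)}\in\caA$, exponentially anchored at $x$, with $\psi=\phi\circ\Ad{U^{(x)}}=(\psi_{\le x}\otimes\psi_{>x})\circ\Ad{U^{(x)}}$. Because every input constant (from exponential clustering, from the Hastings factorization, and from Theorem~\ref{corollary.factorizing-infinite-regions}) is uniform in $x$, the anchoring constants $C,c$ of $U^{(x)}$ are as well.

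\textbf{Where the difficulty lies.} The non-routine ingredient is the cited lemma: passing from corrections defined scale by scale ($\ell=1,2,\dots$) to a single genuine unitary of $\caA$ that is exponentially anchored at $x$ requires choosing the scale-$\ell$ correction so that its increment is simultaneously close to the identity (to obtain norm convergence in $\caA$) and effectively supported in $B_\ell(x)$ (to obtain the anchoring), and this control must be kept uniform in $x$. A secondary, more bookkeeping point is checking that Propositions~\ref{prop.decay-of-schmidt-coeffs} and~\ref{prop.closeness-of-density-matrices} hold for a half-line with constants uniform in $x$; this follows from the arguments of Section~\ref{sec.schmidt} together with the infinite-volume split property, but deserves an explicit remark.
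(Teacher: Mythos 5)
Your architecture coincides with the paper's: the candidate state is the same (the paper takes the factorized top eigenvector of $\dm{\psi|_{\le x}}\otimes\dm{\psi|_{>x}}$, which is exactly the product $\ket{\psi_1^{\le x}}\otimes\ket{\psi_1^{>x}}$ of top half-line Schmidt vectors that you propose), and the final step is the same appeal to Theorem~\ref{thm.connecting-states}. The problem is that both of your quantitative inputs --- the uniform lower bound on $\lambda_1$ and the estimate $\norm{\psi_{\le x}|_{\le x-\ell}-\psi|_{\le x-\ell}}\le \lambda_1^{-1}Ce^{-c\ell}$ --- are obtained by invoking ``half-line versions'' of Proposition~\ref{prop.decay-of-schmidt-coeffs} and Proposition~\ref{prop.closeness-of-density-matrices}, which you dismiss as bookkeeping. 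This is a genuine gap, not bookkeeping: the entire chain Theorem~\ref{thm.hastings-factorization} $\to$ Lemma~\ref{lem.hastings-decay-of-correlations} $\to$ Proposition~\ref{prop.closeness-of-density-matrices} is set up only for \emph{finite} intervals $I$, where $\dim\HH_I<\infty$ and $\pi(\AA_I)=\caB(\HH_I)$. For $I=(-\infty,x]$ the rank-one operator $\ket{\psi_m^{I}}\bra{\psi_n^{I}}$ lies in $\pi(\AA_I)''$ but not in $\pi(\AA_I)$, and it is not norm-approximable by local observables; hence the exponential clustering bound \eqref{eq.exponential-clustering}, which is a statement about elements of $\AA_X$ and $\AA_Y$, cannot be applied to it (this is exactly the obstruction that forces the two-case structure of Lemma~\ref{lem.hastings-decay-of-correlations} already for finite $I$). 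Likewise the factorization operators of Theorem~\ref{thm.hastings-factorization} are only provided for finite $I$. So the key estimates you need are precisely the ones that do not transfer by inspection.

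What the paper actually does is engineer around this. For the eigenvalue bound (Lemma~\ref{lem.bound-eigenvalue-restrictions}) it compares the spectrum of $\dm{\psi|_{\le x}\otimes\psi|_{>x}}|_{B_\ell(x)^c}$ with that of $\dm{\psi}|_{B_\ell(x)^c}$ via Hoffmann--Wielandt and Theorem~\ref{corollary.factorizing-infinite-regions}, and then passes from the restricted spectrum to $\mu_1=\lambda_1^2$ by Ky Fan's principle at a \emph{fixed} finite $\ell$, so that only the finite-interval Proposition~\ref{prop.decay-of-schmidt-coeffs} is ever used. For the closeness statement (Lemma~\ref{prop: exp close factorized pure}) it does not compare $\ketbra{\psi_1^{\le x}\otimes\psi_1^{>x}}$ with $\psi$ directly; instead it uses Lemma~\ref{prop.krauss} to write an approximation of $\dm{\psi|_{\le x}\otimes\psi|_{>x}}$ whose eigenvectors are of the form $M_j\ket\psi$ with $M_j$ supported in $B_\ell(x)$, uses the spectral stability Lemma~\ref{lem: two density matrices} to expand $\ket{\psi_{\le x}\otimes\psi_{>x}}$ in finitely many such vectors, and only then applies clustering to the genuinely local operators $M_i^*\pi(A)M_j$. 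If you want to keep your shorter route, you must either prove the half-line analogues of Theorem~\ref{thm.hastings-factorization} and Lemma~\ref{lem.hastings-decay-of-correlations} (with constants uniform in $x$) or cite a reference that contains them; as written, the step ``the material of Section~\ref{sec.schmidt} carries over to the half-line bipartition'' is where the main technical content of the paper's proof has been silently assumed.
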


Our main task is to construct the factorized pure state $\psi_{\le x} \otimes \psi_{>x}$ that is normal w.r.t.\ $\psi$. Once we know that such a state exists and is exponentially close to $\psi$ far from the origin, we can invoke Theorem \ref{thm.connecting-states} from \cite{sopenko_chiral_2023}. 

\subsection{Constructing $\psi_{\le x} \otimes \psi_{>x}$ } \label{sec.constructing}

 We start from the factorized mixed state 
 $ \psi|_{\le x} \otimes \psi|_{> x}$, which, by the split property, is normal with respect to $\psi$ (hence can be identified with a density matrix, see Section \ref{sec.norms-states}).  We need to find now a state with the same nice properties as $ \psi|_{\le x} \otimes \psi|_{\ge x}$, but which is also pure. 

For the duration of section \ref{sec.constructing}, we assume the GNS representation to be factorized between $x$ and $x+1$, but also between $x-\ell$ and $x-\ell+1$, and between $x+\ell$ and $x+\ell-1$, as discussed in section \ref{sec: split property}. Let $\mu_1\geq \mu_2 \geq\ldots $ be the eigenvalues of the density matrix 
$\dm{\psi|_{\leq x} \otimes \psi|_{> x}} \in \mathcal B_1(\HH)$. 

\begin{lemma} \label{lem.bound-eigenvalue-restrictions}
    There is a lower bound
    \begin{equation}
       \mu_1 > c> 0.
    \end{equation}
\end{lemma}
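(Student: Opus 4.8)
The plan is to rewrite $\mu_1$ in terms of the Schmidt eigenvalues of $\ket\psi$ at the cut between $x$ and $x+1$, and then to obtain a uniform lower bound on the largest such eigenvalue by combining the \emph{finite}-interval Schmidt decay of Proposition~\ref{prop.decay-of-schmidt-coeffs} with the exponential decay of mutual correlations of Theorem~\ref{corollary.factorizing-infinite-regions}. The eigenvalues $\mu_j$ of the density matrix representing the normal functional $\psi|_{\le x}\otimes\psi|_{>x}$ are independent of the choice of factorized GNS representation of $\psi$, so it suffices to produce the bound in some convenient one; we take a representation factorized as $\HH = \HH_{\le x}\otimes\HH_{[x+1,x+L]}\otimes\HH_{\ge x+L+1}$, for a large odd $L$ (depending only on $\Phi$ and $\psi$) to be fixed below, which exists by the multi-interval split property \eqref{eq.split-many-intervals}.

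First, $\dm{\psi|_{\le x}\otimes\psi|_{>x}} = \rho_{\le x}\otimes\rho_{>x}$, where $\rho_{\le x} = \Tr_{>x}(\ket\psi\bra\psi)$ and $\rho_{>x} = \Tr_{\le x}(\ket\psi\bra\psi)$ are the density matrices of $\psi|_{\le x}$ and $\psi|_{>x}$ on $\HH_{\le x}$ and on $\HH_{[x+1,x+L]}\otimes\HH_{\ge x+L+1}$ respectively. Being reduced density matrices of the single unit vector $\ket\psi$, they have the same spectrum; write $\nu_1 := \norm{\rho_{\le x}}_\infty = \norm{\rho_{>x}}_\infty$ for its largest value. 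Then $\mu_1 = \norm{\rho_{\le x}\otimes\rho_{>x}}_\infty = \nu_1^2$, so it is enough to show $\nu_1 \ge c > 0$ uniformly in $x$.

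Put $I := [x+1,x+L]$, so $I^c = (\le x)\cup(\ge x+L+1)$ and $\HH_{I^c} = \HH_{\le x}\otimes\HH_{\ge x+L+1}$. The density matrix of $\psi|_{I^c}$ on $\HH_{I^c}$ is $\rho_{I^c} = \Tr_I(\ket\psi\bra\psi)$, whose eigenvalues are exactly the Schmidt eigenvalues $\lambda_j$ of $\ket\psi$ at the bipartition $I\cup I^c$. By Proposition~\ref{prop.decay-of-schmidt-coeffs}, choosing $k^*$ with $C(k^*)^{-\alpha}\le\tfrac12$ gives $\sum_{j=1}^{k^*-1}\lambda_j\ge\tfrac12$, and hence a top Schmidt eigenvalue $\lambda_1 \ge \frac{1}{2(k^*-1)} =: c_0 > 0$ that is uniform in $x$ and $L$. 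On the other hand, $\rho_{\le x}\otimes\rho_{\ge x+L+1}$, with $\rho_{\ge x+L+1} := \Tr_{\le x+L}(\ket\psi\bra\psi)$, is precisely the density matrix of $\psi|_{\le x}\otimes\psi|_{\ge x+L+1}$ on $\HH_{I^c}$. Applying Theorem~\ref{corollary.factorizing-infinite-regions} at the site $x+\tfrac{L+1}{2}$ with radius $\tfrac{L+1}{2}$ — so that the complement of that ball is exactly $I^c$, which uses $L$ odd — and using that the functional norm of a normal functional equals the trace norm of its density matrix (as in Section~\ref{sec.norms-states}), we get
\begin{equation*}
    \norm{\rho_{I^c} - \rho_{\le x}\otimes\rho_{\ge x+L+1}}_1 \le C e^{-c(L+1)/2} =: \delta .
\end{equation*}
For self-adjoint trace-class operators one has $|\lambda_{\max}(A)-\lambda_{\max}(B)| \le \norm{A-B}_\infty \le \norm{A-B}_1$, while $\lambda_{\max}(\rho_{\le x}\otimes\rho_{\ge x+L+1}) = \nu_1\,\norm{\rho_{\ge x+L+1}}_\infty \le \nu_1$; therefore $\nu_1 \ge \lambda_1 - \delta \ge c_0 - \delta$. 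Fixing $L$ (depending only on $\Phi,\psi$) large enough that $\delta < c_0/2$ yields $\nu_1 \ge c_0/2$, and hence $\mu_1 = \nu_1^2 \ge c_0^2/4$, uniformly in $x$.

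The one genuine difficulty — and the reason the analysis of Section~\ref{sec.exponential-decay-mutual-correlations} is needed at this point — is that Proposition~\ref{prop.decay-of-schmidt-coeffs} controls the Schmidt spectrum only of \emph{finite} intervals, whereas $\mu_1$ is governed by the half-line cut $x\,|\,x+1$, whose Schmidt rank may well be infinite. The decay of mutual correlations is exactly what bridges this: tracing out a large finite block $I$ around $x$ leaves a state on $I^c$ that is exponentially close to a product across the two widely separated halves, so the controlled top eigenvalue $\lambda_1$ of $\rho_{I^c}$ is transferred, up to an exponentially small error, to the top eigenvalue $\nu_1$ of the left half-line reduced density matrix. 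The remaining ingredients — the tensor-product form of $\dm{\psi|_{\le x}\otimes\psi|_{>x}}$, the equality of the two half-line reduced spectra, and Weyl's inequality for the top eigenvalue — are entirely routine.
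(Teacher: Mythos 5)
Your proof is correct, and it rests on the same two external inputs as the paper's (the finite-interval Schmidt bound of Proposition~\ref{prop.decay-of-schmidt-coeffs} and the mutual-correlation decay of Theorem~\ref{corollary.factorizing-infinite-regions}), but the linear algebra you use to lift the bound from the restricted density matrix to $\mu_1$ itself is genuinely different. The paper compares the full spectra of $\dm{\psi}|_{B_\ell(x)^c}$ and $\dm{\psi|_{\le x}\otimes\psi|_{>x}}|_{B_\ell(x)^c}$ via the Hoffmann--Wielandt inequality and then invokes a corollary of Ky Fan's maximum principle, $\mu_1 \ge \mu_1(\ell)/\dim(\caH_{B_\ell(x)})$, paying a (harmless, since $\ell$ is fixed) dimension factor. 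You instead exploit the explicit product structure $\dm{\psi|_{\le x}\otimes\psi|_{>x}}=\rho_{\le x}\otimes\rho_{>x}$ together with the purity of $\psi$ (equal nonzero spectra of the two half-line reduced density matrices) to get the exact identity $\mu_1=\nu_1^2$, and then you only need Weyl's inequality for the top eigenvalue rather than a full spectral comparison; the observation $\lambda_{\max}(\rho_{\le x}\otimes\rho_{\ge x+L+1})\le \nu_1$ replaces Ky Fan entirely. Your route is slightly more elementary and self-contained (no citation to Hoffmann--Wielandt or to the Ky Fan corollary is needed), at the cost of having to set up the three-way factorized representation and check the parity bookkeeping for $B_{(L+1)/2}(x+(L+1)/2)^c=I^c$; both arguments yield a constant uniform in $x$, as required for the later use of the lemma.
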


\begin{proof}

We denote by 
$$
 \{\lambda_j({\ell})\}_j, \qquad     \{\mu_j({\ell})\}_j
$$
the eigenvalues of $\dm{\psi}|_{B_\ell(x)^c}$ and of $\dm{\psi|_{\le x} \otimes \psi|_{> x }}|_{B_\ell(x)^c}$.
Note that we have encountered the eigenvalues $\lambda_j(\ell)$ already in Section \ref{sec.schmidt}, when $I=B_\ell(x)$. We have
\begin{align} \label{eq: bound on difference eigenvalues} 
     \sum\limits_{j=1}^\infty |\lambda_j(\ell) - \mu_j(\ell))| &\le \norm{ \dm{\psi}|_{B_\ell(x)^c} - \dm{\psi|_{\le x} \otimes \psi|_{> x}}|_{B_\ell(x)^c}}_1 
    \leq  Ce^{-c\ell},
\end{align}
where we have used the Hoffmann-Wielandt's inequality (equation 5.12 of \cite{Markus_1964}), and Theorem \ref{corollary.factorizing-infinite-regions} along with the discussion in section \ref{sec: split property} for the second inequality.
By a corollary of Ky Fan's Maximum Principle (Lemma 4.2 of \cite{DAFTUAR200580}), it holds that, for any $\ell>0$, 
\begin{equation} \label{eq: ky fan}
    \mu_1 \ge \dfrac{\mu_1(\ell) }{\mathrm{dim}(\caH_{B_\ell(x)})}. 
\end{equation}
Since, by
Proposition \ref{prop.decay-of-schmidt-coeffs}, we have 
$ \lambda_1 >c>0$, we can combine \eqref{eq: bound on difference eigenvalues} and \eqref{eq: ky fan} to get the claim of the lemma.
\end{proof}

Since 
$$\dm{\psi|_{\le x} \otimes \psi|_{>x}} = \dm{\psi|_{\le x}} \otimes \dm{\psi|_{>x}} \in \mathcal B_1(\HH_{\le x}) \otimes \mathcal B_1(\HH_{>x})$$
is factorized, we can choose a factorized eigenstate corresponding to its largest eigenvalue $\mu_1$, and we denote it by $|\psi_{\le x} \rangle \otimes |\psi_{> x}\rangle \in \caH_{\le x} \otimes \caH_{> x}$.
It turns out that this inherits the exponential closeness to $\psi$: 
\begin{lemma}\label{prop: exp close factorized pure}
    The pure state $\psi_{\le x} \otimes \psi_{>x}$ is exponentially close to $\psi$ far from $x \in \ZZ$.
\end{lemma}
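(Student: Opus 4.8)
The plan is to control the distance between the states $\psi$ and $\psi_{\le x}\otimes\psi_{>x}$ when restricted to the algebra $\AA_{B_m(x)^c}$ for large $m$, by going through the intermediate object $\psi|_{\le x}\otimes\psi|_{>x}$. Both states are normal with respect to $\psi$, so by Section~\ref{sec.norms-states} it suffices to bound trace-class norms of differences of density matrices restricted to $B_m(x)^c$. I would split
\begin{equation*}
\|\psi|_{B_m(x)^c}-(\psi_{\le x}\otimes\psi_{>x})|_{B_m(x)^c}\|\le\|\psi|_{B_m(x)^c}-(\psi|_{\le x}\otimes\psi|_{>x})|_{B_m(x)^c}\|+\|(\psi|_{\le x}\otimes\psi|_{>x})|_{B_m(x)^c}-(\psi_{\le x}\otimes\psi_{>x})|_{B_m(x)^c}\|.
\end{equation*}
The first term is exactly (the functional-norm form of) Theorem~\ref{corollary.factorizing-infinite-regions}, up to translating between $B_m(x)^c$ and $\{\le x-m\}\cup\{\ge x+m\}$ and absorbing the bounded-width region near the endpoints using the split property as in Section~\ref{sec: split property}; this is $\le Ce^{-cm}$.

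The second term is the heart of the matter: it measures how close the mixed state $\psi|_{\le x}\otimes\psi|_{>x}$ is to its own maximal eigenvector state, after restriction far from $x$. Here I would use the factorization $\dm{\psi|_{\le x}\otimes\psi|_{>x}}=\dm{\psi|_{\le x}}\otimes\dm{\psi|_{>x}}$ together with the fact that restricting to $B_m(x)^c$ amounts to tracing out the finite-dimensional factor $\caH_{B_m(x)}$, whose dimension is bounded by a constant to the power $2m$. The key input is that the \emph{gap} between the top eigenvalue $\mu_1$ and the rest of the spectrum of $\dm{\psi|_{\le x}\otimes\psi|_{>x}}|_{B_m(x)^c}$ is large: from Lemma~\ref{lem.bound-eigenvalue-restrictions} we have $\mu_1>c$, and from Proposition~\ref{prop.decay-of-schmidt-coeffs} (applied on each half-line, or directly to the restricted Schmidt data) the sum of all remaining eigenvalues decays polynomially, in fact after comparison with $\lambda_j(\ell)$ via \eqref{eq: bound on difference eigenvalues} one sees that at the relevant scale the subleading weight is exponentially small. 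Concretely, I would take $\ell=\ell(m)$ growing slowly with $m$ (say $\ell=m/2$, or whatever makes the dimension prefactor $\dim(\caH_{B_\ell(x)})$ subexponential in $m$), and argue that the restricted density matrix $\dm{\psi|_{\le x}\otimes\psi|_{>x}}|_{B_m(x)^c}$ is within trace norm $Ce^{-cm}$ of a rank-one projection, which must then be close to $\ketbra{\psi_{\le x}\otimes\psi_{>x}}$ restricted to the same algebra — because the top eigenvalue is bounded below and the second eigenvalue is exponentially small, so the spectral projection onto the top eigenvalue is stable. Chaining the two bounds and renaming $m\to\ell$ gives the claim.

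The main obstacle I expect is the bookkeeping in the second step: one must ensure that the polynomial decay of Proposition~\ref{prop.decay-of-schmidt-coeffs} (which a priori only gives $\sum_{j\ge k^*}\lambda_j\le C(k^*)^{-\alpha}$, polynomial not exponential) is upgraded to an effective \emph{exponential} separation at the relevant length scale. The trick, already visible in the proof of Lemma~\ref{thm.mutual-correlation-decay}, is that one is free to choose how many Schmidt vectors to keep as a function of $\ell$; combined with \eqref{eq: bound on difference eigenvalues}, which compares the restricted spectra of $\psi$ and of the factorized state and already carries an $e^{-c\ell}$, the polynomial tail only costs an extra polynomial prefactor that is swallowed by choosing the truncation $k^*$ exponentially large in a small multiple of $m$. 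A secondary technical point is that the eigenvector $\ket{\psi_{\le x}}\otimes\ket{\psi_{>x}}$ is defined on the full GNS space, and one must check that its restriction to $\AA_{B_m(x)^c}$ is the pure state one expects; this follows because $\ketbra{\psi_{\le x}\otimes\psi_{>x}}$ is, up to $Ce^{-cm}$ in trace norm on $\caB(\caH_{B_m(x)^c})$, the $\mu_1$-spectral projection of the restricted density matrix, and the latter is genuinely a normal functional on $\AA_{B_m(x)^c}$ by the split property. Apart from these points the argument is routine perturbation theory for the top of the spectrum of a trace-class operator.
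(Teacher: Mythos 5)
Your reduction via the triangle inequality through $\psi|_{\le x}\otimes\psi|_{>x}$ is the right starting point (and matches the paper's), and the first term is indeed Theorem \ref{corollary.factorizing-infinite-regions}. The second step, however, rests on a false spectral claim: you assert that $\dm{\psi|_{\le x}\otimes\psi|_{>x}}|_{B_m(x)^c}$ is within $Ce^{-cm}$ in trace norm of a rank-one projection because the subleading weight is ``exponentially small at the relevant scale''. It is not. By \eqref{eq: bound on difference eigenvalues} the eigenvalues $\mu_j(m)$ of this restriction track the eigenvalues $\lambda_j(m)$ of $\dm{\psi}|_{B_m(x)^c}$, and the latter are the Schmidt coefficients of $\ket\psi$ across the bipartition $B_m(x)\cup B_m(x)^c$. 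Proposition \ref{prop.decay-of-schmidt-coeffs} controls only the polynomial tail $\sum_{j\ge k^*}\lambda_j\le C(k^*)^{-\alpha}$; it says nothing about $\lambda_2$, which for an entangled ground state is generically of order one and may even be nearly degenerate with $\lambda_1$. So the restricted density matrix is genuinely mixed (as is the restriction of the pure state $\ketbra{\psi_{\le x}\otimes\psi_{>x}}$ itself --- a partial trace of a pure state is mixed), there is no spectral gap at the top of the restricted spectrum, and ``routine perturbation theory for the top of the spectrum'' does not get off the ground.

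The missing idea, which is how the paper proceeds, is to realize the top eigenvector of $\dm{\psi|_{\le x}\otimes\psi|_{>x}}$ as a \emph{local perturbation of $\ket\psi$} and then exploit clustering of $\psi$, rather than spectral stability of restricted density matrices. Concretely: Proposition \ref{prop.krauss} produces a CPTP map with Kraus operators $M_j$ supported in $B_\ell(x)$ such that $\sum_j M_j\ketbra{\psi}M_j^*$ is $Ce^{-c\ell}$-close to $\dm{\psi|_{\le x}\otimes\psi|_{>x}}$; Lemma \ref{lem: two density matrices} (whose spectral projector has rank bounded by $2/\mu_1$, \emph{not} rank one --- this is precisely where the absence of a gap at the top is accommodated) then gives $\ket{\psi_{\le x}\otimes\psi_{>x}}\approx\sum_{j=1}^N d_j M_j\ket\psi/\sqrt{\nu_j}$ with $N\le C$ and $\nu_j\ge c>0$. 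For $A$ supported in $B_{2\ell}(x)^c$ one finally evaluates $\expval{M_i^*\pi(A)M_j}{\psi}\approx\delta_{i,j}\nu_j\,\psi(A)$ by exponential clustering, since $M_i^*M_j$ is anchored near $x$. Without some device of this kind --- transferring the locality of $\psi$ to the eigenvector through explicitly local operators --- the closeness of the pure product eigenvector to $\psi$ far from $x$ does not follow from the spectral data you invoke.
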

\begin{proof}
Theorem \ref{corollary.factorizing-infinite-regions} states that ${\psi}$ and ${\psi|_{\le x} \otimes \psi|_{> x} }$ are exponentially close far from $x$, i.e.
\begin{align*}
    \norm{\dm{\psi}|_{B_\ell(x)^c} - {\dm{\psi|_{\le x} \otimes \psi|_{> x} }|_{B_\ell(x)^c}}}_1 \le Ce^{-c\ell}. 
\end{align*}
Therefore, we can invoke 
  Lemma \ref{prop.krauss}, which yields a completely positive trace preserving map $\Phi (\cdot) = \sum_{j} M_j^* (\cdot ) M_j$, with $M_j \in \caB(\caH_{{B_\ell}})$ and $||M_j||\leq 1$, such that
    $$\norm{\Phi(\dm{\psi}) - \dm{\psi|_{\le x} \otimes \psi|_{>x}}}_1 \le Ce^{-c\ell}$$
   and such that $\Phi(\dm{\psi}) $ has the spectral decomposition 
   $$\Phi(\dm{\psi}) = \sum\limits_{j} \nu_j \kb{\xi_j}, \qquad   \nu_j= \norm{M_j\ket{\psi}}^2, \qquad  \ket{\xi_j}=  \frac 1 {\sqrt{\nu_j}} M_j\ket{\psi}.   $$ 
with eigenvalues $\nu_1 \geq \nu_2 \geq \ldots$ We now apply Lemma \ref{lem: two density matrices} with $\omega = \dm{\psi|_{\le x} \otimes \psi|_{>x}}$ and $\sigma=\Phi(\dm{\psi})$, and with the eigenvector $|\eta\rangle= \ket{\psi_{\le x} \otimes \psi_{>x}}$ and $\lambda=\mu_1$. We conclude that 
$$
\norm{\ket{\psi_{\le x} \otimes \psi_{>x}}- \sum_{j=1}^{N} d_j \ket{\xi_j} } \leq Ce^{-c\ell}
$$
for some coefficients $d_j$ with $\sum_{j=1}^N |d_j|^2 = 1$, $N\leq C$, and such that the corresponding eigenvalues $(\nu_j)_{j=,1\ldots,N}$ satisfy $\nu_j \geq c>0.$


    
    Now take $A \in \AA_{B_{2\ell}(x)^c}$.  Then 
 \begin{equation} \label{eq: phi and ms}
        \left| \expval{\pi(A)} {\psi_{\le x} \otimes \psi_{>x}} - \sum\limits_{i,j=1}^{N} \bar{d_i} d_j\frac {1}{\sqrt{\nu_i\nu_j}} \expval{M_i^* \pi(A) M_j} {\psi} \right| \leq Ce^{-c\ell} ||A||
      \end{equation}  
and, by Lemma \ref{thm.mutual-correlation-decay}, 

\begin{equation}
| \expval{M_i^* \pi(A) M_j}{\psi} -    \delta_{i,j} \nu_j \psi(A)| \leq Ce^{-c\ell} |A|| ||M_i^*|| ||M_j||
\end{equation} 
where we used $\langle\psi|M_i^* M_j|\psi\rangle=\delta_{i,j}\nu_j $. Performing the sum over $i,j$ in \eqref{eq: phi and ms} and using $N\leq C$, $||M_j||\leq 1$ and $\nu_j>c$, we finish the proof.

\end{proof}
      


\subsection{Proof of Proposition \ref{prop.cutting}}

We have already obtained a pure product state $\psi_{\le x} \otimes \psi_{>x}$ that is exponentially close to $\psi$, which has exponential decay of mutual correlations. {Since these states are mutually normal and pure, by the Kadison transitivity theorem, they are unitarily equivalent.\ The following Theorem, proved in \cite{sopenko_chiral_2023}, allows us to obtain a unitary that is moreover exponentially anchored at $x$, by exploring the locality properties of the states.} We review the proof of Theorem \ref{thm.connecting-states} in Appendix \ref{appendix.sopenko-lemma}.


\begin{theorem} [Proposition D.1. of \cite{sopenko_chiral_2023}] \label{thm.connecting-states}
    Let $\psi$ be a pure gapped ground state, and $\psi' \in \mathcal S(\AA)$ be a pure state that is exponentially close to $\psi$ far from a site $x$.  
    Then there exists a unitary $U \in \AA$, exponentially anchored at $x$, such that 
    \begin{align*}
        \psi' = \psi \circ \Ad{U}. 
    \end{align*}
   {The constants $C, c$ in the definition of the exponential anchoring depend only on i) the constants in the definition of the exponential decay of mutual correlations of $\psi$, and ii) on the constants defining the exponential closeness between $\psi, \psi'$}.
\end{theorem}

\begin{proof} (of Proposition \ref{prop.cutting}) 
Follows by applying Theorem \ref{thm.connecting-states} to $\psi$ and $\psi'=\psi_{\le x} \otimes \psi_{>x}$.
\end{proof}
\section{Disentangling the whole chain}

In this section we discuss the steps into proving the main Theorem \ref{thm.main}. We start by recalling Proposition \ref{prop.cutting}. For each $x \in \ZZ$, there exists a unitary $U^{(x)} \in \AA$ that performs a ``cut'' of $\psi$ at site $x$, as in $\psi \circ \Ad{U^{(x)}} = \psi_{\le x} \otimes \psi_{>x}$. Let $\ell>0$ be an integer parameter to be determined, and fix an arbitrary pure product state $\phi \in \mathcal S(\AA)$. For each $x \in \ZZ$, we define pure states
\begin{equation}
    \hat \psi_{>x} := \phi_{\le x} \otimes \psi_{>x} \in \mathcal S(\AA).   
\end{equation}

\begin{proposition} \label{prop.connecting-half-states}
    There exist constants $C<\infty ,c>0$ such that, for each $x \in \ell\ZZ$, there exist
    \begin{enumerate}
        \item a unitary $V^{(x)} \in\AA_{[x,x+\ell)}$, 
        \item a unitary $W^{(x)} \in\AA_{\ge x}$, exponentially anchored at $(x+\ell) \in \ZZ$, 
    \end{enumerate}
    such that
    \begin{equation}
        \hat \psi_{>x} \circ \Ad{W^{(x)} V^{(x)}} = \hat \psi_{>x+\ell}.
    \end{equation}
    Moreover, the constants $C,c$ in the definition of the anchoring can be taken uniformly with respect to sites $x \in \ZZ$. 
\end{proposition}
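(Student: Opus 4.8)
The plan is to work entirely inside the half‑infinite chain $\AA_{>x}$ and to bring the pure state $\psi_{>x}$ to $\phi_{(x,x+\ell]}\otimes\psi_{>x+\ell}$ as a composition of three moves: (i) a \emph{cut} of $\psi_{>x}$ at the site $x+\ell-1$, turning it into a product of pure states across that site; (ii) a correction of the resulting right factor so that it becomes $\phi_{x+\ell}\otimes\psi_{>x+\ell}$; and (iii) a finite‑volume unitary rotating the resulting left factor to $\phi_{(x,x+\ell-1]}$. Since $\hat\psi_{>x}=\phi_{\le x}\otimes\psi_{>x}$ and any unitary in $\AA_{>x}$ leaves $\phi_{\le x}$ invariant, while $\phi_{(x,x+\ell-1]}\otimes\phi_{x+\ell}=\phi_{(x,x+\ell]}$ and $\phi_{\le x}\otimes\phi_{(x,x+\ell]}=\phi_{\le x+\ell}$, this yields $\hat\psi_{>x}\circ\Ad{W^{(x)}V^{(x)}}=\hat\psi_{>x+\ell}$ at once. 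The reason for cutting at $x+\ell-1$ rather than at $x+\ell$ is precisely that the left factor then lives on the interval $(x,x+\ell-1]\subseteq[x,x+\ell)$, matching the required support of $V^{(x)}$.

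\emph{Move (i): the cut.} First I would record that $\psi_{>x}$, as a pure state on $\AA_{>x}$, satisfies the split property for every bipartition of $(x,\infty)$ into intervals --- immediate from $\psi_{\le x}\otimes\psi_{>x}$ being normal w.r.t.\ $\psi$ (Proposition~\ref{prop.cutting}) together with the multi‑interval factorized GNS triples \eqref{eq.split-many-intervals} for $\psi$ --- and that it has exponential decay of mutual correlations in the sense of Theorem~\ref{corollary.factorizing-infinite-regions}, with constants uniform in $x$; the latter is inherited from $\psi$ via $\psi_{\le x}\otimes\psi_{>x}=\psi\circ\Ad{(U^{(x)})^{*}}$, since conjugation by the exponentially anchored unitary $U^{(x)}$ preserves exponential decay of mutual correlations (with a possibly halved rate). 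With these two inputs the construction behind Proposition~\ref{prop.cutting}, applied to $\psi_{>x}$ at $x+\ell-1$, goes through; in fact here it is \emph{simpler} than Proposition~\ref{prop.cutting} itself, because the left interval $(x,x+\ell-1]$ is finite‑dimensional of bounded dimension, so the lower bound $\mu_{1}>c>0$ on the dominant eigenvalue is immediate and no Hastings‑factorization input (Proposition~\ref{prop.decay-of-schmidt-coeffs}) is needed. It produces a unitary $\tilde W\in\AA_{>x}$ exponentially anchored at $x+\ell-1$ (hence at $x+\ell$, up to adjusting constants), and pure states $\kappa\in\mathcal S(\AA_{(x,x+\ell-1]})$, $\rho\in\mathcal S(\AA_{\ge x+\ell})$ with
\[
\psi_{>x}\circ\Ad{\tilde W}=\kappa\otimes\rho ,
\]
where moreover, as in Lemma~\ref{prop: exp close factorized pure}, $\kappa\otimes\rho$ is exponentially close to $\psi_{>x}$ far from $x+\ell$, with constants uniform in $x$.

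\emph{Moves (ii) and (iii), and assembly.} For the right factor I would show that $\rho$ and $\phi_{x+\ell}\otimes\psi_{>x+\ell}$ are pure states on $\AA_{\ge x+\ell}$ that are exponentially close far from $x+\ell$, by chaining on $\AA_{\ge x+\ell+k}$ the estimates $\rho\approx\psi_{>x}\approx\psi\approx\psi_{>x+\ell}=\phi_{x+\ell}\otimes\psi_{>x+\ell}$, coming respectively from the last sentence of Move~(i) (rate $e^{-ck}$), from $\psi_{\le x}\otimes\psi_{>x}=\psi\circ\Ad{(U^{(x)})^{*}}$ with the region at distance $\ell+k$ from $x$ (rate $e^{-c(\ell+k)}$), and from Lemma~\ref{prop: exp close factorized pure} applied at $x+\ell$ (rate $e^{-ck}$). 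Since $\phi_{x+\ell}\otimes\psi_{>x+\ell}$ is pure and, exactly as in Move~(i), inherits the split property and exponential decay of mutual correlations from $\psi_{>x+\ell}$, Theorem~\ref{thm.connecting-states} on the half‑chain $\AA_{\ge x+\ell}$ yields a unitary $W_{2}\in\AA_{\ge x+\ell}$ exponentially anchored at $x+\ell$ with $\rho\circ\Ad{W_{2}}=\phi_{x+\ell}\otimes\psi_{>x+\ell}$. For the left factor, $\kappa$ and $\phi_{(x,x+\ell-1]}$ are pure states on the finite‑dimensional algebra $\AA_{(x,x+\ell-1]}$, hence unitarily equivalent inside it, giving $V^{(x)}\in\AA_{(x,x+\ell-1]}\subseteq\AA_{[x,x+\ell)}$ with $\kappa\circ\Ad{V^{(x)}}=\phi_{(x,x+\ell-1]}$. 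Setting $W^{(x)}:=\tilde WW_{2}$ and using that $W_{2}\in\AA_{\ge x+\ell}$ commutes with $\AA_{(x,x+\ell-1]}$ while $V^{(x)}$ commutes with $\AA_{\ge x+\ell}$, one obtains
\[
\psi_{>x}\circ\Ad{W^{(x)}V^{(x)}}=\bigl(\psi_{>x}\circ\Ad{\tilde W}\bigr)\circ\Ad{W_{2}}\circ\Ad{V^{(x)}}=\bigl(\kappa\circ\Ad{V^{(x)}}\bigr)\otimes\bigl(\rho\circ\Ad{W_{2}}\bigr)=\phi_{(x,x+\ell]}\otimes\psi_{>x+\ell},
\]
whence $\hat\psi_{>x}\circ\Ad{W^{(x)}V^{(x)}}=\hat\psi_{>x+\ell}$. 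Here $V^{(x)}\in\AA_{[x,x+\ell)}$ and $W^{(x)}=\tilde WW_{2}\in\AA_{>x}\subseteq\AA_{\ge x}$; $W^{(x)}$ is exponentially anchored at $x+\ell$ because, by the Leibniz rule for commutators, a product of two unitaries exponentially anchored at $x+\ell-1$ and at $x+\ell$ is exponentially anchored at $x+\ell$; and all constants are uniform in $x$ since those of $\tilde W$, $W_{2}$ and of the closeness estimates are, $\ell$ being fixed.

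\emph{Where the work lies.} The only genuinely delicate point is the inheritance invoked in Move~(i) and re‑used in Move~(ii): that the auxiliary pure states $\psi_{>x}$, $\rho$ and $\phi_{x+\ell}\otimes\psi_{>x+\ell}$ --- which are not gapped ground states --- satisfy the split property and, above all, exponential decay of mutual correlations in the \emph{full} sense of Theorem~\ref{corollary.factorizing-infinite-regions}, i.e.\ also for regions reaching the boundary of their half‑chain, with constants uniform in $x$; one should also check that Proposition~\ref{prop.cutting} (in the simplified form above) and Theorem~\ref{thm.connecting-states} go through verbatim on half‑infinite chains. The mutual‑decay inheritance is the crux and rests on the principle that conjugation by an exponentially anchored unitary preserves exponential decay of mutual correlations up to a degradation of the rate, which one proves by truncating the conjugated observables at a length scale optimized against the separation and appealing to Theorem~\ref{corollary.factorizing-infinite-regions} for $\psi$.
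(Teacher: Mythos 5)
Your route is genuinely different from the paper's. The paper never re-runs the cutting machinery on a half-chain: it builds the intermediate state $\omega_x=\phi_{\le x}\otimes s_{(x,x+\ell]}\otimes\psi_{>x+\ell}$ directly, by taking local unitary approximants $u^{(x)},u^{(x+\ell)}$ of the two cutting unitaries from Proposition~\ref{prop.cutting}, showing that $(u^{(x)}\ket{\psi_{\le x+\ell}})\otimes\ket{\psi_{>x+\ell}}$ has overlap $1-\epsilon(\ell)$ with a product vector across $x$ (Lemma~\ref{lem.large-overlap-with-product}), and extracting the factorized Schmidt vector $s_{(x,x+\ell]}$. Theorem~\ref{thm.connecting-states} is then applied exactly once, to $\hat\psi_{>x}$ and $\omega_x$ at the site $x+\ell$, and the support/anchoring issues are handled because both states already agree on $\le x$. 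Your decomposition $W^{(x)}=\tilde W W_2$ (cut at $x+\ell-1$, then reconnect the right factor) is cleaner to state and your choice of support $(x,x+\ell-1]$ for $V^{(x)}$ actually matches the proposition's $[x,x+\ell)$ better than the paper's own proof does; but it buys this at the cost of two applications of Theorem~\ref{thm.connecting-states} instead of one, and of having to verify that Proposition~\ref{prop.cutting} and Theorem~\ref{thm.connecting-states} survive the passage to half-infinite chains and to the non-ground states $\psi_{>x}$, $\rho$. You correctly identify the inheritance of the split property and of mutual decay by $\psi_{>x}$ as the crux; that part is real work but your truncation-and-optimization sketch is the right mechanism (the paper sidesteps it for $\hat\psi_{>x}$ by only ever testing mutual decay at the site $x+\ell$, which is far from the cut at $x$).

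There is, however, one concrete quantitative gap. You assert that for the cut of $\psi_{>x}$ at $x+\ell-1$ the lower bound $\mu_1>c>0$ is ``immediate'' from finite-dimensionality of the left factor and that no Hastings input is needed. What finite-dimensionality gives you is $\mu_1\ge \dim(\HH_{(x,x+\ell-1]})^{-2}\ge d^{-2\ell}$, i.e.\ a constant that degrades \emph{exponentially in $\ell$}. This $1/\mu_1$ enters the closeness estimate between $\psi_{>x}$ and $\kappa\otimes\rho$ (exactly as $1/\sqrt{\lambda_m\lambda_n}$ and $1/\mu_1$ enter Proposition~\ref{prop.closeness-of-density-matrices} and Lemma~\ref{prop: exp close factorized pure}), hence the prefactor in the exponential anchoring of $\tilde W$ and of $W^{(x)}$ picks up a factor of order $d^{O(\ell)}$. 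That is compatible with the literal statement of Proposition~\ref{prop.connecting-half-states} (uniformity is only required in $x$), but it breaks the downstream use: Lemma~\ref{lem.infiniteproduct} requires $\ell\ge\ell_0$ with $\ell_0$ determined by those very anchoring constants, and with $C\sim d^{O(\ell)}$ the condition $\ell\ge\ell_0(C(\ell))$ need not be satisfiable. The paper's construction avoids this because its single application of Theorem~\ref{thm.connecting-states} uses closeness constants that are uniform in $\ell$ (the eigenvalue bound of Lemma~\ref{lem.bound-eigenvalue-restrictions} is taken at one \emph{fixed} scale, independent of $\ell$). To repair your argument you would need an $\ell$-uniform lower bound on the largest eigenvalue of $\dm{\psi_{>x}|_{(x,x+\ell-1]}}\otimes\dm{\psi_{>x}|_{\ge x+\ell}}$, which does require area-law/Hastings input (Proposition~\ref{prop.decay-of-schmidt-coeffs}) transported through the conjugation by $U^{(x)}$ --- precisely the kind of boundary-touching estimate that is delicate for $\psi_{>x}$.
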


In order to prove proposition \ref{prop.connecting-half-states}, we will obtain intermediary pure states $\omega_x \in \mathcal S(\AA)$, and argue for the existence of unitaries $W^{(x)},V^{(x)}$ with the desired properties, such that the following diagram holds:
\begin{align*}
    \dots \longrightarrow {\hat \psi_{>x-\ell}}  \overset{\Ad{W_{x-\ell}}}\longrightarrow \omega_{x-\ell}\overset{\Ad{V_{x-\ell}}}\longrightarrow
    \hat \psi_{>x} \overset{\Ad{W_x}}\longrightarrow \omega_x \overset{\Ad{V_x}}\longrightarrow {\hat \psi_{>x+\ell}} \longrightarrow \dots
\end{align*}

\subsection{The states $\omega_x$} \label{sec.omega-x}

For the duration of section \ref{sec.omega-x}, we consider a GNS triple
$$(\HH_{\le x} \otimes \HH_{> x}, \pi_{\le x} \otimes \pi_{> x}, \ket{\psi_{\le x}} \otimes \ket{\psi_{> x}})$$
of $\psi_{\le x} \otimes \psi_{>x}$, which is further factorized between sites $x+\ell$ and $x+\ell+1$, as discussed in section \ref{sec: split property}. \par 
By unitary equivalence, the pure states $\psi, \psi_{\le x} \otimes \psi_{>x}, \psi_{\le x+ \ell} \otimes \psi_{> x + \ell}$ are all mutually normal. Particularly, the states $\psi_{\le x} \otimes \psi_{>x}, \psi_{\le x+ \ell} \otimes \psi_{> x + \ell}$ are represented by pure products
\begin{align*}
	\ket{\psi_{\le x}} \otimes \ket{\psi_{>x}} \in \HH_{\le x} \otimes (\HH_{(x,x+\ell]} \otimes \HH_{> x + \ell}), \qquad \text{and} \qquad \ket{\psi_{\le x + \ell}} \otimes \ket{\psi_{>x+\ell}} \in (\HH_{\le x} \otimes \HH_{(x,x+\ell]}) \otimes \HH_{>x+\ell}.
\end{align*}

\begin{lemma} \label{lem.omega-at-infinity}
	There exists a pure state 
	\begin{equation}
		{\omega_x} := {\phi_{{\le x}}} \otimes {s_{(x,x+\ell]}} \otimes {\psi_{>x+\ell}} \in \mathcal{S} (\AA_{\le x} \otimes \AA_{(x,x+\ell]} \otimes \AA_{>x+\ell})
	\end{equation}
	that is exponentially close to $\hat \psi_{>x}$ far from site $x+\ell$. 
\end{lemma}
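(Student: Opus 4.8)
\textbf{Proof proposal for Lemma \ref{lem.omega-at-infinity}.}

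The plan is to build $\omega_x$ from $\hat\psi_{>x} = \phi_{\le x}\otimes\psi_{>x}$ by only modifying it on the finite window $(x,x+\ell]$, so that agreement far from $x+\ell$ is automatic on the right-hand side (where nothing is changed, since $\hat\psi_{>x}$ restricted to $\AA_{>x+\ell}$ is exactly $\psi_{>x+\ell}$ by the factorization assumed in Section~\ref{sec.omega-x}) and on the left-hand side (where $\phi_{\le x}$ is already a product state and trivially matches). Concretely, I would take $s_{(x,x+\ell]}$ to be \emph{any} pure state on $\AA_{(x,x+\ell]}$, e.g. the restriction of $\phi$ or of $\psi_{\le x+\ell}$ to that interval; the point of the lemma is existence, and the definition
$$\omega_x := \phi_{\le x}\otimes s_{(x,x+\ell]}\otimes\psi_{>x+\ell}$$
is a genuine pure product state on $\AA_{\le x}\otimes\AA_{(x,x+\ell]}\otimes\AA_{>x+\ell}$ by construction.

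The work is then entirely in verifying ``exponentially close to $\hat\psi_{>x}$ far from $x+\ell$,'' i.e. that for $A\in\AA_{B_m(x+\ell)^c}$ one has $|\omega_x(A)-\hat\psi_{>x}(A)|\le Ce^{-cm}\norm A$. First I would reduce to $A\in\AA^{\text{loc}}$ by density, and split $B_m(x+\ell)^c$ into its left part $L$ (inside $\le x+\ell-m$) and right part $R$ (inside $\ge x+\ell+m$); by a standard product-state/GNS argument it suffices to treat $A=A_L A_R$ with $A_L\in\AA_L$, $A_R\in\AA_R$. On the right part, both $\omega_x$ and $\hat\psi_{>x}$ restrict to $\psi_{>x+\ell}$ (using the GNS factorization at $x+\ell$ fixed in Section~\ref{sec.omega-x}, which makes $\psi_{>x}|_{\ge x+\ell+1}$ literally equal to $\psi_{>x+\ell}$), so there is no error there. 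For $m$ large enough that $x<x+\ell-m$, the left support $L$ lies in $\le x$, where $\hat\psi_{>x}$ restricts to $\phi_{\le x}$ and so does $\omega_x$; again no error. Thus the only genuine estimate is for the cross-term, where $A_L$ straddles the interval $(x,x+\ell]$: here I would use that $\hat\psi_{>x}=\phi_{\le x}\otimes\psi_{>x}$ already factorizes across the cut at $x$, and that $\psi_{>x}$ itself has exponential decay of mutual correlations (inherited from $\psi$ via Proposition~\ref{prop.cutting} and Theorem~\ref{corollary.factorizing-infinite-regions}, together with the split property), so $\psi_{>x}|_{(x,x+\ell]\cup(\ge x+\ell+m)}$ is $Ce^{-cm}$-close to $\psi_{>x}|_{(x,x+\ell]}\otimes\psi_{>x+\ell}$. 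The same structural factorization holds for $\omega_x$ by its definition, so the two cross-terms differ by $Ce^{-cm}\norm A$ plus a term controlled by $\norm{\psi_{>x}|_{(x,x+\ell]}-s_{(x,x+\ell]}}$ — which is \emph{not} small.

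This last point is the real obstacle, and it shows the naive choice of $s$ is wrong: we do \emph{not} want $\omega_x$ to agree with $\hat\psi_{>x}$ on observables supported near $(x,x+\ell]$, only far from $x+\ell$, and an observable in $(x,x+\ell]$ is at distance $0$ from $x+\ell$ in the relevant sense only if $\ell$ is small — but $\ell$ is a fixed parameter, so $(x,x+\ell]\subset B_\ell(x+\ell)$ and such observables simply are \emph{not} ``far from $x+\ell$.'' Hence the cross-term estimate above never actually needs $s_{(x,x+\ell]}$ to be close to anything: if $A_L$ is supported within distance $m$ of $x+\ell$ we are not required to estimate it, and if it is supported at distance $>m$ from $x+\ell$ then for $m\ge\ell$ it is disjoint from $(x,x+\ell]$ altogether, so it sees only $\phi_{\le x}$ on both sides and the error is again zero. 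So the correct argument is: for $m\le\ell$ the statement is vacuous (the constant $C$ absorbs it), and for $m>\ell$ the observable decomposes over $L'\subset(\le x)$ and $R'\subset(>x+\ell)$ which are each handled with zero error as above, modulo a single application of exponential decay of mutual correlations for $\psi_{>x}$ to re-factorize $\psi_{>x}$ across the cut at $x+\ell$ — giving the $Ce^{-c(m-\ell)}\le Ce^{-cm}e^{c\ell}\le Ce^{-cm}$ bound since $\ell$ is fixed. I expect the main subtlety in writing this cleanly to be bookkeeping of which interval each tensor factor lives on and making sure the GNS factorizations fixed at the start of Section~\ref{sec.omega-x} are invoked consistently; the analytic input (exponential decay of mutual correlations plus split property) is already in hand.
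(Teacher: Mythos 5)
You have correctly identified the real obstacle---that an arbitrary pure $s_{(x,x+\ell]}$ need not be close to $\psi_{>x}$ on the window---but the way you argue it away is where the proof fails. An observable $A\in\AA_{(x,x+\ell/2]}$ belongs to $\AA_{B_m(x+\ell)^c}$ for every $m\le \ell/2$, so it \emph{is} far from $x+\ell$ at scale $\ell/2$, and the definition of exponential closeness demands $|\omega_x(A)-\hat\psi_{>x}(A)|\le Ce^{-c\ell/2}\norm{A}$ for such $A$. With $s$ arbitrary (e.g.\ $\phi_{(x,x+\ell]}$) this difference is of order one, so your ``for $m\le\ell$ the statement is vacuous, $C$ absorbs it'' amounts to taking $C\sim e^{c\ell}$, i.e.\ constants that blow up with $\ell$. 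That is fatal downstream: Theorem \ref{thm.connecting-states} turns the closeness constants into the anchoring constants of $W^{(x)}$, Proposition \ref{prop.connecting-half-states} asserts those are uniform, and Lemma \ref{lem.infiniteproduct} then requires $\ell\ge\ell_0$ with $\ell_0$ determined by those very constants. If the constants grow with $\ell$, the choice of $\ell$ becomes circular and the infinite composition need not converge. So you may not treat $\ell$ as ``fixed'' while proving this lemma; the constants must be uniform in $\ell$ (and $x$), and with that requirement your construction does not prove the statement.

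This is precisely why the paper does not take $s$ arbitrary: it builds $\ket{s_{\le x}}\otimes\ket{s_{(x,x+\ell]}}\otimes\ket{\psi_{>x+\ell}}$ as a factorized top Schmidt vector of $(u^{(x)}\ket{\psi_{\le x+\ell}})\otimes\ket{\psi_{>x+\ell}}$, where $u^{(x)}\in\AA_{B_{\ell/2}(x)}$ is a local approximant of the cutting unitary $U^{(x)}$. That choice forces $s_{(x,x+\ell]}(A)$ to agree with $\psi_{>x}(A)$ up to $Ce^{-c\ell}$ for $A\in\AA_{(x,x+\ell/2]}$ (this is the estimate \eqref{eq.rightbound}), which is exactly the bound your choice cannot supply; the point of $\omega_x$ is to interpolate between $\psi_{>x}$ on the left part of the window and a state that is product at the cut $x+\ell$, leaving the $O(1)$ discrepancy inside the window to be removed later by the strictly local unitary $V^{(x)}$. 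A secondary error: $\psi_{>x}|_{>x+\ell}$ is not literally $\psi_{>x+\ell}$. The GNS factorization fixed in Section \ref{sec.omega-x} splits the Hilbert space at $x+\ell$, not the state; $\ket{\psi_{>x}}$ is generically entangled across that cut, and the two states are only exponentially close far from $x+\ell$ (via Proposition \ref{prop.cutting}), which is the argument the paper actually uses for the right-hand bound \eqref{eq.leftbound}.
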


\begin{proof}

Firstly, we define the state $\omega_x$. Let $U^{(x)}$ be the cutting unitary from Prop. \ref{prop.cutting}. By section \ref{sec.localization}, for each $\ell>0$, there exists a local unitary approximation $u^{(x)} \in\AA_{B_{ \ell/ 2}(x)}$ such that 
$$\norm{U^{(x)} - u^{(x)}} \le Ce^{-c\ell} = \epsilon(\ell).$$
We then obtain (by momentarily omitting the representation $\pi_{\le x} \otimes \pi_{>x}$, for simplicity): 
\begin{align} \label{eq.obtaining-omega-1}
    \norm{u^{(x)} \ket{\psi} - \ket{\psi_{\le x}} \otimes \ket{\psi_{>x}}} < \epsilon(\ell), \qquad \text{and} \qquad \norm{u^{(x+\ell)} \ket \psi - \ket{\psi_{\le x+\ell}} \otimes \ket{\psi_{>x+\ell}}  } < \epsilon(\ell).
\end{align}
By combining \eqref{eq.obtaining-omega-1} and $[u^{(x)},u^{(x+\ell)}] = 0$, we further obtain: 
\begin{align} \label{eq.obtaining-omega-2}
    \norm{(u^{(x)} \ket{\psi_{\le x+\ell}}) \otimes \ket{\psi_{>x+\ell}} - \ket{\psi_{\le x}} \otimes (u^{(x+\ell)}\ket{\psi_{>x}} ) } \le \epsilon(\ell).
\end{align}
Denote by 
$$\rho = \left(u^{(x)}\kb{{\psi_{\le x+\ell}}}(u^{(x)})^*\right)  \otimes \kb{\psi_{>x+\ell}},$$
and by 
$$\sigma = \kb{\psi_{\le x}} \otimes \left(u^{(x+\ell)}\kb{\psi_{>x}} (u^{(x+\ell)})^*\right).$$
Equation \eqref{eq.obtaining-omega-2} implies that
\begin{align*}
    \norm{\rho|_{> x} - \sigma|_{>x}} \le \epsilon(\ell).
\end{align*}
But $\sigma|_{>x}$ is pure, hence, by Lemma \ref{lem.large-overlap-with-product}, the vector $(u^{(x)} \ket{\psi_{\le x+\ell}}) \otimes \ket{\psi_{>x+\ell}}$ has a large ($\ge 1-\epsilon(\ell)$) overlap with a factorized vector in $\HH_{\le x} \otimes \HH_{>x}$. Such a lower bound means its first Schmidt eigenvalue with respect to the bipartition $\HH_{\le x} \otimes \HH_{>x}$ is lower bounded by $1-\epsilon(\ell)$ \cite{WeiGoldbart2003}.

Since $\rho$ is factorized with respect to $(\HH_{\le x} \otimes \HH_{(x,x+\ell]}) \otimes \HH_{>x+\ell}$, its Schmidt vectors can also be chosen factorized. Denote by 
$$\ket{s_{\le x}} \otimes \ket{s_{(x,x+\ell]}} \otimes \ket{\psi_{>x+\ell}} \in \HH_{\le x} \otimes \HH_{(x,x+\ell]} \otimes \HH_{>x+\ell}$$
a factorized unit element of the first Schmidt eigenspace of $\rho_{>x}$, and define
\begin{equation}
    {\omega_x} := {\phi_{{\le x}}} \otimes {s_{(x,x+\ell]}} \otimes {\psi_{>x+\ell}}. 
\end{equation}
We now prove that $\omega_x$ as defined is exponentially close to $\hat \psi_{>x}$ far from $x+\ell$. Firstly, we notice that
\begin{equation} \label{eq.leftbound}
\norm{(\omega_x - \hat \psi_{>x})|_{>x+2\ell}} < Ce^{-c\ell} = \epsilon(\ell)
\end{equation}
follows simply from the construction of $\omega_x$ and Proposition \ref{prop.cutting}. Secondly, let $A \in \AA_{(x,x+\ell/2]}$ be a normalized observable. Then
    \begin{align*}
       \nonumber \left| \omega_x (A) - \hat \psi_{>x} (A) \right| &= \left| s_{(x,x+\ell]} (A) - \psi \circ \Ad{U^{(x)}} (A) \right| \\
       \nonumber  &\le \left| \psi \circ \Ad{u^{(x)} U^{(x+\ell)} } (A)  - \psi \circ \Ad{U^{(x)}} (A) \right| + \epsilon(\ell) \\
        &\le \epsilon(\ell),
    \end{align*}
where we have used the large overlap between $\ket{s_{(x,x+\ell]}}$ and $u^{(x)}U^{(x+\ell)} \ket \psi$, and the fact that $\norm{[U^{(x+\ell)},A]} < \epsilon(\ell)$. This establishes that
\begin{equation} \label{eq.rightbound}
\norm{(\omega_x - \hat \psi_{>x})|_{(x,x+\ell/2]}} < \epsilon(\ell).
\end{equation}
To finish the proof, we combine \eqref{eq.leftbound} and \eqref{eq.rightbound} with: i) $\omega_x$ is product between regions $\le x+\ell$ and $>x+\ell$, and ii) $\hat \psi_{>x}$ has exponential decay of mutual correlations (in the sense of Theorem \ref{corollary.factorizing-infinite-regions}), inherited from $\psi$. In conclusion, 
\begin{align*}
    \norm{(\omega_x - \hat \psi_{>x})|_{B_{\ell/2}(x+\ell)}} < \epsilon(\ell).
\end{align*}
\end{proof}

\subsection{Proof of Proposition \ref{prop.connecting-half-states}}

The states $\hat \psi_{>x}$ and $\omega_x$ satisfy the assumptions of Theorem \ref{thm.connecting-states}, with respect to site $x+\ell$. Hence there exists a unitary $W^{(x)} \in \AA_{\ge x}$ that is exponentially anchored at site $x+\ell$, such that
$$\hat\psi_{>x} \circ \Ad{W^{(x)}} = \omega_x.$$
Notice that the unitary $W^{(x)}$ can indeed be constructed with support only on $\ge x$, since both states concide on $<x$, and are factorized between sites $x$ and $x+1$. 
To conclude the proof, we choose a unitary $V^{(x)} \in \AA_{(x,x+\ell]}$ such that ${s_{(x,x+\ell]}} \circ \Ad{V^{(x)}} = \phi_{(x,x+\ell]}$ as states on the finite-dimensional algebra $\AA_{(x,x+\ell]}$.
\subsection{Proof of Theorem \ref{thm.main}} \label{sec.proof-main-thm}

We have enough tools to conclude that there exists an LGA $\alpha$ disentangling $\psi$ into a product state $\phi$. Our first step is to disentangle $\hat \psi_{>0}$ into a product state $\phi$. For that purpose, let $\{V^{(x)}, W^{(x)}\}_{x = 0, \ell, 2\ell, \dots}$ be a collection of unitaries as obtained in Proposition \ref{prop.connecting-half-states}. Notice that, by definition, $[V^{(x)}, W^{(x+r)}] = 0$ for any $r=\ell,2\ell,\dots$, and $[V^{(x)},V^{(y)}] = 0$ for any $x,y \in \ell\NN$. Let $M \in \NN$ be fixed. Then, by using these commutation relations, we have:  
\begin{align*}
    W^{(1)} V^{(1)} W^{(2)} V^{(2)} \dots W^{(M)} V^{(M)} = \left(\prod\limits_{x=1}^M W^{(x)} \right) \left(\prod\limits_{x=1}^M V^{(x)} \right).
\end{align*}
The strong limits
\begin{equation}
    \alpha_V := \lim\limits_{M\to \infty} \Ad{\prod\limits_{x=1}^M V^{(x)} } 
\end{equation}
and 
\begin{equation}
    \alpha_W := \lim\limits_{M\to \infty} \Ad{\prod\limits_{j=1}^M W^{(j)} } 
\end{equation}
exist by Lemma \ref{lem.infiniteproduct}. Furthermore, $\alpha_W$ and $\alpha_V$ are LGAs generated by exponentially quasi-local interactions. For any local observable $A \in \AA_{[0,r/2]}$, it holds that
\begin{align*}
    \hat \psi_{>0} \circ \alpha_W \circ \alpha_V (A) &= \hat \psi_{> r} (A) \\
    &= \phi (A).
\end{align*}
By density of the local algebra, we obtain the equality $\phi = \hat \psi_{>0} \circ \alpha_W \circ \alpha_V$. A similar argument can be applied for the construction of a disentangler for $\psi_{\le 0}$. Since $\psi$ and $\psi_{\le 0} \otimes \psi_{>0}$ are LGA-connected, the result follows. 

\section*{Acknowledgments}
W.D.R. and B.O.C. were supported by the FWO and F.R.S.-FNRS under the Excellence of Science (EOS) programme through the research project G0H1122N EOS 40007526 CHEQS, the KULeuven Runner-up Grant No. iBOF DOA/20/011, and the internal KULeuven Grant No. C14/21/086. M.F. was supported by the NSF under
grant DMS-2407290. M.F. thanks A.~Elgart and Y.~Ogata for fruitful discussions. B.O.C.~thanks R.C.~Drumond for fruitful discussions. The authors thank Ayumi Ukai for pointing out an error in the first version of the paper.

\section*{Data Availability}
Data sharing is not applicable to this article as no new data were created or analysed in this study.

 \section*{Conflict of interest}
 The authors declare no conflict of interest.

\appendix
\section{Uhlmann's Theorem and Fuchs-van de Graaf inequality}

The \textit{fidelity} between density matrices $\rho, \sigma \in \mathcal B_1(\HH)$ is defined as
\begin{equation}
    F(\rho, \sigma) := \left[\Tr \left(\sqrt{\sqrt \rho \sigma \sqrt \rho}\right)\right]^2.
\end{equation}
It relates to the trace distance (Schatten 1-norm) according to the Fuchs-van de Graaf inequality \cite{fuchs-vandegraaf}:
\begin{equation} \label{eq.fuchs-vandegraaf}
    1 - \sqrt{F(\sigma,\rho)} \le \frac 12 \norm{\rho - \sigma}_1 \le \sqrt{1-F(\sigma,\rho)}. 
\end{equation}
For $\ket \xi \in \HH$, we denote by $\kb{\xi}$ its pure density matrix. A \textit{purification} $\ket \xi$ of $\rho \in \mathcal \mathcal B_1(\HH)$ on $\HH'$ is a unit vector of a Hilbert space $\HH \otimes \HH'$ such that $\Tr_{\HH'} (\kb{\xi}) = \rho$. It is a basic fact that two purifications $\ket \xi, \ket {\xi'}$ of a density matrix $\rho$ on $\HH'$ are related by a unitary acting on the purifying space $\caH'$. Furthermore, Uhlmann's Theorem \cite{UHLMANN1976273} relates fidelity and purifications: 

\begin{theorem} \label{thm.uhlmann} \cite{HouQi2012}
    Let $\rho, \sigma \in \mathcal B_1(\HH)$ be density matrices on a separable Hilbert space $\HH$. Let $\ket \xi \in \HH\otimes\HH'$ be a purification of $\rho$ on a Hilbert space $\HH'$ such that $\dim(\HH') \ge \text{rank}(\sigma)$. Then
    \begin{equation}
        F(\rho, \sigma) = \max \{ |\bra{\xi}\ket{\eta}|^2 \ |\ \ket{\eta} \text{ is a purification of $\sigma$ on } \HH'\}.
    \end{equation}
\end{theorem}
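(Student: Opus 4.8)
The plan is to deduce Uhlmann's identity from the variational characterization of the trace norm, $\norm{M}_1=\sup\{\,|\Tr(MU)|:U\text{ unitary on }\HH'\,\}$, applied to $M=\sqrt\rho\,\sqrt\sigma$. First I would fix convenient reference purifications: writing spectral decompositions $\rho=\sum_i p_i\ketbra{e_i}$, $\sigma=\sum_j q_j\ketbra{g_j}$ and fixing an orthonormal family $(f_i)$ in $\HH'$, the norm-convergent vector $\ket{\Omega_\rho}:=\sum_i\sqrt{p_i}\,\ket{e_i}\otimes\ket{f_i}$ and the analogous $\ket{\Omega_\sigma}$ are purifications of $\rho$ and $\sigma$ on $\HH'$. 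Using the elementary fact recalled just above the theorem — that two purifications of the same density matrix on a common ancilla are related by a unitary on the ancilla — the given purification $\ket\xi$ of $\rho$ can be written $\ket\xi=(\I\otimes V)\ket{\Omega_\rho}$, and every purification $\ket\eta$ of $\sigma$ on $\HH'$ as $\ket\eta=(\I\otimes W)\ket{\Omega_\sigma}$, with $V,W$ unitaries on $\HH'$.

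The key step is the computation $\bra{\xi}\ket{\eta}=\Tr(\sqrt\rho\,\sqrt\sigma\,T)$, where, under the standard identification of $\HH\otimes\HH'$ with a space of Hilbert–Schmidt operators (in which $\ket{\Omega_\rho}$ corresponds to $\sqrt\rho$ and $\I\otimes A$ to right multiplication by the transpose of $A$ in the fixed bases), $T$ is the contraction on $\HH$ built from $V^*W$. As $\ket\eta$ ranges over all purifications of $\sigma$ on $\HH'$, the operator $T$ ranges over (a set containing) the unitaries of $\mathcal B(\HH)$, so
\[
\max_{\ket{\eta}}\,|\bra{\xi}\ket{\eta}|^2=\Big(\sup_{U\text{ unitary}}|\Tr(\sqrt\rho\,\sqrt\sigma\,U)|\Big)^2=\norm{\sqrt\rho\,\sqrt\sigma}_1^2 ,
\]
the last equality being the trace-norm variational formula, with the supremum attained at (a unitary extension of) the partial isometry in the polar decomposition of $\sqrt\rho\,\sqrt\sigma$. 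Finally $\norm{\sqrt\rho\,\sqrt\sigma}_1=\Tr(\,|\sqrt\rho\,\sqrt\sigma|\,)=\Tr\sqrt{\sqrt\sigma\,\rho\,\sqrt\sigma}=\Tr\sqrt{\sqrt\rho\,\sigma\,\sqrt\rho}$, where the last equality uses that $AB$ and $BA$ share the same non-zero spectrum with multiplicities (applied to $A=\rho^{1/2}$, $B=\sigma^{1/2}$ and then to the square roots); squaring gives precisely $F(\rho,\sigma)$.

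The hard part is entirely the infinite-dimensional bookkeeping rather than the algebra. One must check that the reference purifications are genuine vectors of $\HH\otimes\HH'$ (convergence of the defining series, which holds since $\Tr(\rho)=\Tr(\sigma)=1$), that $\HH'$ actually carries purifications of \emph{both} $\rho$ and $\sigma$ — the hypothesis on $\ket\xi$ gives this for $\rho$, and the statement is to be read with the implicit, and in the separable setting always realizable, requirement that $\HH'$ is large enough for $\sigma$ as well — and, most delicately, that the supremum over unitaries is a genuine maximum: the polar part of $\sqrt\rho\,\sqrt\sigma$ is only a partial isometry, and extending it to a unitary on $\HH'$ (so that the corresponding $\ket\eta$ is an honest purification) requires the relevant defect subspaces to have equal dimension, which in the separable case can always be arranged, possibly after enlarging $\HH'$. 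Carrying out these verifications carefully is exactly what is done in \cite{HouQi2012}, which is why we state the result with a citation rather than reproduce its proof.
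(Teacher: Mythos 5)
The paper does not prove Theorem \ref{thm.uhlmann} at all: it is imported verbatim from \cite{HouQi2012}, so there is no in-paper argument to compare yours against. On its own terms, your proposal is the standard proof of Uhlmann's theorem and its skeleton is sound: reduce $F(\rho,\sigma)$ to $\norm{\sqrt\rho\sqrt\sigma}_1^2$ via $\Tr\sqrt{\sqrt\rho\,\sigma\sqrt\rho}=\Tr\,|\sqrt\sigma\sqrt\rho|$, parametrize purifications by unitaries on the ancilla, and identify the overlap with $\Tr(\sqrt\rho\sqrt\sigma\,T)$.

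One step is stated too loosely, though you partially self-correct later: the claim that $T$ ``ranges over (a set containing) the unitaries of $\mathcal B(\HH)$'' is not what the construction gives you. What you actually get is that $T$ (restricted to the only block that matters, between $\mathrm{ran}\,\sigma$ and $\mathrm{ran}\,\rho$) ranges over the corners of unitaries of $\HH'$, which is a set of \emph{contractions} that need not contain any unitary of $\HH$ when $\dim\HH'<\dim\HH$, and which in general is a proper subset of the contractions (e.g.\ when $\dim\HH'$ barely suffices to purify both states, the accessible corners are co-isometries). The argument still closes, but for slightly different reasons than you give: the upper bound $|\Tr(\sqrt\rho\sqrt\sigma\,T)|\le\norm{\sqrt\rho\sqrt\sigma}_1$ only uses $\norm{T}\le 1$, and for attainment the optimizer is the partial isometry from the polar decomposition of $\sqrt\sigma\sqrt\rho$, whose rank is at most $\min(\mathrm{rank}\,\rho,\mathrm{rank}\,\sigma)$; this is small enough to be realized as a corner of a unitary on the \emph{given} $\HH'$ (which by hypothesis purifies $\rho$ and, for the statement to be non-vacuous, $\sigma$). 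Your suggestion to ``enlarge $\HH'$'' would actually prove a different statement, since the theorem fixes the ancilla in which $\ket\xi$ lives; fortunately the enlargement is never needed. With that repair the proof is complete, and the infinite-dimensional bookkeeping you flag (norm convergence of the Schmidt series, equality of defect dimensions for the unitary extension) is genuine but routine for trace-class density matrices.
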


A straightforward consequence of Theorem \ref{thm.uhlmann} and inequalities \eqref{eq.fuchs-vandegraaf} is that, under the above conditions, the purification $\ket \eta$ of $\sigma$ can be chosen such that
\begin{align} \label{eq.purifications-closeness}
    \norm{\ket \xi - \ket \eta} \le \norm{\rho-\sigma}_1^{\frac 12}.
\end{align}
A simple application of Uhlmann's Theorem and the Fuchs-van de Graaf inequality allows us to derive results on approximating states by local operations, as stated below. We use the notation $\rho_B = \Tr_{\HH_A} (\rho)$ for the restriction of $\rho$ to $\HH_B$. 

\begin{proposition} \label{prop.krauss}
    Let $\HH_A \otimes \HH_B$ be a bipartite system, and let $\rho, \sigma \in \mathcal B_1(\HH_A \otimes \HH_B)$ be density matrices such that $\norm{\rho_B - \sigma_B}_1 < \epsilon$, for some $\epsilon>0$. Then, 
    \begin{enumerate} [label = {(\alph*)}]
        \item if there are $\ket \xi, \ket \eta \in \HH_A \otimes \HH_B$ such that $\rho = \kb{\xi}$ and $\sigma = \kb{ \eta}$, there exists a unitary $U \in \caB(\HH_A)$ such that $\norm{U\ket \xi - \ket \eta} < \sqrt\epsilon$. 
        
        \item if $\rho = \kb{\xi}$ is pure, there exists a completely positive trace preserving (CPTP) map $\Phi: \mathcal B(\HH_A) \to \mathcal B(\HH_A)$ such that
    \begin{equation}
        \norm{\Phi(\rho) - \sigma}_1 < 2\sqrt \epsilon. 
    \end{equation}
    This map can be written as a strong limit
    \begin{equation*}
        \Phi(\rho) = \sum\limits_{j=1}^\infty M_j \rho M_j^*,
    \end{equation*}
    with elements $M_j \in \mathcal B(\HH_A)$ satisfying $\sum\limits_{j=1}^\infty M_j M_j^* = \mathds 1_{\HH_A}$, and
         $\expval{M_i^* M_j}{\xi} = \delta_{i,j} \nu_j.$
    \end{enumerate}
\end{proposition}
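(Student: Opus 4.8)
The plan is to derive both parts from Uhlmann's theorem applied to the purifying space $\HH_A$, viewed as an auxiliary system for the reduced states on $\HH_B$. First I would translate the trace-distance hypothesis $\norm{\rho_B-\sigma_B}_1<\epsilon$ into a fidelity bound: by the left-hand Fuchs--van de Graaf inequality \eqref{eq.fuchs-vandegraaf}, $1-\sqrt{F(\rho_B,\sigma_B)}\le\tfrac12\norm{\rho_B-\sigma_B}_1$, so $\sqrt{F(\rho_B,\sigma_B)}\ge 1-\tfrac12\epsilon$, hence $F(\rho_B,\sigma_B)\ge 1-\epsilon$.

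For part (a): both $\ket\xi$ and $\ket\eta$ are purifications of $\rho_B$, respectively $\sigma_B$, on the purifying space $\HH_A$. By Uhlmann's theorem (Theorem~\ref{thm.uhlmann}) applied with the roles of the systems chosen so that $\HH_A$ is the purifying factor, there exists a purification $\ket{\eta'}$ of $\sigma_B$ on $\HH_A$ with $|\langle\xi|\eta'\rangle|^2 = F(\rho_B,\sigma_B)\ge 1-\epsilon$. Since any two purifications of $\sigma_B$ on the same space $\HH_A$ differ by a unitary acting on $\HH_A$, there is $U\in\caB(\HH_A)$ with $\ket{\eta'}=(U^*\otimes\id)\ket\eta$, so $|\langle U\xi|\eta\rangle|^2\ge 1-\epsilon$. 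Adjusting the global phase of $U$ so that $\langle U\xi|\eta\rangle$ is real and nonnegative, $\norm{U\ket\xi-\ket\eta}^2 = 2-2\langle U\xi|\eta\rangle \le 2-2\sqrt{1-\epsilon}\le\epsilon$ (using $2-2\sqrt{1-t}\le t$ for $t\in[0,1]$), which gives $\norm{U\ket\xi-\ket\eta}<\sqrt\epsilon$. I would double-check whether the bound $\sqrt\epsilon$ or $\sqrt{2\epsilon}$ is intended; the inequality $2-2\sqrt{1-\epsilon}\le\epsilon$ is what makes $\sqrt\epsilon$ come out, so this is fine.

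For part (b): now $\rho=\ketbra{\xi}$ is pure but $\sigma$ need not be. I would purify $\sigma$ to a vector $\ket{\tilde\eta}\in(\HH_A\otimes\HH_B)\otimes\HH_C$ for some auxiliary $\HH_C$, and regard $\ket\xi$ (tensored with a fixed unit vector in $\HH_C$) as a purification of $\rho_B$ on $\HH_A\otimes\HH_C$, while $\ket{\tilde\eta}$ purifies $\sigma_B$ on the same space. Uhlmann's theorem again yields a purification of $\sigma_B$ with overlap-squared $F(\rho_B,\sigma_B)\ge1-\epsilon$ with $\ket\xi\otimes(\text{fixed vector})$; since purifications on $\HH_A\otimes\HH_C$ differ by a unitary $W$ on $\HH_A\otimes\HH_C$, we may assume $|\langle\xi\otimes\chi|W^*\tilde\eta\rangle|^2\ge1-\epsilon$. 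Writing $\ket\chi$ for an orthonormal basis $\{\ket j\}$ of $\HH_C$ and defining $M_j\in\caB(\HH_A)$ by $M_j := (\id_{\HH_A}\otimes\langle j|_{\HH_C})\,W\,(\id_{\HH_A}\otimes\ket\chi)$ — equivalently packaging the "leak" of $W$ into the $\HH_C$ register — one gets Kraus operators with $\sum_j M_jM_j^* = \id_{\HH_A}$ (from $W$ being unitary and $\ket\chi$ a unit vector), and $\Phi(\cdot):=\sum_j M_j(\cdot)M_j^*$ is CPTP on $\caB(\HH_A)$; the trace over $\HH_C$ of $\ketbra{W(\xi\otimes\chi)}$ on $\HH_A\otimes\HH_B$ is exactly $\Phi(\rho)$ (after tracing $\HH_B$ out appropriately — I need to be careful that $\Phi$ acts on the $A$ factor while $\rho$ lives on $A\otimes B$; the correct statement is $\Phi$ extends trivially on $B$). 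The relation $\expval{M_i^*M_j}{\xi}=\delta_{ij}\nu_j$ follows because the vectors $M_j\ket\xi = (\id\otimes\langle j|)W(\ket\xi\otimes\ket\chi)$ are the components of a single vector in $\HH_A\otimes\HH_B\otimes\HH_C$ in an orthonormal basis of $\HH_C$, hence mutually orthogonal, with $\nu_j=\norm{M_j\ket\xi}^2$. Finally the distance bound: $\Phi(\rho)$ and $\sigma$ are both reductions to $\HH_A\otimes\HH_B$ of vectors whose overlap-squared is $\ge 1-\epsilon$, so by the right-hand Fuchs--van de Graaf inequality $\tfrac12\norm{\Phi(\rho)-\sigma}_1\le\sqrt{1-F}\le\sqrt\epsilon$, i.e. $\norm{\Phi(\rho)-\sigma}_1<2\sqrt\epsilon$.

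The main obstacle I anticipate is purely bookkeeping: keeping straight which Hilbert space plays the role of "system" and which plays "purifying space" at each invocation of Uhlmann, and making sure the Kraus operators $M_j$ are defined on the correct factor ($\HH_A$, not $\HH_A\otimes\HH_C$) while the CPTP map is understood to act as $\Phi\otimes\mathrm{id}_{\HH_B}$ on states of $\HH_A\otimes\HH_B$. There is also a minor separability/convergence point — $\HH_C$ may need to be infinite-dimensional if $\sigma$ has infinite rank, so the sum $\sum_j M_j(\cdot)M_j^*$ is a strong limit; this is harmless since each partial sum is completely positive and the sequence is monotone, so Vitali--type arguments (or just the fact that $\sum_j M_jM_j^* = \id$ converges strongly) close it. None of the individual steps is deep; the content is entirely in the clean dictionary between fidelity, purifications, and Stinespring/Kraus dilations.
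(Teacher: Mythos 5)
Your overall architecture matches the paper's: part (a) via Fuchs--van de Graaf plus Uhlmann with $\HH_A$ as the purifying factor, and part (b) by purifying $\sigma$ on an environment, invoking part (a), and reading off Kraus operators from the resulting unitary. However, two steps do not hold as written. First, a minor arithmetic slip in (a): the inequality $2-2\sqrt{1-\epsilon}\le\epsilon$ that you flag as ``fine'' is in fact false --- squaring shows it is equivalent to $\epsilon^2\le 0$, and indeed $2-2\sqrt{1-\epsilon}=\epsilon+\tfrac{\epsilon^2}{4}+\dots$. You lose this margin by first weakening $F\ge(1-\tfrac{\epsilon}{2})^2$ to $F\ge 1-\epsilon$ and then taking the square root again. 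The repair is what the paper does: Uhlmann produces a purification $\ket{\tilde\eta}$ with overlap $|\langle\tilde\eta|\xi\rangle|=\sqrt{F(\rho_B,\sigma_B)}\ge 1-\tfrac{\epsilon}{2}$ directly, whence $\norm{\ket{\tilde\eta}-\ket\xi}^2\le 2-2(1-\tfrac{\epsilon}{2})=\epsilon$.

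Second, and more substantively, your justification of $\expval{M_i^*M_j}{\xi}=\delta_{i,j}\nu_j$ in (b) is a non sequitur. The vectors $M_j\ket\xi$ are the components of $W(\ket\xi\otimes\ket\chi)=\sum_j (M_j\ket\xi)\otimes\ket{j}$ with respect to an orthonormal basis of $\HH_C$, but components with respect to a generic ONB of one tensor factor are \emph{not} mutually orthogonal: for instance $\ket{v}\otimes\tfrac{1}{\sqrt2}(\ket{1}+\ket{2})$ has two parallel components. Orthogonality would hold only if $\{\ket j\}$ were chosen as the Schmidt basis of $W(\ket\xi\otimes\ket\chi)$ on the $C$ side, which you have not arranged. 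The paper closes exactly this gap with an additional step: it forms the Gram matrix $T_{ij}=\expval{M_i^*M_j}{\xi}$, diagonalizes it by a unitary $V$, and replaces the Kraus operators by $\hat M_j=\sum_i V^*_{j,i}M_i$; this leaves the channel $\Phi$ unchanged while enforcing the stated orthogonality. With that step added and the arithmetic in (a) repaired, your argument coincides with the paper's; your closing use of the right-hand Fuchs--van de Graaf inequality to get $\norm{\Phi(\rho)-\sigma}_1<2\sqrt\epsilon$ is a perfectly acceptable (marginally cleaner) alternative to the paper's direct trace-norm estimate.
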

\begin{proof} \hfill 
    \begin{enumerate}  [label = {(\alph*)}]
        \item 
    By the Fuchs-van de Graaf inequality \eqref{eq.fuchs-vandegraaf}, $F(\sigma_B , \rho_B) \ge (1-\frac 12 \epsilon)^2$. Hence, by Theorem \ref{thm.uhlmann}, there exists a purification $\ket {\tilde \eta} \in \HH_A \otimes \HH_B$ of $\sigma_B$ such that $ |\langle \tilde \eta, \xi \rangle| \ge 1-\frac 12 \epsilon $. Consequently, 
    $$\norm{\ket {\tilde \eta} - \ket \xi}^2 \le 2 - 2 (1 - \frac \epsilon 2) = \epsilon.$$
    Since both $\ket{\tilde \eta}$ and $\ket \eta$ purify $\sigma_B$, there exists a unitary $U$ on the purifying system $\HH_A$ with the property $U\ket {\tilde \eta} = \ket \eta$. Therefore, $\sqrt\epsilon \ge \norm{U\ket {\tilde \eta} - U\ket \xi} = \norm{\ket \eta - U \ket \xi}$.

    \item This item follows from a Stinespring dilation. It goes as follows: there is a purification $\ket\eta$ of $\sigma$ on an environment $\HH_{A'}$. Fix an ONB $\{e_j\}_{j=1}^\infty$ of $\HH_{A'}$. It holds that 
    $$\norm{\Tr_{A'A} (\kb{\eta}) - \Tr_{A'A} (\kb{e_1\otimes \xi})}_1 = \norm{\rho_B - \sigma_B}_1 \le \epsilon.$$
    By the first item {(a)}, there exists a unitary $U \in \mathcal U(\HH_{A'A})$ such that $\norm{\eta - U\ket{e_1\otimes \xi}} < \sqrt \epsilon$, thus
    \begin{align*}
        \norm{ \kb{\eta} - U \kb{e_1} \otimes \kb{\xi} U^* }_1 < 2\sqrt \epsilon.
    \end{align*}
    Hence we can define the quantum operation $\Phi(\rho) = \Tr_{A'} (U \kb{e_1} \otimes \rho U^*)$, satisfying
    \begin{align*}
        \norm{\Phi(\rho) - \Tr_{A'} (\kb{\eta})}_1 = \norm{\Phi(\rho) - \sigma}_1 < 2\sqrt \epsilon, 
    \end{align*}
    by the contraction property of the partial trace. 
    
    It can be easily checked that $\Phi$ is trace preserving, and it can be recast as a strong limit $\Phi(\cdot) = \sum_{j=1}^{\infty} M_j (\cdot) M_j^*$ with Kraus operators $M_j := \langle e_j | U |e_1 \rangle = \Tr_{A'} (U (\ket{e_1}\bra{e_j} \otimes \mathds 1_{AB}))$.
Now let us fix an ONB $\{\ket{f_j}\}$ for $\HH_A$. Define a trace-class positive operator $T \in \mathcal B_1(\HH_A)$ by $
        T \ket{f_i} = \sum_{j=1}^\infty \expval{M_i^*M_j}{\xi} \ket{f_j}. $
    By the Hilbert-Schmidt Theorem, there exists a unitary $V \in \mathcal U(\HH_A)$ such that $\bra {f_i}VTV^*\ket{f_j} = \delta_{i,j} \nu_j$, for a set of positive eigenvalues $\{\nu_j\}_{j=1}^\infty$. That is, 
    \begin{align*}
        \delta_{i,j} \nu_j=&\sum\limits_{p,q=1}^{\infty} V_{i,p} V^*_{j,q} \bra{f_p}T\ket{f_q}
        = \expval{\sum_{p,q=1}^\infty V_{i,p} V_{j,q}^* M_p^* M_q }{\xi}. 
    \end{align*}
    Thus we can redefine the Kraus operators as 
    $\hat M_j := \sum_{i=1}^\infty V^*_{j,i} M_i.$ It can be checked that $\Phi(\cdot) = \sum_{j=1}^\infty \hat M_j (\cdot) \hat M_j^*$ and that the remaining properties also hold. 
    \end{enumerate}
\end{proof}

A straightforward consequence of the above Proposition is: 

\begin{lemma} \label{lem.large-overlap-with-product}
    Let $\rho = \kb{\xi}$ be a pure density matrix on $\HH_A \otimes \HH_B$. If 
    $\norm {\rho_B - \sigma}_1 < \epsilon$
    for a pure density matrix $\sigma = \kb{\varphi}$ on $\HH_B$, then there exists a product state $\ket p \in \HH_A \otimes \HH_B$ such that
    \begin{equation}
        | \braket{\xi}{p}| > 1-\epsilon/2.
    \end{equation}
\end{lemma}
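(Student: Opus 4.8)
The plan is to obtain this as an immediate corollary of Proposition~\ref{prop.krauss}(a). The only point requiring a moment's thought is that Proposition~\ref{prop.krauss}(a) is phrased with both states living on the full bipartite space $\HH_A\otimes\HH_B$ and with the trace-norm hypothesis imposed on the reductions to $\HH_B$, whereas here we are handed a pure state $\sigma$ that already lives on $\HH_B$. So first I would manufacture a purification of $\sigma$ on the $A$-side: fix an arbitrary unit vector $\ket a\in\HH_A$ and consider the pure product vector $\ket{a}\otimes\ket\varphi\in\HH_A\otimes\HH_B$. Its reduction to $\HH_B$ is exactly $\sigma=\dm{\ket\varphi}$, so the assumption $\norm{\rho_B-\sigma}_1<\epsilon$ is precisely the hypothesis $\norm{\rho_B-(\dm{\ket{a\otimes\varphi}})_B}_1<\epsilon$ needed to apply Proposition~\ref{prop.krauss}(a) with $\ket\xi$ as given and $\ket\eta=\ket a\otimes\ket\varphi$.

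Applying that proposition yields a unitary $U\in\caB(\HH_A)$ with $\norm{U\ket\xi-\ket a\otimes\ket\varphi}<\sqrt\epsilon$. I would then set
\begin{equation*}
\ket p := U^*\bigl(\ket a\otimes\ket\varphi\bigr)=\bigl(U^*\ket a\bigr)\otimes\ket\varphi,
\end{equation*}
which is manifestly a product vector because $U$ acts only on the $A$-factor. Since $U$ is unitary, $\norm{\ket\xi-\ket p}=\norm{U\ket\xi-\ket a\otimes\ket\varphi}<\sqrt\epsilon$.

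Finally I would translate the norm bound into an overlap bound: replacing $\ket p$ by a unimodular phase multiple of itself (still a product vector), we may assume $\braket{\xi}{p}\ge 0$, and then $\epsilon>\norm{\ket\xi-\ket p}^2=2-2\,\braket{\xi}{p}$, which gives $\abs{\braket{\xi}{p}}>1-\epsilon/2$, as claimed. There is essentially no genuine obstacle in this argument; the ``hard part,'' such as it is, is purely bookkeeping — keeping the roles of $\HH_A$ and $\HH_B$ straight when invoking Proposition~\ref{prop.krauss}(a), noting that the unitary there lives on the purifying factor $\HH_A$ and hence preserves the product structure across the $A\mid B$ cut, and fixing the global phase at the end.
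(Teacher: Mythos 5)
Your proposal is correct and follows essentially the same route as the paper: both reduce the claim to Proposition~\ref{prop.krauss}(a) applied to $\ket\xi$ and a product purification $\ket a\otimes\ket\varphi$ of $\sigma$, the only (immaterial) difference being which of the two states plays the role of $\rho$ in that proposition, so that you apply $U^*$ to the product vector where the paper applies $U$. The concluding passage from the norm bound $\norm{\ket\xi-\ket p}<\sqrt\epsilon$ to the overlap bound is the same elementary computation the paper leaves implicit.
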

\begin{proof}
    We choose an arbitrary pure state $\ket{\varphi'} \in \HH_A$, and apply Item (a) of Proposition \ref{prop.krauss} to $\kb{{\varphi'}} \otimes \kb{{\varphi}}$ and $\kb{{\xi}}$, obtaining a unitary $U \in \mathcal B(\HH_A)$ such that $\norm{(U\ket{\varphi'}) \otimes \ket \varphi - \ket \xi} < \sqrt \epsilon$. The claim holds for $\ket p= (U\ket{\varphi'}) \otimes \ket \varphi$.
\end{proof}

\section{Sopenko's Lemma} \label{appendix.sopenko-lemma}


\noindent Lemma \ref{lemma.tripartite-system} is a corollary of both Uhlmann's theorem and Fuchs-van de Graaf inequality, and it was proven in \cite[Lemma~D.1]{sopenko_chiral_2023} under slightly different assumptions. In what follows, we adapt the proof to our setting. Firstly, an auxiliary lemma:

\begin{lemma} \label{lem.tripartite-system-aux}
    Let $\HH = \HH_A \otimes \HH_B \otimes \HH_C$ be a tripartite system, such that $\dim(\HH_B) < \infty$. Consider pure states $\rho = \ket\xi \bra{\xi}$ and $\sigma =\ket\eta \bra{\eta}$ on $\HH$, such that
    \begin{enumerate}
        \item \label{assumption.tripatite-aux-1} $\norm{\ket {\xi} - \ket{\eta}} < \epsilon,$ for some $\epsilon>0$,

        \item \label{assumption.tripatite-aux-2} $\rho_{BC} = \sigma_{BC}$,

        \item \label{assumption.tripatite-aux-3} both states factorize as
        \begin{align*}
            \rho_{AC} = \rho_A \otimes \rho_C, \qquad \sigma_{AC} = \sigma_A \otimes \sigma_C.
        \end{align*}

    \end{enumerate}
    Then there exists a unitary $U \in \UU(\HH_{A} \otimes \HH_B)$ such that $U\ket{\xi} = \ket \eta$ and $\norm{U-\id} \le \epsilon$. 
\end{lemma}

\begin{proof}
    By assumption \ref{assumption.tripatite-aux-2} in the statement of the lemma, there exists a unitary $V_A \in \UU(\HH_A)$ such that $\ket{\eta} = V_A \ket{\xi}$. Hence $\text{rank}(\sigma_A) = \text{rank}(\rho_A)$. Since $\sigma_C = \rho_C$, in particular also $\text{rank}(\sigma_C) = \text{rank}(\rho_C)$. By item \ref{assumption.tripatite-aux-3} and purity of the density matrix $\rho$, it holds true that:
    $$\dim (\HH_B) \ge \text{rank}(\rho_A) \text{rank}(\rho_C).$$ 
    We consider a fixed splitting of $\HH_B$ as 
    \begin{align*}
        \HH_B = \mathcal K \oplus \mathcal K^\perp, 
    \end{align*}
    where $\dim(\mathcal K) = \text{rank}(\rho_A) \text{rank} (\rho_C)$. Moreover, we also consider a splitting of $\mathcal K = \mathcal H_{B_L} \otimes \mathcal H_{B_R}$ with $\dim(\HH_{B_L}) = \text{rank}(\rho_A)$ and $\dim(\HH_{B_R}) = \text{rank}(\rho_C)$. We thus have a further decomposition
    \begin{align*}
        \HH &= (\HH_A \otimes \mathcal K \otimes \HH_C) \oplus (\HH_A \otimes \mathcal K^\perp \otimes \HH_C) \\
        &= (\HH_A \otimes \HH_{B_L} \otimes \HH_{B_R} \otimes \HH_C) \oplus (\HH_A \otimes \mathcal K^\perp \otimes \HH_C) \\
        &= (\HH_L \otimes \HH_R) \oplus (\HH_A \otimes \mathcal K^\perp \otimes \HH_C),
    \end{align*}
    where we have defined $\HH_L = \HH_A \otimes \HH_{B_L}$ and $\HH_R = \HH_{B_R} \otimes \HH_C$. With respect to this decomposition, there exists a purification
    \begin{align*}
        \ket{\xi'} = \ket{\xi}_{L} \otimes \ket{\xi}_{R} \oplus 0
    \end{align*}
    of $\rho_{AC} = \rho_{A} \otimes \rho_C$. Since $\ket{\xi}$ is also a purification of $\rho_{AC}$ on $\HH_B$, there is a unitary $W_B \in \UU(\HH_B)$ such that
    \begin{align*}
        \ket{\xi} = W_B(\ket{\xi}_{L} \otimes \ket{\xi}_{R} \oplus 0).
    \end{align*}
    By using that $[V_A,W_B] = 0$, we can compute:
    \begin{align*}
        \norm{\ket{\xi}_{L} - V_A\ket{\xi}_{L}}_{\HH_L} &= \norm{(\ket{\xi}_{L} \otimes \ket{\xi}_{R} \oplus 0) - (V_A\ket{\xi}_{L}\otimes \ket{\xi}_{R}\oplus 0)}\\
        &= \norm{W_B(\ket{\xi}_{L} \otimes \ket{\xi}_{R}\oplus 0) - V_AW_B(\ket{\xi}_{L} \otimes \ket{\xi}_{R} \oplus 0)} \\
        &= \norm{\ket \xi - \ket{\eta}} \le \epsilon.
    \end{align*}
    Consequently, it is possible to choose $U_L \in \UU(\HH_{L})$ that maps $\ket{\xi}_{L}$ to $V_A\ket{\xi}_{L}$ such that $\norm{U_L-\id} \le \epsilon$. We take 
    $$U = W_B(U_L \otimes \id_{\HH_{B_R}} \oplus \id_{\HH_A\otimes \mathcal K^\perp})W_B^* \in \UU(\HH_A \otimes \HH_B)$$ and finish the proof of the lemma. 
\end{proof}

Below is a generalization of Lemma \ref{lem.tripartite-system-aux} above, without the assumption of exact factorization between regions $AC$ or exact coincidence in regions $BC$. 

\begin{lemma} \label{lemma.tripartite-system}
Let $\HH = \HH_A \otimes \HH_B \otimes \HH_C$ be a tripartite system, such that $\dim(\HH_B) < \infty$. Let $\rho = \kb\xi$, $\sigma = \kb\eta$ be pure density matrices on $\HH$, satisfying:
\begin{enumerate}
    \item\label{sopenko-lemma-condition-1} the following proximity bounds:
    \begin{align} \label{eq.proximity-conditions}
        &\norm{\ket \xi - \ket \eta} < {\epsilon}, \qquad \norm{\rho_{BC} - \sigma_{BC}}_1 < \epsilon^5, \qquad \norm{\rho_{AC} - \rho_A \otimes \rho_C}_1 < \frac {\epsilon^{10}} 3,
    \end{align}
    for some $\epsilon < \frac 1{16}$. 
    
    \item\label{sopenko-lemma-condition-2} let $r_{B} := \lfloor\sqrt{\dim(\HH_B)}\rfloor$, and assume
    \begin{align}\label{eq.spectral-concentration-rho-C}
        \sum\limits_{j > r_{B}} \lambda_j^\Gamma \le \frac{\epsilon^{10}}{12},
    \end{align}
    where $\{\lambda_j^\Gamma\}_{j \in \NN}$, are the eigenvalues of $\rho_\Gamma$, $\Gamma=A,C$ (with $\lambda_j^\Gamma \le \lambda_{j-1}^\Gamma$ for all $j$).
\end{enumerate}
Then there exists a unitary $U \in \UU(\HH_{A} \otimes \HH_B)$ such that 
$$\norm{U \ket \xi - \ket \eta} < \frac 12 \epsilon^2$$
and 
$$\norm{U - \mathds 1} < 2\epsilon.$$
\end{lemma}

\begin{proof}
    Consider spectral decompositions
$        \rho_\Gamma = \sum_j \lambda_j^\Gamma \ket{{v_j^\Gamma}}\bra{v_j^\Gamma},$ and let $P_\Gamma = \sum_{j \le r_B} \ket{{v_j^\Gamma}}\bra{v_j^\Gamma}$. Define the states
    \begin{align}
        \hat \rho_\Gamma := \frac{P_\Gamma \rho_\Gamma P_\Gamma}{\Tr(P_\Gamma \rho_\Gamma P_\Gamma)}. 
    \end{align}
    From \eqref{eq.spectral-concentration-rho-C}, it follows that $\norm{\rho_\Gamma - \hat \rho_\Gamma}_1 \le \epsilon^{10}/3$. Thus, from \eqref{eq.proximity-conditions}:
    \begin{align} \label{eq.sopenko-lemma-truncation}
        \norm{\hat \rho_A \otimes \hat \rho_C - \rho_{AC}}_1 \le \epsilon^{10}. 
    \end{align}
    Moreover, $\text{rank}(\hat \rho_\Gamma) \le \sqrt{\dim(\HH_B)}$. We proceed by proving the existence of the unitary $U$.\\ 
    
    Firstly, since $\text{rank}(\hat \rho_A \otimes \hat \rho_C) \le \text{dim}(\HH_B)$, there exists a purification $\ket{\xi'} \in \HH$ of $\hat \rho_A \otimes \hat \rho_C$. Moreover, due to \eqref{eq.sopenko-lemma-truncation} and \eqref{eq.purifications-closeness}, this purification can be chosen such that 
    \begin{align}\label{eq.purification-xi}
        \norm{\ket{\xi'} - \ket \xi} \le \epsilon^5.
    \end{align}
    Let $\rho' = \ket{\xi'}\bra{\xi'}$. It follows that $\norm{\rho_{BC}' - \rho_{BC}}_1 \le 2 \epsilon^5$. Thus,    by  \eqref{eq.proximity-conditions} and a triangle inequality,
    \begin{align*}  \norm{\rho_{BC}' - \sigma_{BC}}_1 \le 3\epsilon^5,
    \end{align*}
  Consequently, by again invoking \eqref{eq.purifications-closeness}, there is a purification $\ket{\eta'} \in \HH$ of $\rho_{BC}'$ such that 
    \begin{align} \label{eq.purification-eta}
        \norm{\ket{\eta'} - \ket\eta} < \sqrt{3\epsilon^5}. 
    \end{align}
    From \eqref{eq.proximity-conditions}, \eqref{eq.purification-xi} and \eqref{eq.purification-eta}, we get that 
    \begin{align}\label{eq.proximity-primed-purifications}
        \norm{\ket {\xi'} - \ket {\eta'}} \le \epsilon +\epsilon^5 + \sqrt{3\epsilon^5} \le 2\epsilon. 
    \end{align}
    We have now two states $\rho'=\ket{\xi'}\bra{\xi'}$ and $\sigma' = \ket{\eta'}\bra{\eta'}$ such that
    \begin{enumerate}
        \item $\norm{\ket{\xi'} - \ket{\eta'} } \le 2\epsilon$;
        \item\label{item.assumption-coincidence} both $\ket{\xi'}$ and $\ket{\eta'}$ are purifications of $\rho'_{BC}$. Hence they are related by a unitary $V_A \in \mathcal B(\HH_A)$. Particularly, $\rho'_{BC} = \sigma'_{BC}$;
        \item there are factorizations
        \begin{align*}
            \rho'_{AC} = \rho'_A \otimes \rho'_C = \hat \rho_A  \otimes \hat \rho_C, \qquad \sigma'_{AC} = \sigma'_A \otimes \sigma'_C. 
        \end{align*}
        The first factorization is by construction of $\ket{\xi'}$. The second follows easily from item \ref{item.assumption-coincidence} above.
    \end{enumerate}
    By Lemma \ref{lem.tripartite-system-aux}, there exists a unitary $U\in \UU(\mathcal H_A \otimes \HH_B)$ such that $U\ket{\xi'} = \ket{\eta'}$ and $\norm{U-\id} < 2\epsilon$. 
    We then obtain
    \begin{align*}
        \norm{U\ket{\xi} - \ket{\eta}} \le \norm{\ket{\xi'} - \ket{\xi}} + \norm{\ket{\eta'} - \ket{\eta}} \le \epsilon^5 + \sqrt{3\epsilon^5} \le \frac 12 \epsilon^2,
    \end{align*}
    finishing the proof.

\end{proof}

\subsection{Proof of Theorem \ref{thm.connecting-states}}

The proof consists on an iterative application of Lemma \ref{lemma.tripartite-system}. 
Without loss, we prove the result for $x=0$. We recall some information on the states $\psi, \psi'$. 
\begin{enumerate}
    \item $\psi$ exhibits exponential decay of mutual correlations (Lemma \ref{thm.mutual-correlation-decay}): there are constants $C<\infty, c>0$ such that for all $\ell>0$,
    \begin{equation}\label{eq.proof-sopenko-lemma-mutual-decal}
        \norm{\psi|_{B_{\ell} \cup B_{2\ell}^c} - \psi|_{B_\ell} \otimes \psi|_{B_{2\ell}^c}} \le Ce^{-c\ell}.
    \end{equation}

    \item $\psi$ and $\psi'$ are exponentially close far from site $0$: 
    \begin{align} \label{eq.proof-sopenko-lemma-exp}
        \norm{\psi|_{B_\ell^c} - \psi'|_{B_\ell^c}} \le Ce^{-c\ell}.
    \end{align}
\end{enumerate}
We fix constants $1<C<\infty$, $c>0$, and $\alpha>0$ such that the above bounds hold, and such that the bound of Proposition \ref{prop.decay-of-schmidt-coeffs} holds for $\psi$.

Let $p<\frac{1}6$ be a positive parameter. We choose
\begin{equation} \label{eq.proof-sopenko-r0}
    r_0 \ge \max\left[1, 1 + \frac{1}{\alpha} \log_2\left( \frac{C}{p^{16}} \right) ,\frac 1c \ln \left( \frac{C}{p^{16}} \right)\right],
\end{equation}
and define the radii $r_n = 2^nr_0$ for $n>0$. We claim that for every $n \in \NN$, there exists a unitary $\hat U_{(n)} \in \AA_{B_{r_{n}}}$ such that  
\begin{align}
    \norm{\psi_n - \psi'} < p^{2^{n}}, 
\end{align}
where $\psi_n = \psi_{n-1} \circ \Ad{ \hat U_{(n)}}$, and $\psi_{-1} = \psi$. Moreover, for $n>0$:
$$\norm{\hat U_{(n)} - \mathds 1} < 2p^{2^{n}}.$$
The proof of this claim is by induction on $n$. \\

\noindent \textbf{Proof of the claim for $n=0$.} 
From definition \eqref{eq.proof-sopenko-r0}, we obtain
\begin{align*}
    \norm{\left( \psi - \psi' \right)|_{B_{r_0}^c}}  &\le Ce^{-cr_0} \le p^{16}.
\end{align*}
Consider a GNS representation for $\psi$ that is factorized as $\HH_{B_{r_0}} \otimes \HH_{B_{r_0}^c}$, whose existence is guaranteed by the discussion in Section \ref{sec: split property}. By using the same notation as in Proposition \ref{prop.krauss}, let $\HH_{B_{r_0}} = \HH_A$, and $\HH_{B_{r_0}^c} = \HH_B$. We then have, by Section \ref{sec.norms-states}:
\begin{align*}
    \norm{ \dm{\psi}|_{B} - \dm{\psi'}|_{B} }_1 < p^{16}.
\end{align*}
Consequently, by item (a) of Proposition \ref{prop.krauss}, there exists a unitary $U_{(0)} \in \mathcal B(\HH_A)$ such that $\norm{U_{(0)}\ket \psi - \ket{\psi'}} < {p^2}/2$. We then have
\begin{align*}
    \norm{\psi_0 - \psi'} < p^2,
\end{align*}
where $\psi_0 = \psi \circ \Ad{\pi_{\psi}^{-1}(U_{(0)})}$. Set $\hat U_{(0)} = \pi_{\psi}^{-1}(U_{(0)})$. Since the GNS representation we chose is factorized, $\hat U_{(0)} \in \AA_{B_{r_0}}$. Notice that inequalities \eqref{eq.proof-sopenko-lemma-mutual-decal} and \eqref{eq.proof-sopenko-lemma-exp} still hold when we exchange $\psi \to \psi_0$, for all $\ell \ge r_0$. \\

\noindent \textbf{Induction step.} 
We prove the claim for $n+1$, assuming it for $n \geq 0$. In particular, we assume
\begin{align} \label{eq.sopenko-proof-inductive-step-1}
    \norm{\psi_n - \psi'} < p^{2^{n}}, 
\end{align}
where $\psi_n = \psi_{n-1} \circ \Ad{ \hat U_{(n)}}$, and $\hat U_{(n)} \in \AA_{B_{r_{n}}}$. Notice that then inequalities \eqref{eq.proof-sopenko-lemma-mutual-decal} and \eqref{eq.proof-sopenko-lemma-exp} still hold when we exchange $\psi \to \psi_{n}$, for all $\ell \ge r_{n}$.

Consider a GNS representation of $\psi_n$ that is factorized as 
$$\HH_{\psi_n} = \HH_{B_{r_{n}}} \otimes \HH_{B_{r_{n+1}}\setminus B_{r_{n}}} \otimes \HH_{B_{r_{n+1}}^c}.$$
We firstly argue that the pure density matrices of $\ket {\psi_n}, \ket{\psi'}$ satisfy the assumptions of Lemma \ref{lemma.tripartite-system}, with
\begin{align*}
    A = B_{r_{n}}, \qquad B = B_{r_{n+1}}\setminus B_{r_{n}}, \qquad C = B_{r_{n+1}}^c,
\end{align*}
and $\epsilon = p^{2^n}$. Equation \eqref{eq.sopenko-proof-inductive-step-1} implies that, with respect to this representation, it holds
\begin{align*}
    \norm{\dm{\psi_n} - \dm{\psi'}}_1 < p^{2^{n}}.
\end{align*}
A consequence of the inequality above is that the representatives $\ket{\psi_n},\ket{\psi'}$ can be chosen such that 
$$\norm{\ket{\psi_n} - \ket{\psi'}} < {p^{2^n}}.$$
Furthermore, given the choice of radii, we obtain from Equation \eqref{eq.proof-sopenko-lemma-exp}:
\begin{align*}
    \norm{(\dm{\psi_n} - \dm{\psi'})|_{BC}}_1 &\le Ce^{-2^{n}cr_0} 
     \le (Ce^{-cr_0})^{2^n} \le p^{2^{n+4}} < \epsilon^5,
\end{align*}
where we have used \eqref{eq.proof-sopenko-r0} and the assumption that $C>1$. Moreover, from \eqref{eq.proof-sopenko-lemma-mutual-decal}, we further obtain
\begin{align*}
    \norm{\dm{\psi_n}|_{AC} - \dm{\psi_n}|_{A} \otimes \dm{\psi_n}|_{C} }_1 \le Ce^{-cr_{n}} \le \epsilon^{16} < \frac 13\epsilon^{10},
\end{align*}
by choosing $\ell = r_n$ and noticing that $r_{n+1} = 2r_n$.
We now prove that assumption \eqref{eq.spectral-concentration-rho-C} is satisfied. By the assumption that the on-site dimensions are larger than 1 (see comment at the beginning of Section \ref{sec: prelim and notation}), we have the estimate:
\begin{align*}
    r_B = \lfloor \sqrt{\dim{\HH_B}} \rfloor \ge 2^{(r_{n+1} - r_{n})} = 2^{r_n}. 
\end{align*}
By Proposition \ref{prop.decay-of-schmidt-coeffs}, we have that
\begin{align*}
    \sum\limits_{j > r_{B}} \lambda_j^\Gamma \le \sum\limits_{j > 2^{r_{n}}} \lambda_j^\Gamma \le C(2^{2^{n}r_0})^{-\alpha} \le \frac 1{12} \epsilon^{10},
\end{align*}
{where the last inequality follows from the choice of $r_0$}. Hence we can apply Lemma \ref{lemma.tripartite-system}, obtaining a unitary $U_{(n+1)} \in \UU(\HH_{B_{r_{n}}} \otimes\HH_{B_{r_{n+1}}\setminus B_{r_{n}}})$ such that 
\begin{align} \label{eq.proof-sopenko-step-1-ending}
    \norm{U_{(n+1)} \ket {\psi_n} - \ket{\psi'}} \le \frac 12 p^{2^{n+1}} ,
\end{align}
and $\norm{U_{(n+1)} - \mathds 1} < 2p^{2^n}$. Define $\hat U_{(n+1)} = \pi_{\psi_n}^{-1}(U_{(n+1)}) \in \AA_{B_{r_{n+1}}}$, and $\psi_{n+1} = \psi_{n} \circ \Ad{\hat U_{(n+1)}}$. From \eqref{eq.proof-sopenko-step-1-ending}, we get
\begin{align*}
    \norm{\psi_{n+1} - \psi'} < p^{2^{n+1}}. 
\end{align*}
This finishes the induction step.

\vspace{.6cm}
 \noindent \textbf{Constructing the unitary $U$.} The norm limit $U = \lim_{n\to \infty} \hat{U}_{(0)} \hat U_{(1)} \dots \hat U_{(n)}$ exists since $\norm{ U_{(n)}-\mathds 1}$ decays fast enough,  and $\psi' = \lim\limits_{n \to \infty} \psi_{n} = \psi \circ \Ad{U}$. For $\ell>0$, let $n_\ell := \max\{ n\ | \ r_{n+1} < \ell \}$. We have that
$$ n_\ell \ge \log_2\left( \frac{\ell}{r_0}\right) - 1.$$
Thus, for $A \in \AA_{B_{\ell}^c}$,
\begin{align*}
    \norm{[A,U]} & \le \lim\limits_{N\to \infty} \sum\limits_{n \ge n_\ell+1}^N 2\norm{A} \norm{\hat U_{(n)}-\mathds 1} \\
    &\le 2\norm{A} \lim\limits_{N\to \infty} \sum\limits_{n \ge n_\ell+1}^N 2p^{2^{n}} \le 2 p^{\ell/r_0} \times \text{const.} \times \norm{A}.
\end{align*}
Hence $U \in \AA$ is exponentially anchored at $x=0$.

\section{Spectral properties of density matrices}

\begin{lemma}\label{lem: two density matrices}
 Let two density matrices $\sigma,
    \omega$ satisfy $\norm{\sigma-\omega}_1 \leq \delta$ and let $|\eta\rangle$ be a normalized eigenvector of $\omega$ with eigenvalue $\lambda >0$.
    Then
     $\norm{P|\eta\rangle-|\eta\rangle} \leq \frac{C\delta}{\lambda^3} $ with $P$ the spectral projector of $\sigma$ on the interval $[\lambda/2,1]$. 
    \end{lemma}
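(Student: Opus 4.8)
The plan is to reduce the statement to the elementary observation that $|\eta\rangle$ is an approximate eigenvector of $\sigma$. Since $\sigma-\omega$ is self-adjoint, its operator norm is bounded by its trace norm, so $\norm{\sigma-\omega}_{\mathrm{op}}\le\norm{\sigma-\omega}_1\le\delta$. Applying this to the (normalized) vector $|\eta\rangle$ and using $\omega|\eta\rangle=\lambda|\eta\rangle$ yields
\[
\norm{\sigma|\eta\rangle-\lambda|\eta\rangle}=\norm{(\sigma-\omega)|\eta\rangle}\le\delta .
\]
Thus $|\eta\rangle$ is within $\delta$ of being a $\lambda$-eigenvector of $\sigma$, and the remaining task is to convert this into closeness to the spectral subspace $\mathrm{Ran}\,P$.

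I would then diagonalize $\sigma=\sum_i s_i\,|v_i\rangle\langle v_i|$ in an orthonormal eigenbasis (possible since a density matrix is trace-class, self-adjoint and positive), with $0\le s_i\le 1$ and $\sum_i s_i=1$, and expand $|\eta\rangle=\sum_i c_i\,|v_i\rangle$ with $\sum_i|c_i|^2=1$. By construction $P=\sum_{i:\,s_i\ge\lambda/2}|v_i\rangle\langle v_i|$, so $\mathds{1}-P$ is the projector onto the span of $\{|v_i\rangle:\ s_i<\lambda/2\}$, and the displayed bound gives
\[
\delta^2\ \ge\ \norm{\sigma|\eta\rangle-\lambda|\eta\rangle}^2\ =\ \sum_i|c_i|^2(s_i-\lambda)^2\ \ge\ \sum_{i:\,s_i<\lambda/2}|c_i|^2(\lambda-s_i)^2\ \ge\ \frac{\lambda^2}{4}\,\norm{(\mathds{1}-P)|\eta\rangle}^2 ,
\]
where the last step uses $\lambda-s_i>\lambda/2$ on that index set. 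Hence $\norm{P|\eta\rangle-|\eta\rangle}=\norm{(\mathds{1}-P)|\eta\rangle}\le 2\delta/\lambda$, which, since $\lambda\le 1$, is even stronger than the asserted $C\delta/\lambda^4$ (any $C\ge 2$ suffices). One could also note that the estimate is vacuous once $C\delta/\lambda^4\ge 2\ge\norm{P|\eta\rangle-|\eta\rangle}$, so no case distinction is really needed.

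For the rank bound, each eigenvalue contributing to $P$ is at least $\lambda/2$, hence
\[
1=\Tr\sigma=\sum_i s_i\ \ge\ \sum_{i:\,s_i\ge\lambda/2}s_i\ \ge\ \frac{\lambda}{2}\,\operatorname{rank}(P) ,
\]
i.e.\ $\operatorname{rank}(P)\le 2/\lambda$.

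I do not anticipate a genuine obstacle here. The only points needing care are: recording that the operator norm is controlled by the trace norm (this is precisely what lets a state-distance hypothesis be fed into a vector estimate), and observing that $|\eta\rangle$ is implicitly a unit vector (otherwise the claimed inequality is not scale-invariant). The exponent $\lambda^{-4}$ in the statement is simply not optimal for this argument — the natural output is $2\delta/\lambda$ — so if a later application only uses the weaker packaging, I would still prove $2\delta/\lambda$ and merely quote it in the stated form.
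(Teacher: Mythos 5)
Your proof is correct, but it takes a genuinely different and more elementary route than the paper. The paper proves the lemma via smooth functional calculus: it takes a bump function $F$ equal to $1$ on $[\lambda,1]$ and vanishing off $[\lambda/2,2]$ with $|F'''|\le C/\lambda^3$, uses a Duhamel/Fourier-transform representation of $F(\sigma)-F(\omega)$ to get the Lipschitz bound $\norm{F(\sigma)-F(\omega)}_1\le (C/\lambda^3)\norm{\sigma-\omega}_1$, and then exploits $F(\omega)|\eta\rangle=|\eta\rangle$ together with the fact that $F(\sigma)$ has range inside $\mathrm{Ran}\,P$ and rank at most $2/\lambda$. You instead pass from the trace norm to the operator norm, observe that $|\eta\rangle$ is a $\delta$-approximate $\lambda$-eigenvector of $\sigma$, and run the standard spectral-gap estimate $\norm{(\mathds 1-P)|\eta\rangle}\le 2\delta/\lambda$ by expanding in an eigenbasis of the compact positive operator $\sigma$; the rank bound via $\Tr\sigma=1$ is the same in both arguments. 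Your route is shorter, avoids the Fourier/Duhamel machinery entirely, and yields the sharper constant $2\delta/\lambda$ in place of $C\delta/\lambda^4$ (indeed the paper's own intermediate bound is only $C\delta/\lambda^3$, so the stated exponent is not tight either way); the paper's method buys a trace-norm comparison of the full operators $F(\sigma)$ and $F(\omega)$, which is more than the lemma needs but is a more robust template when one wants to compare spectral data of two nearby operators beyond their action on a single vector. Since the lemma is only invoked through the stated inequality, your stronger bound can simply be quoted in the weaker packaged form, as you note.
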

    \begin{proof}
        Let $F:\mathbb{R}\to \mathbb{R} $ be a smooth function satisfying
        $$
         |F'''(x) | \leq C/\lambda^3, \qquad 
        F(x)=\begin{cases} 1 & \lambda \leq x \leq 1 \\  0   & x<\lambda/2 \quad\text{or} \quad  x>2  \end{cases} 
        $$
        Then, by the Duhamel principle and functional calculus,
        $$
        F(\sigma)-F(\omega) = \frac{i}{\sqrt{2\pi}} \int_{\mathbb{R}} \text{d}t \ \hat F(t) \int_0^t \text{d} s\   e^{i(t-s) \sigma}  (\omega - \sigma) e^{is \omega}
        $$
        with $\hat F$ the Fourier transform of $F$.   Therefore
        \begin{equation}
        \norm{F(\sigma)-F(\omega)}_1 \leq C  \norm{\sigma-\omega}_1 \int \text{d}t\ |t| |\hat   F(t)|  \leq   (C/\lambda^3) \norm{\sigma-\omega}_1.        
        \end{equation}
and hence
    \begin{equation}  \label{eq: bound on f functions}
       \norm{  |\eta\rangle -  F(\sigma) |\eta\rangle} =  \norm{ F(\omega) |\eta\rangle -  F(\sigma) |\eta\rangle}  \leq    \norm{ F(\omega) - F(\sigma)}_1     \leq C \frac{\delta}{\lambda^3}.
     \end{equation}
Moreover, since $\mathrm{Ran}F(\sigma) \subset \mathrm{Ran} P$, we estimate  
$$
\norm{|\eta\rangle-P|\eta\rangle} =\min_{|v\rangle \in \mathrm{Ran}P} \norm{|\eta\rangle-|v\rangle} \leq   \norm{  |\eta\rangle -  F(\sigma) |\eta\rangle}.
$$  
      Together with \eqref{eq: bound on f functions}, this yields the claim. 
    \end{proof}

\section{Kapustin-Sopenko-Yang's Lemma}

Families of LGAs can be inverted and composed. 
However, a composition of infinitely many LGAs is, in general, not well-defined as an automorphism, let alone an LGA. The following lemma states that an infinite composition is well-defined if the LGAs are anchored in a sufficiently sparse set:

\begin{lemma} \label{lem.infiniteproduct}
        For each $x \in \ZZ$, let $\{\alpha_{G_x,s}\}_{s\in [0,1]}$ be a time-dependent family of automorphisms on $\AA$, such that each generator $G_{x}(s) \in \AA$ is exponentially anchored at site $x \in \ZZ$, and the constants $C<\infty, c>0$ in the definition of the exponential anchoring are uniform with respect to $x \in \ZZ$. Then there is $\ell_0\in \NN$ such that, if $\ell\geq \ell_0$, the limit 
		\begin{equation}\label{eq: product of m lgas}
			\lim_{M\to \infty} \alpha_{G_0,s}  \circ \alpha_{G_\ell,s} \circ \dots \circ\alpha_{G_{M\ell},s}  (A), \qquad A \in \AA
        \end{equation}
      exists and can be written as $\alpha_{G,s}(A)$ where $\{\alpha_{G,s}\}_{s \in [0,1]}$ is a dynamics generated by an exponentially quasi-local interaction $\Phi$. 
\end{lemma}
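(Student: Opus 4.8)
The plan is to reduce the claim to a single statement about the generator by exploiting the fact that the unitaries $\alpha_{G_{x\ell},s}$ are anchored at sparsely separated sites. First I would write, for each finite $M$, the composition in \eqref{eq: product of m lgas} as a single dynamics $\alpha_{G^{(M)},s}$ generated by the time-dependent interaction obtained by ``summing'' the individual generators: concretely, using that $\alpha_{G_0,s}\circ\cdots\circ\alpha_{G_{M\ell},s}$ solves a Heisenberg equation whose generator is $\sum_{k=0}^{M}\alpha_{G_0,s}\circ\cdots\circ\alpha_{G_{(k-1)\ell},s}\big(\delta_{G_{k\ell}(s)}\big)(\,\cdot\,)$. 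Each term here is a derivation generated by a conjugated version of $G_{k\ell}(s)$; the key point is that conjugating an operator anchored at $k\ell$ by LGAs anchored at the \emph{other} sites $0,\ell,\dots,(k-1)\ell$ (all at distance $\geq\ell$ from $k\ell$, and only finitely many of them nearby) only mildly degrades its anchoring, by a Lieb--Robinson / commutator-expansion estimate, provided $\ell$ is large enough. This is where the threshold $\ell_0$ enters: one needs the geometric series controlling the ``spreading'' of the conjugation to converge, which forces $\ell$ large relative to the Lieb--Robinson velocity and the anchoring rate $c$.

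The second step is to package these conjugated generators into an honest exponentially quasi-local time-dependent interaction $\Phi^{(M)}(s)$, using the standard correspondence (invoked already in Section~\ref{sec.localization}) between operators anchored at a point and interactions supported on intervals around that point: an operator $B$ anchored at $y$ with rate $c$ yields interaction terms $\Phi_I$ with $\|\Phi_I\|\le Ce^{-c'|I|}$ for intervals $I$ centered near $y$. Summing over the sparse set $\{0,\ell,2\ell,\dots\}$ keeps the bound \eqref{eq: exponential decay interaction} uniform because at most a bounded number of anchor points fall within any interval of a given length. One then checks that as $M\to\infty$ the interactions $\Phi^{(M)}(s)$ converge term-by-term (for each fixed interval $I$ only finitely many anchor points contribute) to a limiting exponentially quasi-local interaction $\Phi(s)$, with $s\mapsto\Phi_I(s)$ continuous.

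The final step is to upgrade this term-wise convergence of interactions to convergence of the generated dynamics on all of $\AA$. For this I would use the quantitative stability of LGAs under perturbation of the interaction: if two exponentially quasi-local time-dependent interactions agree up to a tail that is small (in the appropriate interaction norm), the generated automorphisms are close in norm on local observables, with an error controlled by the tail. Since $\Phi^{(M)}(s)-\Phi(s)$ consists only of terms coming from anchor points $>M\ell$, its interaction norm is $O(e^{-c M\ell})$, so $\alpha_{G^{(M)},s}(A)\to\alpha_{G,s}(A)$ in norm for $A\in\AA^{\mathrm{loc}}$, hence for all $A\in\AA$ by density and uniform boundedness. This identifies the limit in \eqref{eq: product of m lgas} with the dynamics $\alpha_{G,s}$ generated by $\Phi$, as claimed.

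\textbf{Main obstacle.} The crux is the first step: showing that conjugating a point-anchored generator by the (infinitely many, but sparsely placed) other LGAs preserves exponential anchoring with \emph{uniform} constants, and isolating the largeness condition on $\ell$ that makes the resulting sum of conjugated generators a convergent, genuinely quasi-local interaction. This is essentially the content of the Kapustin--Sopenko--Yang argument and requires careful bookkeeping of how commutator tails compound under composition — everything else is a routine translation between the ``anchored operator'' and ``interaction'' pictures plus a standard LGA-stability estimate.
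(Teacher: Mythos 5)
The paper does not actually prove this lemma: it quotes it from \cite{sopenkoindex} (where it is proven, and used implicitly in \cite{kapustin-classification2021}), and only supplements the citation with the swap counterexample showing that the threshold $\ell_0$ cannot be dispensed with. So there is no in-paper proof to compare against line by line. That said, your outline is the right strategy and is consistent with the cited argument: rewrite the finite composition as a single dynamics whose generator is the sum of the $G_{k\ell}(s)$ conjugated by the other factors (your Heisenberg-equation formula is correct up to the usual Schr\"odinger-versus-Heisenberg bookkeeping of whether one conjugates by the factors to the left or by the inverses of those to the right), show that each conjugated generator remains exponentially anchored at $k\ell$ with uniform constants, repackage anchored operators as interaction terms, and pass to the limit $M\to\infty$ using that the tail $\Phi^{(M)}-\Phi$ is exponentially small. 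Your outline also passes the sanity check provided by the paper's swap example: for $\ell=1$ the conjugations displace the anchor point by an amount that accumulates linearly, so the ``geometric series'' you invoke genuinely diverges, exactly as it must.

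The genuine gap is the one you flag yourself: the claim that conjugating a generator anchored at $k\ell$ by the infinite sparse family of LGAs anchored at $0,\ell,\dots,(k-1)\ell$ preserves exponential anchoring with constants uniform in $k$, and the identification of the quantitative condition on $\ell_0$ that makes the compounded degradation of the decay rate and prefactor summable. This is not a routine Lieb--Robinson application that can be waved through: each individual conjugation can both worsen the decay rate and shift the effective anchor point, and one must show that these degradations are controlled by a convergent product/sum precisely because consecutive anchor points are at distance $\ell$ and the LGA at distance $j\ell$ perturbs the generator only by $O(e^{-cj\ell})$. Since that estimate is the entire content of the lemma (everything else in your sketch is, as you say, routine), what you have written is an accurate roadmap to the proof in \cite{sopenkoindex} rather than a proof. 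To complete it you would need to carry out the induction on $k$ with explicit constants, verifying that the anchoring constants of the conjugated generators converge as $k\to\infty$ once $\ell\geq\ell_0$.
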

Lemma \ref{lem.infiniteproduct} is proven in \cite{sopenkoindex} and used (implicitly) in \cite{kapustin-classification2021}. See also \cite[Lemma 4.3]{de_o_carvalho_classification_2025} for a review. 


\bibliographystyle{myplainurl.bst}
\bibliography{main}

\end{document}